\renewenvironment{proof}{
\noindent{\it
Proof.}}{\hfill$\Box$\vspace{0.1cm}}
\newtheorem{lemma}{Lemma}
\newtheorem{proposition}{Proposition}
\newtheorem{theorem}{Theorem}
\newtheorem{corollary}{Corollary}
\newtheorem{claim}{Claim}
\theoremstyle{definition}
\newcommand{\commentout}[1]{}
\newcommand{\CW}{\ensuremath{\mathcal{C}{\mathcal W}{\mathcal
      F}{\mathcal R}}\xspace}
\newcommand{\cR}{\ensuremath{\mathcal R}\xspace}
\newcommand{\cC}{\ensuremath{\mathcal C}\xspace}
\newcommand{\CWW}{\ensuremath{\mathcal{C}{\mathcal W}{\mathcal
      W}}\xspace}
\newcommand{\CWRC}{\ensuremath{\mathcal{CWRC}}\xspace}
\newcommand{\CWFRW}{\ensuremath{\mathcal{C}{\mathcal {WFR}}{\mathcal
      W}}\xspace}
\newcommand{\CWFR}{\ensuremath{\mathcal{C}{\mathcal {WFR}}}\xspace}
\begin{document}

\thispagestyle{empty}

\centerline{\Large\bf Cop and robber games when the robber can hide and ride}

\vspace{6mm}

\centerline{ {\sc J\'er\'emie Chalopin$^{\small 1}$,} {\sc Victor Chepoi$^{\small 1}$,}
{\sc Nicolas Nisse$^{\small 2}$,} and {\sc Yann Vax\`es$^{\small 1}$}}

\vspace{3mm}
\centerline{$^{1}$ Laboratoire d'Informatique Fondamentale de Marseille,}
\centerline{Universit\'e d'Aix-Marseille and CNRS,}
\centerline{CMI and Facult\'e des Sciences de Luminy,}
\centerline{F-13288 Marseille Cedex 9, France}

\medskip
\centerline{$^2$ MASCOTTE, INRIA, I3S, CNRS,}
\centerline{UNS, Sophia Antipolis, France}

\vspace{15mm} {\small{ \noindent {\bf Abstract.}  In the classical cop
    and robber game, two players, the cop $\mathcal C$ and the robber
    $\mathcal R$, move alternatively along edges of a finite graph
    $G=(V,E)$. The cop captures the robber if both players are on the
    same vertex at the same moment of time. A graph $G$ is called {\it
      cop win} if the cop always captures the robber after a finite
    number of steps.  Nowakowski, Winkler (1983) and Quilliot (1983)
    characterized the cop-win graphs as graphs admitting a dismantling
    scheme.  In this paper, we characterize in a similar way the class $\CW(s,s')$ of
    cop-win graphs in the game in which the cop and the robber move at
    different speeds $s'$ and $s$, $s'\le s$. We also establish some connections
    between cop-win graphs for this game with $s'<s$ and
    Gromov's hyperbolicity. In the particular case $s'=1$ and $s=2$, we
    prove that the class of cop-win graphs is exactly the well-known
    class of dually chordal graphs. We show that all classes $\CW(s,1),$ $s\ge 3$,
    coincide  and we provide a structural
    characterization of these graphs. We also investigate several
    dismantling schemes necessary or sufficient for the cop-win graphs
    in the game in which the robber is visible only every $k$ moves
    for a fixed integer $k>1$. We characterize the graphs which are
    cop-win for any value of $k$. Finally, we consider the game where
    the cop wins if he is at distance at most $1$ from the robber and
    we characterize via a specific dismantling scheme the bipartite graphs
    where a single cop wins in
    this game.}

\medskip\noindent{\bf Keywords:} Cop and robber games, cop-win graphs,
dismantling orderings, $\delta$-hyperbolicity.}


\section{Introduction}

\subsection{The cop and robber game(s)}
The cop and robber game originated in the 1980's with the work of
Nowakowski, Winkler \cite{NowWin}, Quilliot \cite{Qui83}, and Aigner,
Fromme \cite{AigFro}, and since then has been intensively investigated
by numerous authors and under different names (e.g., hunter and rabbit
game \cite{IsKaKh}). Cop and robber is a pursuit-evasion game played
on finite undirected graphs.  Player cop $\mathcal C$ has one or
several cops who attempt to capture the robber $\mathcal R$. At the
beginning of the game, $\mathcal C$ occupies vertices for the initial
position of his cops, then $\mathcal R$ occupies another vertex.
Thereafter, the two sides move alternatively, starting with $\mathcal
C,$ where a move is to slide along an edge or to stay at the same
vertex, i.e. pass. Both players have full knowledge of the current
positions of their adversaries. The objective of $\mathcal C$ is to
capture $\mathcal R$, i.e., to be at some moment of time, or {\it
  step}, at the same vertex as the robber.  The objective of $\mathcal
R$ is to continue evading the cop.  A {\it cop-win graph}
\cite{AigFro,NowWin,Qui83} is a graph in which a \emph{single cop}
captures the robber after a finite number of moves for all possible
initial positions of $\mathcal C$ and $\mathcal R.$ Denote by
${\mathcal C}{\mathcal W}$ the set of all cop-win graphs.  The
cop-number of a graph $G$, introduced by Aigner and Fromme
\cite{AigFro}, is the minimum number of cops necessary to capture the
robber in $G$.
Different combinatorial (lower and upper) bounds on the cop number for different classes of graphs
were given in~\cite{AigFro,And86,BerInt,Chi,Fra87,Qui85,Schr,Theis}
(see also the survey paper \cite{Alspach} and the annotated bibliography \cite{FoTi}).


In this paper, we investigate the cop-win graphs for
three basic variants of the classical cop and robber game (for
continuous analogous of these games, see \cite{FoTi}). In the {\it cop and fast robber game},
introduced by Fomin, Golovach, and Kratochvil \cite{FomGolKra} and
further investigated in \cite{NisseSuchan} (see also
\cite{FoGoKrNiSu}), the cop is moving at
unit speed while the speed of the robber is an integer $s\ge 1$ or is
unbounded ($s\in {\mathbb N}\cup\{ \infty\}),$ i.e., at his turn,
$\mathcal R$ moves along a path of length at most $s$ which does not
contain vertices occupied by $\mathcal C$. Let ${\mathcal
  C}{\mathcal W}{\mathcal F}{\mathcal R}(s)$ denote the class of all graphs
in which a single cop having speed 1 captures a robber having speed $s.$ Obviously,
${\mathcal C}{\mathcal W}{\mathcal F}{\mathcal R}(1)={\mathcal C}{\mathcal W}.$
In a more general version,   we will suppose that
$\mathcal R$ moves with speed $s$ and $\mathcal C$ moves with speed
$s'\le s$ (if $s'>s$, then the cop can always capture the robber by
strictly decreasing at each move his distance to the robber). We will
denote the class of cop-win graphs for this version of the game by
${\mathcal C}{\mathcal W}{\mathcal F}{\mathcal R}(s,s')$. A {\it
  witness version} of the cop and robber game was recently introduced
by Clarke \cite{Clarke}. In this game, the robber has unit speed and
moves by having perfect information about cop positions.  On the
other hand, the cop no longer has full information about robber's
position but receives it only occasionally, say every $k$ units of
time, in which case, we say that $\mathcal R$ is {\it visible} to
$\mathcal C$, otherwise, $\mathcal R$ is {\it invisible} (this kind of
constraint occurs, for instance, in the ``Scotland Yard'' game
\cite{ScotlandYard}). Following \cite{Clarke}, we call a graph $G$
$k$-{\it winnable} if a single cop can guarantee a win with such
witness information and denote by ${\mathcal C}{\mathcal W}{\mathcal
  W}(k)$ the class of all $k$-winnable graphs. Notice that
${\mathcal C}{\mathcal W}{\mathcal F}{\mathcal R}(s)\subseteq {\mathcal C}{\mathcal W}{\mathcal W}(s)$ because the first
game can be viewed as a particular version of the second game in which  $\mathcal C$ moves
only at the turns when he receives  the information about $\mathcal R$. Finally, the game of {\it distance $k$ cop and robber}
introduced by Bonato and Chiniforooshan \cite{BonChi} is played in the same way as classical cop and robber,
except that the cop wins if a cop is within distance at most $k$ from the robber (following the name of an analogous
game in continuous spaces \cite{FoTi},  we will refer to this game as {\it cop and robber with radius of capture $k$}).
We denote by ${\mathcal C}{\mathcal W}{\mathcal R}{\mathcal C}(k)$ the set of all cop-win graphs in this game.

\subsection{Cop-win graphs}

Cop-win graphs (in ${\mathcal C}{\mathcal W}$) have been characterized by Nowakowski and Winkler
\cite{NowWin}, and Quillot \cite{Qui85} (see also \cite{AigFro}) as
dismantlable graphs (see Section \ref{sec:prelim} for formal
definitions).  Let $G=(V,E)$ be a graph and $u,v$ two vertices of $G$
such that any neighbor of $v$ (including $v$ itself) is also a
neighbor of $u$.  Then there is a retraction of $G$ to $G\setminus\{
v\}$ taking $v$ to $u$. Following \cite{HeNe}, we call this retraction
a {\it fold} and we say that $v$ is {\it dominated} by $u.$ A graph
$G$ is {\it dismantlable} if it can be reduced, by a sequence of
folds, to a single vertex.  In other words, an $n$-vertex graph
$G$ is dismantlable if its vertices can be ordered
$v_1,\ldots,v_n$ so that for each vertex $v_i, 1\le i<n,$ there exists
another vertex $v_j$ with $j>i,$ such that $N_1(v_i)\cap X_i\subseteq
N_1(v_j),$ where $X_i:=\{ v_i,v_{i+1},\ldots,v_n\}$ and
$N_1(v)$ denotes the closed neighborhood of $v.$ For a simple proof
that dismantlable graphs are the cop-win graphs, see the book
\cite{HeNe}.  An alternative (more algorithmic) proof of this result
is given in \cite{IsKaKh}. Dismantlable graphs include bridged graphs
(graphs in which all isometric cycles have length 3) and Helly graphs
(absolute retracts) \cite{BaCh_survey,HeNe} which occur in several
other contexts in discrete mathematics. Except the cop and robber
game, dismantlable graphs are used to model physical processes like
phase transition \cite{BriWin}, while bridged graphs occur as
1-skeletons of systolic complexes in the intrinsic geometry of
simplicial complexes \cite{Ch_CAT,Hag,JanSwi}. Dismantlable graphs are
closed under retracts and direct products, i.e., they constitute a
variety \cite{NowWin}.

\subsection{Our results}

In this paper, we characterize the graphs of the class $\CW(s,s')$ for
all speeds $s,s'$ in the same vein as cop-win graphs, by using a
specific dismantling order. Our characterization allows to decide in
polynomial time if a graph $G$ belongs to any of considered classes
$\CW(s,s')$.  In the particular case $s' = 1$, we show that $\CW(2)$
is exactly the well-known class of dually chordal graphs.  Then we
show that the classes $\CW(s)$ coincide  for all $s \geq 3$ and that
the graphs $G$ of these classes have the following structure: the
block-decomposition of $G$ can be rooted in such a way that any block
has a dominating vertex and that for each non-root block, this
dominating vertex can be chosen to be the articulation point
separating the block from the root. We also establish some connections
between the graphs of $\CW(s,s')$ with $s'<s$ and  Gromov's hyperbolicity.
More precisely, we prove that any $\delta$-hyperbolic
graph belongs to the class $\CW(2r,r+2\delta)$ for any $r>0$, and that, for any
$s\geq 2s'$, the graphs in $\CW(s,s')$ are $(s-1)$-hyperbolic. We also establish
that Helly graphs and bridged graphs belonging to $\CW(s,s')$ are $s^2$-hyperbolic and
we conjecture that, in fact all graphs of $\CW(s,s'),$ where $s'<s,$ are
$\delta$-hyperbolic, where $\delta$ depends only of $s.$

In the second part of our paper, we
characterize the graphs that are $s$-winnable for all $s$ (i.e.,
graphs in $\cap_{s \geq 1}\CWW(s)$) using a similar decomposition
as for the graphs from the classes $\CW(s),$ $s\ge 3.$  On
the other hand, we show that for each $s$, $\CWW(s)\setminus\CWW(s+1)$
is non-empty , contrary to the classes $\CW(s)$. We show that all
graphs of $\CWW(2)$, i.e., the $2$-winnable graphs, have a special
dismantling order (called bidismantling), which however does not
ensure that a graph belongs to $\CWW(2).$ We present a stronger
version of bidismantling and show that it is sufficient for ensuring
that a graph is 2-winnable.  We extend bidismantling to any $k\ge 3$
and prove that for all odd $k,$ bidismantling is sufficient to ensure
that $G\in \CW(k).$ Finally, we characterize the bipartite members of
${\mathcal C}{\mathcal W}{\mathcal R}{\mathcal C}(1)$ via an
appropriate dismantling scheme. We also formulate several open
questions.


\subsection{Preliminaries} \label{sec:prelim}
For a graph $G=(V,E)$ and a subset $X$ of its vertices, we denote by
$G(X)$ the subgraph of $G$ induced by $X.$ We will write $G\setminus
\{ x\}$ and $G\setminus \{ x,y\}$ instead of $G(V\setminus\{ x\})$ and
$G(V\setminus \{ x,y\}).$ The {\it distance} $d(u,v):=d_G(u,v)$
between two vertices $u$ and $v$ of a graph $G$ is the length (number
of edges) of a shortest $(u,v)$-path.  An induced subgraph $H$ of $G$
is {\it isometric} if the distance between any pair of vertices in $H$
is the same as that in $G.$ The {\it ball} (or disk) $N_r(x)$ of
center $x$ and radius $r\ge 0$ consists of all vertices of $G$ at
distance at most $r$ from $x.$ In particular, the unit ball $N_1(x)$
comprises $x$ and the neighborhood $N(x).$ The {\it punctured ball}
$N_r(x,G\setminus \{ y\})$ of center $x,$ radius $r,$ and puncture $y$
is the set of all vertices of $G$ which can be connected to $x$ by a
path of length at most $r$ avoiding the vertex $y,$ i.e., this is the
ball of radius $r$ centered at $x$ in the graph $G\setminus\{ y\}.$
A {\it retraction} $\varphi$ of a graph $H=(W,F)$ is
an idempotent nonexpansive mapping of $H$ into itself, that is,
$\varphi^2=\varphi:W\rightarrow W$ with $d(\varphi (x),\varphi (y))\le
d(x,y)$ for all $x,y\in W.$ The subgraph of $H$ induced by the image
of $H$ under $\varphi$ is referred to as a {\it retract} of $H.$

A \emph{strategy} for the cop is a function $\sigma$ which takes as
an input the  first $i$ moves of both players and outputs the 
$(i+1)$th move $c_{i+1}$ of the cop. A strategy  for the
robber is defined in a similar way. A cop's strategy $\sigma$ is
\emph{winning} if for any sequence of moves of the robber, the cop,
following $\sigma$, captures the robber after a finite sequence of
moves.  Note that if the cop has a winning strategy $\sigma$ in a
graph $G$, then there exists a winning strategy $\sigma'$ for the
cop that only depends  of the last positions of the two players (such a
strategy is called \emph{positional}).
This is because cop and robber games are parity games (by considering
the directed graph of configurations) and parity games always admit
positional strategies for the winning player \cite{Ku}.
A strategy for the cop is called \emph{parsimonious} if at his turn,
the cop captures the robber (in one move) whenever he can. For
example, in the cop and fast robber game, at his move, the cop
following a parsimonious strategy always captures a robber located at
distance at most $s'$ from his current position. It is easy to see
that in the games investigated in this paper, if the cop has a
(positional) winning strategy, then he also has a parsimonious
(positional) winning strategy.

\section{Cop-win graphs for game with fast robber: class $\CW(s,s')$}\label{sec-speed}

In this section, first we characterize the graphs of $\CW(s,s')$ via a specific
dismantling scheme, allowing to recognize them in polynomial time. Then we
show that any $\delta$-hyperbolic graph belongs to the class $\CW(2r,r+2\delta)$
for any $r\ge 1.$ We conjecture that the converse is true, i.e., any graph from
$\CW(s,s')$ with $s'<s$ is $\delta$-hyperbolic for some value of $\delta$ depending
only of $s,$ and we confirm this conjecture in several particular cases.

\subsection{Graphs of $\CW(s,s')$} For technical convenience, we will
consider a slightly more general version of the game:
given a subset of vertices $X$ of a graph $G=(V,E),$ the $X$-{\it
  restricted game} with cop and robber having speeds $s'$ and $s,$
respectively, is a variant in which $\mathcal C$ and $\mathcal R$ can
pass through any vertex of $G$ but can stand only at vertices of $X$
(i.e., the beginning and the end of each move are in $X$). A subset of vertices
$X$ of a graph $G=(V,E)$
is $(s,s')$-{\it winnable} if the cop captures the robber in the $X$-restricted
game. In the following, given a subset $X$ of admissible positions, we say
that a sequence of vertices $S_r = (a_1,\ldots, a_p, \ldots)$ of a
graph $G=(V,E)$ is $X$-{\it valid} for a robber with speed $s$
(respectively, for a cop with speed $s'$) if, for any $k$, we have $a_k
\in X$ and $d(a_{k-1},a_k)\le s$ (respectively, $d(a_{k-1},a_k)\leq
s'$).  We will say that a subset of vertices $X$ of a graph $G=(V,E)$
is $(s,s')$-{\it dismantlable} if the vertices of $X$ can be ordered
$v_1,\ldots,v_m$ in such a way that for each vertex $v_i, 1\le i<m,$
there exists another vertex $v_j$ with $j>i,$ such that
$N_{s}(v_i,G\setminus\{ v_j\})\cap X_i\subseteq N_{s'}(v_j),$ where
$X_i:=\{ v_i,v_{i+1},\ldots,v_m\}$ and $X_m=\{ v_m\}.$ A graph
$G=(V,E)$ is $(s,s')$-{\it dismantlable} if its vertex-set $V$ is
$(s,s')$-dismantlable.

\begin{theorem} \label{copwin} For any $s,s'\in {\mathbb N}\cup \{ \infty\},$ $s'\leq s$,
a graph $G=(V,E)$ belongs to the class  $\CW(s,s')$
if and only if $G$ is $(s,s')$-dismantlable.
\end{theorem}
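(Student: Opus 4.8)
The plan is to prove both implications by induction on $|V|$, working throughout with the more general $X$-restricted game and establishing the equivalent statement that a set $X$ is $(s,s')$-winnable if and only if it is $(s,s')$-dismantlable. The base case $|X|=1$ is trivial, and the inductive step in each direction consists of removing the first vertex $v_1$ of a dismantling order, say dominated by $u:=v_j$, and relating the game on $X$ to the game on $X':=X\setminus\{v_1\}$. The single observation that drives everything is a reading of the dismantling condition as a statement about the game: since $v_1\in N_s(v_1,G\setminus\{u\})\cap X$, and since for any $b$ the robber can reach from $v_1$ along a path avoiding $u$ one has $b\in N_s(v_1,G\setminus\{u\})\cap X\subseteq N_{s'}(u)$, the condition $N_s(v_1,G\setminus\{u\})\cap X\subseteq N_{s'}(u)$ says exactly that once the cop stands on $u$ and the robber on $v_1$, the robber is captured within one further round (it must move, avoiding $u$, into $N_{s'}(u)$, and the cop finishes). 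I would record this as the basic lemma, together with its immediate consequence $d(u,v_1)\le s'$.

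For the direction ``dismantlable $\Rightarrow$ winnable'' I would have the cop play, in the game on $X$, the inductive winning strategy $\tau$ for the game on $X'$ against the \emph{image} $\varphi(\mathcal R)$ of the robber under the fold $\varphi\colon X\to X'$ defined by $\varphi(v_1)=u$ and $\varphi=\mathrm{id}$ otherwise, remaining parsimonious (capturing whenever possible). As long as the robber avoids $v_1$ the image coincides with the robber and every robber move is a legal $X'$-move (passing through $v_1$ is allowed, since only standing is forbidden), so $\tau$ makes genuine progress. The only danger is a move $\mathcal R\colon a\to b$ involving $v_1$, where the image move $\varphi(a)\to\varphi(b)$ may fail to be legal in $X'$. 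The clean half of the analysis is the \emph{exit} case $a=v_1$: if the image move $u\to b$ is illegal, then the robber's path from $v_1$ to $b$ avoids $u$ (otherwise a suffix of it witnesses legality), hence $b\in N_{s'}(u)$; the forced shortest $u$--$b$ path then meets the cop's position $\tilde c$, and the geodesic identity $d(u,b)=d(u,\tilde c)+d(\tilde c,b)$ puts $b$ within distance $s'$ of $\tilde c$, so the cop captures on its next move. The endgame (cop on $u$, robber on $v_1$) is handled by the basic lemma.

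The main obstacle is the \emph{entry} case $a\to v_1$, where the image ``teleports'' to $u$ and the natural potential (the number of rounds $\tau$ still needs to win from the current configuration) may jump upward, so that a per-round monotonicity argument fails; indeed one can build dismantlable configurations in which the single image move $a\to u$ is neither legal nor immediately winning, so no legality/capture dichotomy holds at entry. I expect this to be the crux and would resolve it by a \emph{confinement} argument instead: after any entry the robber sits on $v_1$ with the image frozen on $u$, and every round in which it stays is a legal ``pass'' for the image that strictly decreases the potential (bounded above by $\max_c g_\tau(c,u)$), driving the cop onto $u$ and into the endgame, while any subsequent exit falls under the clean exit case and yields capture. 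Since each entry re-enters this trapped regime at bounded potential, the remaining technical point is to package these observations into a genuine termination bound; I would do this with a two-part potential separating ``robber on $v_1$'' rounds from the rest, or by appealing to finiteness and determinacy to rule out infinite evasion.

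For the converse ``winnable $\Rightarrow$ dismantlable'' I would first produce a dominated vertex from a shortest winning play: the penultimate configuration, cop on $u$ and robber on $v$ with capture forced next round, means precisely $N_s(v,G\setminus\{u\})\cap X\subseteq N_{s'}(u)$, so $v$ is dominated by $u$. It then remains to show that $X\setminus\{v\}$ is again winnable, after which the induction produces the order. This removal step is the dual of the difficulty above: because $\varphi$ is not distance-nonexpansive (unlike the classical $s=s'=1$ case, where domination of closed neighborhoods makes $X\setminus\{v\}$ a genuine metric retract), I cannot simply invoke ``retracts of cop-win graphs are cop-win.'' Instead I would argue that the cop has a winning strategy on $X$ that never needs to stand on the dominated vertex $v$, reroute any prescribed visit to $v$ through $u$, and use the basic lemma to see that $v$ is a redundant standing position; restricting this strategy then yields a win on $X\setminus\{v\}$. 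Verifying this redundancy carefully, once more through the punctured-ball condition, is where the bulk of the work for this direction lies.
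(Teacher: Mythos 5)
Your overall architecture is the paper's: for sufficiency, fold the robber onto the dominating vertex $u=v_j$ and play the inductive strategy against the image; for necessity, extract the dominated pair $(v,u)$ from a maximal evasion and show $X\setminus\{v\}$ stays winnable by rerouting every prescribed visit to $v$ through $u$ (the paper implements exactly your ``redundant standing position'' idea with a one-bit-memory strategy, legal because $N_{s'}(v)\cap X\subseteq N_{s'}(u)$ follows from the punctured-ball condition). However, there is a genuine gap precisely where you flag the crux. Your confinement argument for the entry case rests on the claim that ``any subsequent exit falls under the clean exit case and yields capture.'' That is not what your own exit analysis shows: a legal exit $u\to b$ simply continues the game. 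So after an illegal entry the robber is \emph{not} trapped --- he can wait a few rounds, exit legally, wander, and re-enter, resetting your potential to $\max_c g_\tau(c,u)$ each time. Nothing you have written bounds the number of such resets; the ``two-part potential'' is not specified, and determinacy only tells you someone wins, not who. As it stands, the forward direction does not terminate.

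The gap is closable in two ways, and it is worth seeing both. The paper sidesteps it by formalizing a robber trajectory as a sequence constrained only by $a_k\in X$ and $d(a_{k-1},a_k)\le s$, with ``running through the cop'' counting as capture rather than as an illegal move; then the folded sequence $S_r'$ is \emph{always} a valid $X_{i+1}$-sequence (each replaced link has length $\le s'\le s$ by the dismantling condition), the induction hypothesis applies to it wholesale, and the only case analysis needed is to pull the capture of $S_r'$ back to a capture of $S_r$. Alternatively, within your framework: when the image move $a\to u$ is illegal, every path of length $\le s$ from $a$ to $u$ meets the cop's position $\tilde c$, and since $d(a,u)\le s'$ the cop lies on a path of length $\le s'$, whence $d(\tilde c,u)\le s'$; a parsimonious $\tau$ therefore captures the image at $u$ in one move, putting the cop on $u$ with the robber on $v_1$, and your basic lemma finishes in one more round. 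So the potential after an illegal entry is $1$, not $\max_c g_\tau(c,u)$, and no oscillation is possible. Either repair makes your forward direction sound; without one of them the proof is incomplete. The converse as you describe it is the right plan but is only a plan --- the deferred verification (that rerouting through $u$ never asks the cop for an illegal move and never loses a capture) is exactly the content of the paper's memory-bit construction and must actually be carried out.
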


\begin{proof}
First, suppose that $G$ is $(s,s')$-dismantlable and let $v_1,\ldots,
v_n$ be an $(s,s')$-dismantling ordering of $G.$ By induction on $n-i$
we will show that for each level-set $X_i=\{ v_i,\ldots,v_n\}$ the cop
captures the robber in the $X_i$-restricted game. This is obviously
true for $X_n=\{ v_n\}.$ Suppose that our assertion is true for all
sets $X_n,\ldots,X_{i+1}$ and we will show that it still holds for
$X_i$. Let $N_{s}(v_i,G\setminus\{ v_j\})\cap X_i\subseteq
N_{s'}(v_j)$ for a vertex $v_j\in X_i$. Consider a parsimonious
positional winning strategy $\sigma_{i+1}$ for the cop in the
$X_{i+1}$-restricted game.  We build a parsimonious winning strategy $\sigma_i$
for the cop in the $X_i$-restricted game: the intuitive idea is that
if the cop sees the robber in $v_i$, he plays as in the
$X_{i+1}$-restricted game when the robber is in $v_j$.  Let $\sigma_i$
be the strategy for the $X_i$-restricted game defined as follows. For
any positions $c \in X_i$ of the cop and $r \in X_i$ of the robber,
set $\sigma_{i}(c,r) = r$ if $d(c,r) \leq s'$, otherwise $\sigma_i(c,r) =
\sigma_{i+1}(c,r)$ if $c,r \neq v_i$, $\sigma_i(c,v_i) =
\sigma_{i+1}(c,v_j)$ if $c \notin \{ v_i, v_j\}$, and $\sigma_i(v_i,r)
= v_j$ if $r \neq v_i$ (in fact, if the cop plays $\sigma_i$ he
will never move to $v_i$ except to capture the robber there).  By construction,
the strategy $\sigma_{i}$ is parsimonious; in particular,
$\sigma_{i}(v_j,v_i) = v_i$, because $d(v_i,v_j) \leq s'$. We now prove that $\sigma_i$ is winning.

Consider any $X_i$-valid sequence $S_r=(r_1, \ldots, r_p, \ldots)$ of
moves of the robber and any trajectory $(\pi_1, \ldots, \pi_p,
\ldots)$ extending $S_r$, where $\pi_p$ is a simple path of length at most $s$ from $r_p$ to
$r_{p+1}$ along which the robber moves.  Let $S_r'=(r'_1, \ldots r'_p,
\ldots)$ be the sequence obtained by setting $r'_k=r_k$ if $r_k \neq
v_i$ and $r'_k=v_j$ if $r_k=v_i$. For each $p$, set $\pi_p' = \pi_p$
if $v_i\notin \{r_p,r_{p+1}\}$. If $v_i = r_{p+1}$ (resp. $v_i = r_p$
), set $\pi_p'$ be a shortest path from $r_p$ to $v_j$ (resp. from $v_j$ to
$r_{p+1}$) if $\pi_p$ does not contain $v_j$ and set $\pi_p'$ be the
subpath of $\pi_p$ between $r_p$ and $v_j$ (resp. between $v_j$ and
$r_{p+1}$) otherwise.  Since $N_{s}(v_i,G\setminus\{ v_j\})\cap X_i\subseteq
N_{s'}(v_j),$ we infer that $S_r'$ is a $X_{i+1}$-valid
sequence of moves for the robber. By induction hypothesis, for any initial location of $\mathcal C$ in
$X_{i+1}$, the strategy $\sigma_{i+1}$ allows the cop to capture the
robber which moves according to $S_r'$ in the $X_{i+1}$-restricted
game. Let $c_{m+1}'$ be the position of the cop after his last move
and $S_c' = (c_1', \ldots, c_{m+1}')$ be the sequence of
positions of the cop in the $X_{i+1}$-restricted game against $S_r'$
using $\sigma_{i+1}.$ Let $S_c = (c_1, \ldots, c_p, \ldots)$ be the
sequence of positions of the cop in the $X_i$-restricted game
against $S_r$ using $\sigma_i$. From the definition of $S'_r$ and
$\sigma_i,$ $S_c$ and $S'_c$ coincide at least until step $m,$
i.e., $c'_k=c_k$ for $k=1,\hdots,m.$ Moreover, if $c'_{m+1}\neq c_{m+1}$
then $c_{m+1} =r_m=v_i$ and $c'_{m+1} = r'_m= v_j.$
In the $X_{i+1}$-restricted version of the game, the robber is captured,
either (i) because after his last move, his position $r_m'$ is at distance
at most $s'$ from cop's current position $c'_{m}$, or (ii) because his
trajectory $\pi_m'$ from $r_m'$ to $r_{m+1}'$ passes via $c_{m+1}'$.

In case (i), since $d(r'_m,c'_m) \leq s'$ and the strategy $\sigma_{i+1}$
is parsimonious, we conclude that $c_{m+1}' = r'_m$.  If $c_{m+1}' = r'_m
\ne v_j$, then from the definition of $S'_r$ and $\sigma_i$, we
conclude that $c_{m+1} = c_{m+1}' = r'_m = r_m$, whence $c_{m+1}=r_m$
and $\mathcal C$ captures $\mathcal R$ using $\sigma_i$. Now suppose
that $c_{m+1}' = r'_m = v_j$.  If $r_m = v_j$, then $d(c_{m},r_m) \leq
s'$ because $c_m = c_m'$ and thus $\mathcal C$ captures $\mathcal R$ at
$v_j$ using $\sigma_i$. On the other hand, if $r_m = v_i$, either
$c_{m+1} = v_i$ and we are done, or $c_{m+1} = v_j$ and since
$N_{s}(v_i,G\setminus\{v_j\})\cap X_i\subseteq N_{s'}(v_j),$
the robber is captured at the next move of the cop, i.e., $c_{m+2} =
r_{m+1}$ holds.

In case (ii), either the path $\pi_m'$ from $r_m'$ to
$r_{m+1}'$ is a subpath of $\pi_m$, or $v_i \in \{r_m,r_{m+1}\}$ and
$\pi_m$ does not go via $v_j$. In the first case, note that $c_{m+1} =
c_{m+1}'$, otherwise $c_{m+1} = v_i = r_m$ by construction of
$\sigma_i$ and thus the robber has been captured before. Therefore the
trajectory $\pi_m$ of the robber in the $X_i$-game traverses the
position $c_{m+1}$ of the cop and we are done. Now suppose that $\pi_m$ does not go via $v_j$ and $v_i \in
\{r_m,r_{m+1}\}$.  Note that in this case, $c_{m+1} = c'_{m+1}$ holds;
otherwise, $c_{m+1}'=r'_m=v_j$ and $c_{m+1}=r_m=v_i$ and therefore, the robber
is caught at step $m+1$. If $c_{m+1}$ belongs to $\pi_m$, then we are done
as in the first case. So suppose that $c_{m+1} \notin \pi_m$.
If $r_{m+1} = v_i$, then $r_m \in N_s(v_i, G\setminus\{v_j\})
\subseteq N_{s'}(v_j)$. Since, $\pi_m'$ is a shortest path and
$c'_{m+1}$ belongs to this path, $d(c'_{m+1},v_j) \leq s'$ and thus
either $c_{m+2} = v_i = r_{m+1}$ if $d(c'_{m+1},v_i) \leq s'$, or
$c_{m+2} = v_j$ since $\sigma_{i+1}$ is parsimonious. In the latter case,
since $N_s(v_i, G\setminus\{v_j\}) \subseteq N_{s'}(v_j)$, $r_{m+1} =
v_i,$ and $c_{m+2} = v_j$, the robber will be captured at the next
move.
Finally, suppose that $r_m = v_i$. Then $r_m' = v_j$. Since $\pi_m$ is
a path of length at most $s$ avoiding $v_j$, we conclude that $r_{m+1}
\in N_s(v_i, G\setminus\{v_j\}) \subseteq N_{s'}(v_j)$. Since $\pi_m'$
is a shortest path from $v_j$ to $r_{m+1}$ containing the vertex
$c'_{m+1} = c_{m+1}$, we have $d(c_{m+1},r_{m+1}) \leq d(v_j,r_{m+1})
\leq s'$. Therefore, the cop captures the robber in $r_{m+1}$ at his
next move, i.e., $c_{m+2} = r_{m+1}$. This shows that a $(s,s')$-dismantlable
graph $G$ belongs to $\CW(s,s').$

Conversely, suppose that for a $X$-restricted game played on a graph
$G=(V,E)$ there is a positional winning strategy $\sigma$ for the
cop. We assert that $X$ is $(s,s')$-dismantlable. This is obviously
true if $X$ contains a vertex $y$ such that $d(y,x)\le s'$ for any
$x\in X.$ So suppose that $X$ does not contain such a vertex $y.$
Consider a $X$-valid sequence of moves of the robber having a
maximum number of steps before the capture of the robber. Let $u\in
X$ be the position occupied by the cop before the capture of
$\mathcal R$ and let $v\in X$ be the position of the robber at this
step.  Since wherever the robber moved next in $X$ (including
remaining in $v$ or passing via $u$), the cop would capture him,
necessarily $N_s(v,G\setminus\{ u\})\cap X\subseteq N_{s'}(u)$
holds. Set $X':=X\setminus \{ v\}$.

We assert that $X'$ is $(s,s')$-winnable as well.  In this proof, we
use a strategy that is not positional but uses one bit of memory.  A
strategy using one bit memory can be presented as follows: it is a
function which takes as input the current positions of the two players
and a boolean (the current value of the memory) and that outputs the
next position of the cop and a boolean (the new value of the
memory). Using the positional winning strategy $\sigma$, we define
$\sigma'(c,r,m)$ for any positions $c \in X'$ of the cop and $r \in
X'$ of the robber and for any value of the memory $m \in \{0,1\}$.
The intuitive idea for defining $\sigma'$ is that the cop plays using
$\sigma$ except when he is in $u$ and his memory contains $1$; in this
case, he uses $\sigma$ as if he was in $v$.  If $m = 0$ or $c \neq u$,
then we distinguish two cases: if $\sigma(c,r) = v$ then
$\sigma'(c,r,m) = (u,1)$ (this is a valid move since $N_{s'}(v) \cap X
\subseteq N_{s'}(u)$) and $\sigma'(c,r,m) = (\sigma(c,r),0)$
otherwise.  If $m = 1$ and $c = u$, we distinguish two cases: if
$\sigma(v,r) = v$, then $\sigma'(u,r,1) = (u,1)$ and $\sigma'(u,r,1) =
(\sigma(v,r),0)$ otherwise (this is a valid move since $N_{s'}(v)\cap
X \subseteq N_{s'}(u)$).  Let $S_r= (r_1, \ldots, r_p, \ldots)$
be any $X'$-valid sequence of moves of the robber. Since $X'\subset
X,$ $S_r$ is also a $X$-valid sequence of moves of the robber.  Let
$S_c:=(c_1, \ldots, c_p, \ldots)$ be the corresponding $X$-valid
sequence of moves of the cop following $\sigma$ against $S_r$ in $X$
and let $S'_c=(c'_1, \ldots, c'_p, \ldots)$ be the $X'$-valid sequence
of moves of the cop following $\sigma'$ against $S_r$. Note that the
sequences of moves $S_c$ and $S_c'$ differ only if $c_k = v$ and $c_k'
= u$.  Finally, since the cop follows a winning strategy for $X,$
there is a step $j$ such that $c_j=r_j \in X\setminus \{v\}$ (note
that $r_j\neq v$ because we supposed that $S_r\subseteq X'$).  Since
$c_j \neq v$, we also have $c'_j=r_j$, thus $\mathcal C$ captures
$\mathcal R$ in the $X'$-restricted game. Starting from a positional
strategy for the $X$-restricted game, we have constructed a winning
strategy using memory for the $X'$-restricted game. As mentioned in
the introduction, it implies that there exists a positional winning
strategy for the $X'$-restricted game.

Applying induction on the number of vertices of the cop-winning set
$X,$ we conclude that $X$ is $(s,s')$-dismantlable.  Applying this
assertion to the vertex set $V$ of cop-win graph $G=(V,E)$ from the
class $\CW(s,s'),$ we will conclude that $G$ is $(s,s')$-dismantlable.
\end{proof}

\begin{corollary} Given a graph $G=(V,E)$ and the integers  $s,s'\in {\mathbb N}\cup \{ \infty\},$ $s'\leq s,$
one can recognize in polynomial time if $G$ belongs to $\CW(s,s').$
\end{corollary}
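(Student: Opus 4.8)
The plan is to read a polynomial algorithm directly off Theorem~\ref{copwin}: since $G\in\CW(s,s')$ if and only if $G$ is $(s,s')$-dismantlable, it suffices to recognize $(s,s')$-dismantlability in polynomial time. I would do this by the obvious greedy elimination. Starting from the admissible set $X=V$, repeatedly delete from the current set $X$ some \emph{currently dominated} vertex, i.e. a vertex $v\in X$ for which there is a witness $w\in X$ with $w\ne v$ and $N_s(v,G\setminus\{w\})\cap X\subseteq N_{s'}(w)$; declare $G$ dismantlable precisely when $X$ can be driven down to a single vertex.

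A key observation making this efficient is that the punctured balls $N_s(v,G\setminus\{w\})$ and the balls $N_{s'}(w)$ are taken in the \emph{fixed} graph $G$ (recall that in the $X$-restricted game both players may pass through any vertex of $G$ and only stand in $X$), so they never change as $X$ shrinks; only their intersection with the current $X$ does. Hence I would precompute, once and for all, for every ordered pair $(v,w)$ the obstruction set $D_{v,w}:=N_s(v,G\setminus\{w\})\setminus N_{s'}(w)$; each such set is obtained by a bounded breadth-first search from $v$ in $G\setminus\{w\}$ together with one from $w$, so all of them cost $O(n^2(n+m))$ time, where $n=|V|$ and $m=|E|$ (the cases $s=\infty$ or $s'=\infty$ are handled by unbounded searches and cause no difficulty). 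The domination test for $v$ by $w$ in the current $X$ then reduces to checking $D_{v,w}\cap X=\emptyset$, which costs $O(n)$; since the greedy loop performs at most $n$ deletions and each deletion scans $O(n^2)$ pairs, the whole procedure runs in polynomial time.

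For correctness, one direction is immediate: if the greedy process deletes vertices in the order $v_1,v_2,\dots$, then each $v_i$ was dominated in $\{v_i,v_{i+1},\dots\}$ by a witness still present, so this deletion order is itself an $(s,s')$-dismantling ordering and $G$ is dismantlable. The content is in the converse, and this is the step I expect to be the main obstacle: I must show that greedy never gets stuck, i.e. that if the current $X$ is dismantlable and $v$ is \emph{any} currently dominated vertex, then $X\setminus\{v\}$ is again dismantlable, so that the order in which dominated vertices are removed is irrelevant. Granting this confluence lemma, a short induction on $|X|$ finishes: a dismantlable $X$ with $|X|\ge 2$ has at least one dominated vertex (the first in a dismantling order), greedy removes some dominated vertex $v$, and the lemma keeps $X\setminus\{v\}$ dismantlable.

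I would prove the confluence lemma not combinatorially but through the game characterization, since naive transitivity of domination is awkward for punctured balls when $s'<s$. By the equivalence established in the proof of Theorem~\ref{copwin} for arbitrary $X$-restricted games, $X$ is dismantlable if and only if the cop wins the $X$-restricted game, so it suffices to show that winnability is preserved when an arbitrary dominated vertex $v$ (with witness $w$) is deleted. This is exactly the one-bit-of-memory simulation used in the converse part of that proof: the cop plays his positional winning strategy on $X$, substituting a move to $w$ for every prescribed move to $v$ and recording via one bit that he is ``pretending to sit at $v$''. The only property of $v$ that construction needs is $N_{s'}(v)\cap X\subseteq N_{s'}(w)$, and this follows from the domination $N_s(v,G\setminus\{w\})\cap X\subseteq N_{s'}(w)$: for $c\in N_{s'}(v)\cap X$, either some shortest $v$--$c$ path avoids $w$, putting $c$ in the punctured ball and hence in $N_{s'}(w)$, or every such path meets $w$, in which case $d(w,c)\le d(v,c)\le s'$ directly. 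Thus the simulation applies to every dominated vertex, the remaining set stays winnable and therefore dismantlable, and the greedy recognition algorithm is correct.
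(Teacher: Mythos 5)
Your proposal is correct and follows essentially the same route as the paper: both reduce the question to the greedy elimination of dominated vertices and justify that the greedy process cannot get stuck by invoking the one-bit-of-memory simulation from the converse part of Theorem~\ref{copwin}, which shows that deleting \emph{any} dominated vertex preserves winnability (and hence dismantlability). Your additional remarks---that the simulation only needs $N_{s'}(v)\cap X\subseteq N_{s'}(u)$, which you derive explicitly from the domination condition, and the precomputation of the obstruction sets for the complexity bound---are details the paper leaves implicit, but the argument is the same.
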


\begin{proof} By Theorem \ref{copwin},  $G\in \CW(s,s')$
if and only if $G$ is $(s,s')$-dismantlable. Moreover, from the last part of the proof of
Theorem \ref{copwin} we conclude that if
a subset $X$ of vertices of $G$ is $(s,s')$-winnable and $u,v\in X$ such that
$N_s(v,G\setminus\{ u\})\cap X\subseteq N_{s'}(u)$ holds, then the set $X'=X\setminus \{ v\}$
is $(s,s')$-winnable as well. Therefore it suffices to run the following algorithm. Start with $X:=V$
and as long as possible find in $X$ two vertices $u,v$ satisfying the inclusion
$N_s(v,G\setminus\{ u\})\cap X\subseteq N_{s'}(u),$ and set $X:=X\setminus \{ v\}.$ If the algorithm ends
up with a set $X$ containing at least two vertices, then $G$ is not $(s,s')$-winnable, otherwise, if
$X$ contains a single vertex, then
$G$ is $(s,s')$-dismantlable and therefore $G\in \CW(s,s').$
\end{proof}

\subsection{Graphs of $\CW(s,s')$ and hyperbolicity}
Introduced by  Gromov \cite{Gr}, $\delta$-hyperbolicity of a metric space
measures, to some extent,  the deviation of a metric from a tree metric.
A graph $G$ is $\delta$-{\it hyperbolic} if for any four vertices
$u,v,x,y$ of $G$, the two larger of the three distance sums
$d(u,v)+d(x,y)$, $d(u,x)+d(v,y)$, $d(u,y)+d(v,x)$ differ by at most
$2\delta \geq 0$. Every 4-point metric $d$ has a canonical
representation in the rectilinear plane as illustrated in Fig. \ref{fig0}.
The three distance sums  are ordered from small to large, thus implying
$\xi \le \eta.$ Then $\eta$ is half the difference of the largest
and the smallest sum, while $\xi$ is half the largest minus the
medium sum. Hence, a graph $G$ is $\delta$-hyperbolic iff
$\xi$ does not exceed $\delta$ for any four vertices $u,v,w,x$ of $G.$  Many classes of graphs are known to have bounded
hyperbolicity  \cite{BaCh_survey,ChDrEsHaVa}.  Our next
result, based on Theorem~\ref{copwin} and a  result of
\cite{ChDrEsHaVa}, establishes that in a $\delta$-hyperbolic graph a
``slow'' cop captures a faster robber provided that
$s'>s/2+2\delta$ (in the same vein, Benjamini \cite{Benj} showed that
in the competition of two growing clusters in a $\delta$-hyperbolic
graph, one growing faster that the other, the faster cluster not
necessarily surround the slower cluster).

\begin{figure}
\begin{center}
{\input{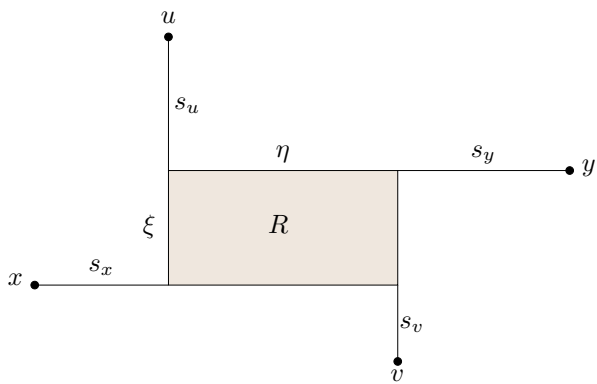}}
\end{center}
\caption{Realization of a 4-point metric in the rectilinear plane.}
\label{fig0}
\end{figure}

\begin{proposition}\label{prop:hyperbolic}
Given $r\ge 2\delta\ge 0,$ any $\delta$-hyperbolic graph $G=(V,E)$ is $(2r,r+2\delta)$-dismantlable
and therefore $G\in \CW(2r,r+2\delta).$
\end{proposition}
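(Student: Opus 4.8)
The plan is to apply Theorem~\ref{copwin}, which reduces the assertion $G\in\CW(2r,r+2\delta)$ to proving that $G$ is $(2r,r+2\delta)$-dismantlable. Before anything else I would record that the hypothesis $r\ge 2\delta$ is used precisely to guarantee $s'=r+2\delta\le 2r=s$, so that the dismantling condition (and Theorem~\ref{copwin}) is applicable. To build the dismantling order I would fix an arbitrary basepoint $u_0\in V$ and order the vertices by non-increasing distance to $u_0$, say $d(u_0,v_1)\ge\cdots\ge d(u_0,v_n)=0$, so that $v_n=u_0$. Writing $X_i=\{v_i,\ldots,v_n\}$, the single feature of this order that I will exploit is that every $w\in X_i$ satisfies $d(u_0,w)\le d(u_0,v_i)$, and that any vertex strictly closer to $u_0$ than $v_i$ must appear strictly later than $v_i$.

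Next I would specify, for each index $i<n$, the dominating vertex $v_j$. Set $k:=d(u_0,v_i)\ge 1$. If $k\le r$, take $v_j:=u_0=v_n$; then $j=n>i$ and every $w\in X_i$ has $d(v_j,w)=d(u_0,w)\le k\le r\le r+2\delta$, so the required inclusion is immediate. If $k>r$, fix a shortest $(v_i,u_0)$-path and let $v_j$ be its vertex at distance exactly $r$ from $v_i$; then $d(u_0,v_j)=k-r<k$, whence $j>i$ by the ordering. The remaining work is entirely in this second case.

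The heart of the argument is the geometric estimate $N_{2r}(v_i,G\setminus\{v_j\})\cap X_i\subseteq N_{r+2\delta}(v_j)$. Since a punctured ball is contained in the corresponding ball, I would instead prove the stronger inclusion $N_{2r}(v_i)\cap X_i\subseteq N_{r+2\delta}(v_j)$. So fix $w$ with $d(v_i,w)\le 2r$ and $d(u_0,w)\le k$, and apply the four-point condition to the quadruple $u_0,v_i,v_j,w$. The three distance sums are $S_1=k+d(v_j,w)$, $S_2=(k-r)+d(v_i,w)$, and $S_3=d(u_0,w)+r$ (using $d(v_i,v_j)=r$); the constraints $d(v_i,w)\le 2r$ and $d(u_0,w)\le k$ give $S_2,S_3\le k+r$. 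If $S_1\le\max(S_2,S_3)$ then $d(v_j,w)\le r$; otherwise $S_1$ is the strictly largest sum, so $\delta$-hyperbolicity forces $S_1-\max(S_2,S_3)\le 2\delta$ and hence $d(v_j,w)\le r+2\delta$. Either way $w\in N_{r+2\delta}(v_j)$, as needed. (This is the point at which the result of \cite{ChDrEsHaVa} would enter in a more geometric phrasing, but the four-point inequality alone suffices.)

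I expect the main, and really the only, obstacle to be this estimate, and specifically the bookkeeping of choosing the pairing of $u_0,v_i,v_j,w$ so that the unknown quantity $d(v_j,w)$ lands in the possibly-largest sum $S_1$, together with verifying that the two bounds $d(v_i,w)\le 2r$ and $d(u_0,w)\le k$ supplied by the BFS order are exactly what is needed to make $S_2,S_3\le k+r$. Everything else is routine: the existence of the $r$-th vertex along a geodesic, the containment of the punctured ball in the ball, and the fact that $v_j$ follows $v_i$ in the order. Ranging the two cases over all $i<n$ exhibits $G$ as $(2r,r+2\delta)$-dismantlable, and Theorem~\ref{copwin} then delivers $G\in\CW(2r,r+2\delta)$.
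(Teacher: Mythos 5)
Your proof is correct and follows essentially the same route as the paper: order the vertices by non-increasing distance to a fixed basepoint and dominate each $v_i$ by a vertex lying on a geodesic toward the root, establishing the stronger inclusion $N_{2r}(v_i)\cap X_i\subseteq N_{r+2\delta}(v_j)$ without the puncture. The only difference is that the paper outsources the key estimate to Lemma 2 of \cite{ChDrEsHaVa} (placing the dominating vertex at distance $r+2\delta$ from $v_i$ on the BFS-tree path), whereas you place it at distance $r$ and verify the estimate directly via the four-point condition applied to $u_0,v_i,v_j,w$ --- a correct, self-contained substitute.
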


\begin{proof}
The second assertion follows from Theorem \ref{copwin}. To prove the
$(2r,r+2\delta)$-dismantlability of $G,$ we will employ Lemma 2 of
\cite{ChDrEsHaVa}.  According to this result, in a $\delta$-hyperbolic
graph $G$ for any subset of vertices $X$ there exist two vertices $x
\in X$ and $c\in V$ such that $d(c,y) \leq r+2\delta$ for any vertex
$y \in X \cap N_{2r}(x),$ i.e., $N_{2r}(x)\cap X\subseteq
N_{r+2\delta}(c).$ The proof of \cite{ChDrEsHaVa} shows that the
vertices $x$ and $c$ can be selected in the following way: pick any
vertex $z$ of $G$ as a basepoint, construct a breadth-first search
tree $T$ of $G$ rooted at $z,$ and then pick $x$ to be the furthest
from $z$ vertex of $X$ and $c$ to be vertex located at distance
$r+2\delta$ from $x$ on the unique path between $x$ and $z$ in $T.$
Using this result, we will establish a slightly stronger version of
dismantlability of a $\delta$-hyperbolic graph $G$, in which the
inclusion $N_{s}(v_i,G\setminus \{ v_j\})\cap X_i\subseteq
N_{s'}(v_j)$ is replaced by $N_{s}(v_i)\cap X_i\subseteq N_{s'}(v_j)$
with $s:=2r$ and $s':=r+2\delta$. We recursively construct the
ordering of $V$. By previous result, there exist two vertices $v_1 \in
X_1:=V$ and $c \in X_2:=V\setminus \{v_1\}$ such that $N_{2r}(v_1)\cap
X_1 \subseteq N_{r+2\delta}(c)$. At step $i \geq 1$, suppose by
induction hypothesis that $V$ is the disjoint union of the sets
$\{v_1,\ldots,v_i\}$ and $X_{i+1}$, so that, for any $j\leq i$, there
exists a vertex $c \in X_{j+1}$ such that $N_{2r}(v_j)\cap X_j
\subseteq N_{r+2\delta}(c)$ with $X_j=\{v_j,\ldots,v_i\} \cup
X_{i+1}$. We assert that this ordering can be extended. Applying the
previous result to the set $X:=X_{i+1}$ we can define two vertices
$v_{i+1} \in X_{i+1}$ and $c\ne v_{i+1}$ such that
$N_{2r}(v_{i+1})\cap X_{i+1} \subseteq N_{r+2\delta}(c)$. The choice
of the vertices $x\in X$ and $c\in V$ provided by \cite{ChDrEsHaVa}
and the definition of the sets $X_1,X_2,\ldots$ ensure that if a
vertex of $G$ is closer to the root than another vertex, then the
first vertex will be labeled later than the second one. Since by
construction $c$ is closer to $z$ than $v_{i+1},$ necessarily $c$
belongs to the set $X_{i+1}\setminus \{ v_{i+1}\}.$
\end{proof}

In general, dismantlable graphs do not have bounded hyperbolicity because they are universal in the following sense.
As we noticed in the introduction, any finite Helly graph is dismantlable. On the other hand, it is well known that
an arbitrary connected graph can be isometrically embedded into a Helly graph (see for example \cite{BaCh_survey,Qui83}). However,
dismantlable graphs without some short induced cycles are 1-hyperbolic:

\begin{corollary} Any dismantlable graph $G=(V,E)$ without induced 4-,5-, and 6-cycles is 1-hyperbolic, and
therefore $G\in \CW(2r,r+2)$ for any $r>0.$
\end{corollary}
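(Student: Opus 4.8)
The plan is to separate the statement into its two halves. The second half is immediate once the first is known: assuming $G$ is $1$-hyperbolic, Proposition~\ref{prop:hyperbolic} applied with $\delta=1$ gives $G\in\CW(2r,r+2)$ for every $r\ge 2$ (the regime where Proposition~\ref{prop:hyperbolic} applies, since it requires $r\ge 2\delta=2$). For $r=1$ the asserted membership is in $\CW(2,3)$, where $s'=r+2=3>2=2r=s$; since a cop at least as fast as the robber always wins, this case is trivial. Thus the entire content to prove is that a dismantlable graph with no induced $4$-, $5$-, or $6$-cycle is $1$-hyperbolic, and this is where I would concentrate.

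First I would reduce $1$-hyperbolicity to a statement about isometric cycles. Any isometric cycle is induced, so by hypothesis an isometric cycle of $G$ has length $3$ or length $\ge 7$; a direct $4$-point computation on a single cycle $C_k$ shows that $C_k$ satisfies $\xi\le 1$ exactly when $k\le 7$ (e.g.\ the quadruple $\{1,3,5,7\}$ in $C_8$ already gives distance sums $4,4,8$, hence $\xi=2$). So the only isometric cycles that could obstruct $1$-hyperbolicity are those of length $\ge 8$. The key step I would carry out is to show that a dismantlable graph with no induced $4$-, $5$-, $6$-cycle contains no isometric cycle of length $\ge 8$. I would argue by induction along a dismantling order $v_1,\dots,v_n$: each fold of a dominated vertex $v_i$ (with $N_1(v_i)\cap X_i\subseteq N_1(v_j)$) is a retraction and hence nonexpansive, so the subgraphs $G(X_i)$ are isometric in $G$; a long isometric cycle of $G(X_i)$ not already present in $G(X_{i+1})$ must pass through the near-twin pair $\{v_i,v_j\}$, and the excluded short cycles should force such a cycle to have a chord, contradicting that it is induced.

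Excluding long isometric cycles is however not enough by itself, and this is the delicate point: the triangular grid has only triangles as isometric cycles yet is far from hyperbolic. What rescues the argument is precisely the absence of induced $6$-cycles, which rules out the hexagonal/triangular-grid flats (the six neighbors of a vertex in the triangular grid form an induced $C_6$). Concretely I would take a $4$-tuple $\{u,v,x,y\}$ realizing the maximal $\xi$, fix geodesics between the two "crossing" pairs giving the two largest distance sums, and examine the region they bound; being free of induced $4$-, $5$-, $6$-cycles and of long isometric cycles, this region must be thin, and I would convert this thinness into the bound $\xi\le 1$.

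The hard part will be obtaining the \emph{sharp} constant $\delta=1$ rather than the weaker $\delta\le 3/2$ that a naive perturbation gives: since a dominated vertex $v_i$ is a near-twin of $v_j$ (so $d(v_i,\cdot)$ equals $d(v_j,\cdot)$ or $d(v_j,\cdot)+1$), reinserting it into a $1$-hyperbolic graph perturbs each of the three distance sums by at most $1$, which only yields two largest sums differing by $\le 3$. Closing this gap is exactly where the local exclusion of $C_4,C_5,C_6$ must be used globally. Rather than re-deriving the optimal constant by hand, I would appeal to the known classification of $1$-hyperbolic graphs by forbidden isometric subgraphs due to Bandelt and Chepoi, and verify that none of its obstructions — the large isometric cycles together with the finite list of exceptional configurations — can occur in a dismantlable, $\{$induced $C_4,C_5,C_6\}$-free graph, the dismantlability excluding the non-cop-win obstructions and the short-cycle hypothesis excluding the remaining ones.
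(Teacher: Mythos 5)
The second half of your argument is fine: once $1$-hyperbolicity is known, Proposition~\ref{prop:hyperbolic} with $\delta=1$ covers $r\ge 2$, and the degenerate case $r=1$ (cop faster than robber) is trivial; this matches what the paper does. The gap is in the first half, which is the entire content of the corollary, and there your text is a plan rather than a proof. The paper's argument rests on a structural identification that you never reach: since $G$ has no induced $4$- or $5$-cycles it has no $4$- or $5$-wheels, and a dismantlable (cop-win) graph without $4$- and $5$-wheels is \emph{bridged} by the theorem of Anstee and Farber; then the absence of induced $6$-cycles rules out $6$-wheels, and a bridged graph without $6$-wheels is $1$-hyperbolic by Proposition~11 of \cite{ChDrEsHaVa}. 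Both external results apply to a named class (bridged graphs) for which the local-to-global passage from forbidden small configurations to a global hyperbolicity bound has actually been carried out.

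Your route leaves that passage unexecuted at both of its critical points. The inductive exclusion of long isometric cycles along the dismantling order is only asserted (``the excluded short cycles should force such a cycle to have a chord''), and even granting it, you correctly observe that absence of long isometric cycles together with absence of induced $C_4,C_5,C_6$ does not imply hyperbolicity --- the triangular grid is the standard counterexample to the first half of that implication, and nothing in your sketch converts ``no induced $C_6$'' into the quantitative bound $\xi\le 1$. The claim that the region bounded by two geodesics ``must be thin'' is precisely the statement to be proved, not an argument for it. Finally, the appeal to a Bandelt--Chepoi classification of $1$-hyperbolic graphs by forbidden isometric subgraphs defers the whole verification (``verify that none of its obstructions can occur'') without performing it, and that verification is not routine absent the knowledge that $G$ is bridged. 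In short: the missing idea is bridgedness via Anstee--Farber, and without it the proof does not close.
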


\begin{proof} A dismantlable graph $G$ not containing induced 4- and
  5-cycles does not contain 4-wheels and 5-wheels as well
(a $k$-\emph{wheel} is a cycle of length $k$ plus a vertex adjacent to all
  vertices of this cycle), therefore $G$ is bridged by a result of
  \cite{AnFa}. Since $G$ does not contain 6-wheels as well, $G$ is
  1-hyperbolic by Proposition 11 of \cite{ChDrEsHaVa}.  Then the
  second assertion immediately follows from Proposition
  \ref{prop:hyperbolic}.
\end{proof}

\medskip\noindent
{\bf Open question 1:} Is it true that the converse of Proposition
\ref{prop:hyperbolic} holds? More precisely, is it true that if $G\in
\CW(s,s')$ for $s'<s,$ then the graph $G$ is $\delta$-hyperbolic,
where $\delta$ depends only of $s$?
\medskip

We give some confidences in the truth of this conjecture by showing
that for $s\ge 2s'$ all graphs $G\in \CW(s,s')$ are
$(s-1)$-hyperbolic.  On the other hand, since $\CW(s,s')\subset
\CW(s,s'+1),$ to answer our question for $s'<s<2s'$ it suffices to show its
truth for the particular case $s'=s-1.$ We give a positive answer to our question for Helly and bridged graphs by
showing that if such a graph $G$ belongs to
the class $\CW(s,s-1),$ then $G$ is  $s^2$-hyperbolic.

In the following results, for an $(s,s')$-dismantling order
$v_1,\ldots,v_n$ of a graph $G\in \CW(s,s')$ and a vertex $v$ of
$G$, we will denote by  $\alpha(v)$ the rank of $v$ in this order
(i.e., $\alpha(v)=i$ if $v=v_i$). For two vertices $u,v$ with
$\alpha(u)<\alpha(v)$ and a shortest $(u,v)$-path $P(u,v),$ an
$s$-{\it net} $N(u,v)$ of $P(u,v)$ is an ordered subset $(u=x_0,x_1,\ldots,x_k,x_{k+1}=v)$
of vertices of $P(u,v)$, such that $d(x_i,x_{i+1})=s$ for
any $i=0,\ldots,k-1$ and $0<d(x_k,x_{k+1})\le s.$

\begin{proposition} \label{monotone} If $G\in \CW(s,s-1)$ and $u,v$ are
two vertices of $G$ such that $\alpha(u)<\alpha(v)$ and $d(u,v)>
s^2,$ then for any shortest $(u,v)$-path $P(u,v),$ the vertex $x_1$
of its $s$-net $N(u,v)=(u=x_0,x_1,\ldots,x_k,x_{k+1}=v)$ satisfies
the condition $\alpha(u)<\alpha(x_1).$
\end{proposition}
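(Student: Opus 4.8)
The plan is to argue by contradiction. Suppose $\alpha(x_1)<\alpha(u)$, and write $P(u,v)=(y_0,\dots,y_d)$ with $y_0=u$, $y_d=v$ and $d=d(u,v)$, so that the net points are $x_i=y_{is}$ for $0\le i\le k$ and $x_{k+1}=v$. The key structural remark is that, since $v_1,\dots,v_n$ is a \emph{total} order, each net point $x_i$ is the minimum-rank vertex of the level set $X_{\alpha(x_i)}$ at the moment it is folded, and is therefore dominated there by some higher-rank vertex $w$ with $N_s(x_i,G\setminus\{w\})\cap X_{\alpha(x_i)}\subseteq N_{s-1}(w)$. I would single out the net point $x_m$ of minimum rank (it is unique, ranks being distinct). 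The assumption $\alpha(x_1)<\alpha(x_0)$ forbids $m=0$, and $\alpha(x_{k+1})=\alpha(v)>\alpha(u)>\alpha(x_1)\ge\alpha(x_m)$ forbids $m=k+1$; hence $1\le m\le k$, and both net-neighbours $x_{m-1},x_{m+1}$ have rank exceeding $\alpha(x_m)$ and so lie in $X_{\alpha(x_m)}$.

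The heart of the proof is the interior case $1\le m\le k-1$, where $d(x_{m-1},x_m)=d(x_m,x_{m+1})=s$ and hence $d(x_{m-1},x_{m+1})=2s$ since the three points lie on a shortest path. Let $w$ be the dominator of $x_m$. If $w$ avoids both length-$s$ subpaths of $P(u,v)$ from $x_m$ to $x_{m-1}$ and from $x_m$ to $x_{m+1}$, then both $x_{m-1}$ and $x_{m+1}$ belong to $N_s(x_m,G\setminus\{w\})\cap X_{\alpha(x_m)}\subseteq N_{s-1}(w)$, giving $d(x_{m-1},x_{m+1})\le 2(s-1)<2s$, a contradiction. Otherwise $w=y_t$ lies on exactly one of the two subpaths, say the one towards $x_{m-1}$, so $(m-1)s\le t<ms$; then the subpath to $x_{m+1}$ still avoids $w$, forcing $x_{m+1}\in N_{s-1}(w)$, whereas $d(w,x_{m+1})=(m+1)s-t\ge s+1$ because $w$ and $x_{m+1}$ both lie on the shortest path $P(u,v)$. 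This contradiction (and its mirror image) rules out $1\le m\le k-1$.

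The step I expect to be the main obstacle is the boundary case $m=k$, where $x_m$ sits next to the last, possibly short, net gap $g:=d(x_k,v)\in\{1,\dots,s\}$. Pairing $x_{k-1}$ with $v$ now only yields $d(x_{k-1},v)=s+g$, which fails to beat $2(s-1)$ as soon as $g\le s-2$, and a dominator placed just to the left of $x_k$ can indeed lie within distance $s-1$ of both $x_k$ and $v$; purely local estimates do not exclude a whole ``descending staircase'' $\alpha(x_0)>\alpha(x_1)>\dots>\alpha(x_k)$. This is precisely where the hypothesis $d(u,v)>s^2$ must be used: it guarantees $k\ge s$, hence a long stretch of full-length net gaps, and the goal is to leverage this to force the minimum rank onto an interior net point rather than onto $x_k$. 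Making this rigorous---either by iterating the interior argument on the clean sub-net $x_0,\dots,x_{k-1}$ of the subpath $P(u,x_{k-1})$ (of length $(k-1)s$), or by appealing to the global cop-win property to rule out the staircase---is the part that will require the most care.
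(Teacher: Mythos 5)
You have correctly reproduced the first half of the paper's argument. The paper works with \emph{local} minima of $\alpha$ on the net rather than the global minimum: for any interior index $i$ with $\alpha(x_{i-1})>\alpha(x_i)<\alpha(x_{i+1})$, the eliminator $y_i$ of $x_i$ satisfies $d(y_i,x_{i-1})\le s-1$ and $d(y_i,x_{i+1})\le s-1$ (splitting on whether $y_i$ lies on $P(u,v)$ between $x_{i-1}$ and $x_{i+1}$, exactly as you do), which contradicts $d(x_{i-1},x_{i+1})=2s$. This yields not merely that the minimum sits at $x_k$ but the full descending staircase $\alpha(x_0)>\alpha(x_1)>\dots>\alpha(x_k)<\alpha(x_{k+1})$ --- a strictly stronger conclusion than ``the global minimum is $x_k$,'' and one you will need. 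Your variant of the contradiction in the on-path subcase ($d(w,x_{m+1})\ge s+1$ directly versus membership in $N_{s-1}(w)$) is fine.

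The case you flag as ``the main obstacle'' is, however, the heart of the proposition, and you have not closed it; as it stands this is a genuine gap. Neither of your two suggested escape routes is the one that works: iterating on the sub-net of $P(u,x_{k-1})$ gives you nothing, since that path has length $(k-1)s$, which need not exceed $s^2$, and the rank inequality you would need for an inductive appeal to the proposition points the wrong way; and no appeal to the global cop-win property beyond the dismantling order is made. The paper's resolution is an iterated \emph{exchange} argument on the staircase: replace the current minimum $x_k$ by its eliminator $y_k$ (turning two links of length $s$ and $\le s$ into two links of length $\le s-1$), repeat while the new vertex is still a local minimum, then the minimum migrates to $x_{k-1}$, which is exchanged in turn, and so on. After finitely many exchanges one obtains a chain $(u,z_1,\dots,z_k,v)$ with $k+1$ links each of length at most $s-1$ and $\alpha(z_i)>\alpha(u)$ for all $i$. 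The triangle inequality then gives $ks+\gamma=d(u,v)\le(k+1)(s-1)$, hence $k\le s-\gamma-1$ and $d(u,v)\le(s-\gamma-1)s+\gamma<s^2$, contradicting the hypothesis. This numerical bootstrap is precisely where $d(u,v)>s^2$ enters, and it is the idea missing from your proposal.
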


\begin{proof} Suppose by way of contradiction that $\alpha(u)>\alpha(x_1).$ Let
$x_i$ $(1\le i\le k)$ be a vertex of $N(u,v)$ having a locally
minimal index $\alpha(x_i),$ i.e.,
$\alpha(x_{i-1})>\alpha(x_i)<\alpha(x_{i+1}).$ Let $y_i$ be the
vertex eliminating $x_i$ in the $(s,s-1)$-dominating order. We
assert that $d(y_i,x_{i-1})\le s-1$ and $d(y_i,x_{i+1})\le s-1.$
Indeed, if $y_i$ does not belong to the portion of the path $P(u,v)$
comprised between $x_{i-1}$ and $x_{i+1}$, then $x_{i-1},x_{i+1}\in
X_{\alpha(x_i)}\cap N_s(x_i, G\setminus \{ y_i\}),$ and therefore
$x_{i-1},x_{i+1}\in N_{s-1}(y_i)$ by the dismantling condition. Now
suppose that $y_i$ belongs to one of the segments of $P(u,v),$  say
to the subpath between $x_{i-1},x_i.$ Since $y_i\ne x_i$ we conclude
that $d(x_{i-1},y_i)\le s-1.$ On the other hand, since $x_{i+1}\in
X_{\alpha(x_i)}\cap N_s(x_i, G\setminus \{ y_i\}),$ by dismantling
condition we conclude that $d(y_i,x_{i+1})\le s-1.$ Hence, indeed
$d(y_i,x_{i-1})\le s-1, d(y_i,x_{i+1})\le s-1,$ whence
$d(x_{i-1},x_{i+1})\le 2s-2.$ Since $d(x_{i-1},x_{i+1})=2s$ for any
$1\le i\le k-1,$ we conclude that $i=k.$ Therefore the indices of
the vertices of $N(u,v)$  satisfy the inequalities
$\alpha(u)=\alpha(x_0)>\ldots>\alpha(x_{k-1})>\alpha(x_k)<
\alpha(x_{k+1})=\alpha(v).$

Denote by $N$ the ordered sequence of vertices
$x_0=u,x_1,\ldots,x_{k-1},y_k,x_{k+1}=v$ obtained from the $s$-net $N(u,v)$
by replacing the vertex $x_k$ by $y_k.$ We say that $N$ is obtained
from $N(u,v)$ by an {\it exchange}.  Call two consecutive vertices
of $N$ a {\it link}; $N$ has $k+1$ links, namely, $k-1$ links of length $s$
and two links of length at most $s-1.$ If
$\alpha(y_k)<\alpha(x_{k-1}),$ then we perform with $y_k$ the same
exchange operation as we did with $x_k.$ After several such exchanges,
we will obtain a new ordered set $x_0=u,x_1,\ldots,x_{k-1},z_k,x_{k+1}=v$ (denote
it also by $N$) having $k-1$ links of length $s$ and two links of
length $\le s-1$ and $\alpha(x_{k-1})<\alpha(z_k).$ Since
$\alpha(x_{k-2})>\alpha(x_{k-1}),$ using the $(s,s-1)$-dismantling
order we can exchange in $N$ the vertex $x_{k-1}$ by a vertex
$y_{k-1}$ to get an ordered set (denote it also by $N$) having $k-3$
links of length $s$ and 3 links of length $s-1$. Repeating the
exchange operation with each occurring local minimum (different from
$u$) of $N$ with respect to the total order $\alpha$,  after a
finite number of exchanges we will obtain an ordered set $N=(
u,z_1,z_2,\ldots,z_k,v)$  consisting of $k+1$ links of length at
most $s-1$ each and such that $\alpha(u)<\alpha(z_i)$ for any
$i=1,\ldots,k.$ By triangle inequality, $d(u,v)\le
d(u,z_1)+d(z_1,z_2)+\ldots+d(z_k,v)\le (k+1)(s-1).$ On the other
hand, from the definition of $N(u,v)$ we conclude that
$d(u,v)=ks+\gamma,$ where $0<\gamma=d(x_k,v)\le s.$ Hence
$(k+1)(s-1)\ge ks+\gamma,$ yielding $k\le s-\gamma-1.$ But then
$d(u,v)=ks+\gamma\le (s-\gamma-1)s+\gamma=s^2-s\gamma-s+\gamma<s^2,$
contrary to the assumption that $d(u,v)\ge s^2.$ This contradiction
shows that indeed $\alpha(x_1)>\alpha(u).$
\end{proof}

We call a graph $G\in \CW(s,s-1)$  $(s,s-1)^*$-{\it dismantlable} if for
any $(s,s-1)$-dismantling order $v_1,\ldots v_n$ of $G,$ for each  vertex
$v_i, 1\le i<n,$ there exists another vertex $v_j$
{\it adjacent} to $v_i$ such that $N_{s}(v_i,G\setminus\{ v_j\})\cap
X_i\subseteq N_{s-1}(v_j),$ where $X_i:=\{ v_i,v_{i+1},\ldots,v_n\}$
and $X_n=\{ v_n\}.$
The difference between $(s,s-1)$-dismantlability
and $(s,s-1)^*$-dismantlability is that in the second case the
vertex $v_j$ dominating $v_i$ is necessarily adjacent to $v_i$ but
not necessarily eliminated after $v_i$.


\begin{proposition} \label{hyp_dismantlable} If a graph $G\in \CW(s,s-1)$
is $(s,s-1)^*$-dismantlable, then $G$ is $s^2$-hyperbolic.
\end{proposition}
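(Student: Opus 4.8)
The plan is to verify Gromov's four-point condition directly, in the form recalled just before the statement: it suffices to show that for every four vertices the quantity $\xi$ of Figure~\ref{fig0} is at most $s^2$, equivalently that the largest of the three distance sums exceeds the medium one by at most $2s^2$. Throughout I fix an $(s,s-1)$-dismantling order $v_1,\dots,v_n$ witnessing the $(s,s-1)^*$-dismantlability and write $\alpha$ for the associated rank function. The whole argument rests on reading Proposition~\ref{monotone} as a statement about the \emph{shape} of geodesics relative to $\alpha$: moving away from the endpoint of smaller rank, the rank strictly increases already after the first link of the $s$-net, provided the geodesic is longer than $s^2$.

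The first step is to promote this local statement to a global one. The plan is to apply Proposition~\ref{monotone} repeatedly to the successive $s$-net vertices, and in particular to the two sub-geodesics obtained by cutting a geodesic $[x,y]$ at a vertex $m$ of minimum rank on it. Since $m$ realizes the minimum, on each of $[x,m]$ and $[m,y]$ the vertex $m$ is the endpoint of smaller rank, so Proposition~\ref{monotone} forces the rank to increase as one leaves $m$ in either direction. This should yield that every geodesic is coarsely \emph{rank-unimodal}: it descends (in rank) to a bottom vertex $\mu(x,y)$ — a coarse analogue of the least common ancestor in a tree — and ascends afterwards, with vertical fluctuations controlled by the $s$-net spacing $s$ and the threshold $s^2$. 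I expect the adjacency clause of $(s,s-1)^*$-dismantlability to enter precisely here: having the dominator of each vertex adjacent to it is what lets me compare the ranks of \emph{consecutive} vertices of a geodesic, and hence pass from the net-level inequalities of Proposition~\ref{monotone} to genuine vertex-by-vertex control of $\alpha$ along the path.

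With geodesics understood as coarse V-shapes, the plan is then to run the usual tree computation for the four-point condition. Given four vertices, I would consider the coarse bottoms $\mu(\cdot,\cdot)$ of the geodesics joining them and show, again via Proposition~\ref{monotone}, that they satisfy a coarse median property: among the relevant bottoms, the two of smallest rank lie within $O(s^2)$ of one another, exactly as the median of three points is unique in a tree. Feeding this into the three distance sums, the defect between the two larger sums telescopes into a bounded number of net fluctuations, which I expect to total at most $2s^2$, giving $\xi\le s^2$. The main obstacle is the quantitative bookkeeping: converting the qualitative unimodality into sharp inequalities and accumulating the $s$-net errors together with the $s^2$ threshold so that they combine to \emph{exactly} $s^2$ rather than a larger multiple. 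A secondary delicate point is to confirm that the adjacency hypothesis is genuinely what licenses the vertex-level rank comparisons, since Proposition~\ref{monotone} itself used only ordinary $(s,s-1)$-dismantlability.
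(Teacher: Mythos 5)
Your proposal is a plan rather than a proof: the two steps that carry all the weight --- the ``coarse rank-unimodality'' of geodesics and the ``coarse median property'' of the bottoms $\mu(\cdot,\cdot)$ --- are stated as expectations (``should yield'', ``I expect'', ``the main obstacle is the quantitative bookkeeping'') and never established. Proposition~\ref{monotone} as stated only controls the rank of the \emph{first} net vertex $x_1$ relative to the endpoint $u$ of smaller rank, and only when $d(u,v)>s^2$; promoting this to vertex-by-vertex unimodality along an entire geodesic, and then to a quantitative comparison of the bottoms of the six geodesics of a quadruple with total error exactly $2s^2$, is precisely the content that would need to be proved. You also locate the role of the adjacency clause in the wrong place: it is not needed to compare ranks of consecutive vertices of a geodesic.

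The paper's argument is far more direct and avoids all of this. Take a quadruple $u,v,x,y$ and induct on the total distance sum $S(u,v,x,y)$. If some pairwise distance is at most $s^2$, then $\xi\le s^2$ immediately from Figure~\ref{fig0}. Otherwise let $u$ be the member of the quadruple eliminated first, let $v_1,x_1,y_1$ be the points at distance $s$ from $u$ on geodesics to $v,x,y$; Proposition~\ref{monotone} says $u$ is eliminated before all three, so the dismantling condition applies and gives $d(u',v_1),d(u',x_1),d(u',y_1)\le s-1$, where $u'$ is the vertex eliminating $u$. Here is where $(s,s-1)^*$-dismantlability is used: since $u'$ is \emph{adjacent} to $u$ and $d(u,v_1)=s$, all three distances equal exactly $s-1$, so $u'$ lies on geodesics from $u$ to each of $v,x,y$ and each of the three distances from $u$ drops by exactly $1$. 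Hence all three distance sums drop by exactly $1$, their pairwise differences are preserved, and the induction closes with constant exactly $s^2$. Without the adjacency hypothesis the dominator could sit at distance up to $s-1$ from $u$ and the three distances need not decrease by equal amounts, which is why the weaker notion does not suffice. If you want to salvage your coarse-median route you would have to supply full proofs of both promoted statements with explicit constants, and there is no reason to expect the accumulated errors to come out at $s^2$ rather than a larger multiple.
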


\begin{proof} Pick any quadruplet of vertices  $u,v,x,y$ of $G,$
consider its representation as in Fig.~\ref{fig0} where $\xi \le \eta,$
and proceed by induction on the total distance sum
$S(u,v,x,y)=d(u,v)+d(u,x)+d(u,y)+d(v,x)+d(v,y)+d(x,y).$ From Fig. \ref{fig0} we immediately conclude that if one of the distances
between the vertices $u,v,x,y$ is at most $s^2,$ then $\xi\le s^2$ and we are done. So suppose that the distance between any two
vertices of our quadruplet is at least $s^2.$

Consider any $(s,s-1)$-dismantling order $v_1,\ldots, v_n$ of $G$ and suppose that
$u$ is the vertex of our quadruplet occurring first in this order.
Pick three shortest paths $P(u,v),P(u,x),$ and $P(u,y)$ between the
vertex $u$ and the three other vertices of the quadruplet. Denote by
$v_1,x_1,$ and $y_1$ the vertices of the paths $P(u,v),P(u,x),$ and
$P(u,y),$  respectively, located at distance $s$ from $u$. From
Proposition \ref{monotone} we infer that $u$ is eliminated before
each of the vertices $v_1,x_1,y_1.$ Let $u'$ be the neighbor of $u$
eliminating $u$ in the $(s,s-1)^*$-dismantling order associated with the $(s,s-1)$-dismantling order
$v_1,\ldots,v_n.$ From the $(s,s-1)^*$-dismantling condition we infer
that each of the distances $d(u',v_1),d(u',x_1),d(u',y_1)$ is at
most $s-1.$ Since $u$ is adjacent to $u'$ and $u$ is at distance $s$
from $v_1,x_1,y_1,$ necessarily $d(u',v_1),d(u',x_1),d(u',y_1)$ are
all equal to $s-1.$  Therefore, if we will replace in our quadruplet
the vertex $u$ by $u',$ we will obtain a quadruplet with a smaller total
distance sum: $S(u',v,x,y)=S(u,v,x,y)-3.$ Therefore, by
induction hypothesis, the two largest of the distance sums
$d(u',v)+d(x,y)$, $d(u',x)+d(v,y)$, $d(u',y)+d(v,x)$ differ by at
most $2s^2$. On the other hand, $d(u,v)+d(x,y)=d(u',v)+d(x,y)+1$,
$d(u,x)+d(v,y)=d(u',x)+d(v,y)+1$, and
$d(u,y)+d(v,x)=d(u',y)+d(v,x)+1,$ whence the two largest distance
sums of the quadruplet $u,v,x,y$ also differ by at most $2s^2.$
Hence $G$ is $s^2$-hyperbolic.
\end{proof}

A graph $G$ is called a {\it Helly graph} if its family of balls
satisfies the Helly property: any collection of pairwise
intersecting balls has a common vertex. A graph $G$ is called a {\it
bridged graph} if all isometric cycles of $G$ have length 3.
Equivalently, $G$ is a bridged graph if all balls around convex sets
are convex (a subset $S$ of vertices is convex if together with any
two vertices $u,v,$ the set $S$ contains the {\it interval}
$I(u,v)=\{ x\in V: d(u,v)=d(u,x)+d(x,v)\}$ between $u$ and $v$). For
a comprehensive survey of results and bibliography on Helly and
bridged graphs, see \cite{BaCh_survey}.

\begin{figure}
\begin{center}
\scalebox{0.7}
{\input{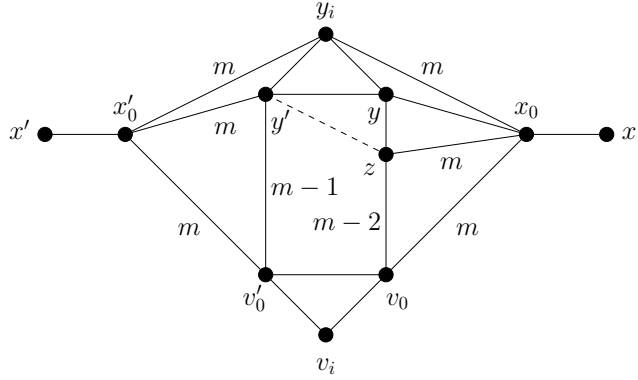}}
\end{center}
\caption{To the proof of Proposition~\ref{HellyBridged} (case of bridged graphs).}
\label{figBridged}
\end{figure}

\begin{proposition}\label{HellyBridged} If $G\in \CW(s,s-1)$ is a Helly or a bridged graph, then $G$ is
$(s,s-1)^*$-dismantlable and therefore $G$ is $s^2$-hyperbolic.
\end{proposition}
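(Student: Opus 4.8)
The plan is to establish the $(s,s-1)^*$-dismantlability, since the $s^2$-hyperbolicity then follows at once from Proposition~\ref{hyp_dismantlable}. So I fix an arbitrary $(s,s-1)$-dismantling order $v_1,\dots,v_n$ of $G$, fix $i<n$, and set $v:=v_i$, $X:=X_i$. Let $y:=v_k$ (with $k>i$) be a vertex dominating $v$ in this order, so that $W:=N_s(v,G\setminus\{y\})\cap X\subseteq N_{s-1}(y)$; since the trivial path shows $v\in W$, we have $p:=d(v,y)\le s-1$. The first step is a reduction: \emph{it is enough to produce a neighbour $u$ of $v$ with $d(u,y)=p-1$ and $W\subseteq N_{s-1}(u)$.} Indeed, for any $w'\in N_s(v,G\setminus\{u\})\cap X$ choose a $(v,w')$-path $\pi$ of length at most $s$ avoiding $u$: if $\pi$ also avoids $y$ then $w'\in W\subseteq N_{s-1}(u)$; otherwise $\pi$ meets $y$, its initial $(v,y)$-portion has length at least $p$, so $d(y,w')\le s-p$ and thus $d(u,w')\le d(u,y)+d(y,w')\le(p-1)+(s-p)=s-1$. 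Hence $u$ is an adjacent dominator for $v$, and since $i$ and the order are arbitrary this proves $(s,s-1)^*$-dismantlability.

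The task is therefore to find such a $u$, which amounts to showing
\[ N_1(v)\cap N_{p-1}(y)\cap\bigcap_{w\in W}N_{s-1}(w)\neq\emptyset: \]
any $u$ in this set has $d(u,v)\le1$ and $d(u,y)\le p-1<p$, forcing $d(u,v)=1$, $d(u,y)=p-1$, and $W\subseteq N_{s-1}(u)$, exactly as required. I would first note that this finite family of balls intersects pairwise: any $N_{s-1}(w),N_{s-1}(w')$ contain $y$ (as $W\subseteq N_{s-1}(y)$); $N_{p-1}(y)$ meets each $N_{s-1}(w)$ along a $(y,w)$-geodesic (using $d(y,w)\le s-1$ and $p\ge1$); and $N_1(v)$ meets $N_{p-1}(y)$ and each $N_{s-1}(w)$ by taking the first vertex of a geodesic from $v$ towards $y$, respectively towards $w$ (using $d(v,w)\le s$).

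In the Helly case this finishes the argument immediately: $G$ being a Helly graph, the defining Helly property of its balls turns the pairwise intersections just verified into a common vertex $u$. In the bridged case the ball-Helly property may fail, so I would instead use that balls of a bridged graph are convex. Then $F:=N_{p-1}(y)\cap\bigcap_{w\in W}N_{s-1}(w)$ is convex and nonempty (it contains $y$), every vertex of $F$ lies at distance at least $1$ from $v$ (because $F\subseteq N_{p-1}(y)$ and $d(v,y)=p$), and it remains only to prove $d(v,F)=1$, i.e.\ that $F$ reaches a neighbour of $v$. This is the crux of the whole proposition: convexity by itself does not force the convex set $F$ towards $v$ --- a priori $F$ could shrink to $\{y\}$ --- so one must exploit that each ball $N_{s-1}(w)$ extends towards $v$, as guaranteed by $d(v,w)\le s$. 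The plan here is to walk from $y$ towards $v$ along a geodesic, using the triangle (and quadrangle) conditions of the weakly modular graph $G$ to construct at each step a common neighbour that stays inside every $N_{s-1}(w)$ --- the auxiliary vertices $y,z$ of Figure~\ref{figBridged} --- until a vertex of $F$ adjacent to $v$ is reached. Carrying out this propagation, and checking that the two ``extreme'' far vertices $x,x'$ of Figure~\ref{figBridged} do not obstruct it, is the main technical obstacle.
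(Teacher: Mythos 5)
Your reduction step is sound: if $u$ is a neighbour of $v$ with $d(u,y)=p-1$ and $W:=N_s(v,G\setminus\{y\})\cap X\subseteq N_{s-1}(u)$, then splitting the paths defining $N_s(v,G\setminus\{u\})\cap X$ according to whether they meet $y$ does give $N_s(v,G\setminus\{u\})\cap X\subseteq N_{s-1}(u)$, so $u$ is an adjacent dominator. Your Helly case is also correct and is essentially the paper's argument (the Helly property applied to a pairwise-intersecting family consisting of the balls $N_{s-1}(w)$ together with a small ball around $v$ and one around $y$); the only difference is that you apply it once to jump directly to a neighbour of $v$, whereas the paper pulls the dominator one step closer to $v_i$ at a time and iterates. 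That is a legitimate small streamlining.

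The bridged case, however, is a genuine gap: you correctly identify that the ball-Helly property is unavailable, that $F:=N_{p-1}(y)\cap\bigcap_{w\in W}N_{s-1}(w)$ is convex and contains $y$, and that the whole problem is to show $F$ meets $N_1(v)$ — and then you stop, declaring the propagation ``the main technical obstacle.'' That obstacle is the entire content of the proposition for bridged graphs, and it is not routine. The paper's proof does real work here: it first reduces to the extremal set $A_0=\{x:\ d(x,v_i)=s,\ d(x,y_i)=s-1\}$ (convexity of $N_{s-1}(x)$ disposes of all other $x$), then shows for each $x\in A_0$ that $N_{s-1}(x)$ meets the clique $C$ of neighbours of $y_i$ in $I(v_i,y_i)$ by exhibiting an equilateral metric triangle $y_i,x_0,v_0$ and computing its size, and finally obtains a \emph{common} vertex of $C$ in all these balls by a maximality/exchange argument (take $y\in C$ in the largest number of balls $N_{s-1}(x)$ and derive a contradiction from a hypothetical $x'$ with $y\notin N_{s-1}(x')$, using convexity of $N_{s-1}(x)$ and $N_{s-1}(x')$ and the fact that $z,y'\in I(y,v_0')$ forces an edge $zy'$). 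None of this appears in your proposal, and ``walking from $y$ towards $v$ using the triangle and quadrangle conditions'' is not by itself a proof that each step can be taken simultaneously inside every ball $N_{s-1}(w)$ — that simultaneity is exactly what the paper's maximality argument is for. Until this part is carried out, the bridged half of the proposition is unproved.
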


\begin{proof}
The second assertion immediately follows from Proposition~\ref{hyp_dismantlable}.
Thus, we only need to prove that any Helly or bridged graph in $\CW(s,s-1)$ is $(s,s-1)^*$-dismantlable.

First, let $G$ be an $(s,s-1)$-dismantlable Helly graph. Let
$v_i$ be the $i$th vertex in an $(s,s-1)$-dismantling order and let
$y_i$ be the vertex eliminating $v_i.$ Suppose that
$k:=d(v_i,y_i)\ge 2.$ We assert that we can always eliminate $v_i$
with a vertex $y'_i$ adjacent to $y_i$ and located at distance $k-1$
from $v_i.$ Then repeating the same  reasoning with $y'_i$ instead
of $y_i,$ we will eventually arrive at a vertex of $I(v_i,y_i)$
adjacent to $v_i$ which still eliminates $v_i.$ Set $A:=(X_i\cap
N_s(v_i))\setminus \{v_i,y_i\}.$ For each vertex $x\in A,$ consider
the ball $N_{s-1}(x)$ of radius $s-1$ centered at $x.$ Consider also
the balls $N_{k-1}(v_i)$ and $N_1(y_i).$ We assert that the balls of
the resulting collection pairwise intersect. Indeed, any two balls
centered at vertices of $A$ intersect in $y_i.$ The ball $N_1(y_i)$
intersects  any ball centered at $A$ in $y_i$. The ball
$N_{k-1}(v_i)$ intersects any ball centered at a vertex $x\in A$
because $d(v_i,x)\le s\le k-1+s-1.$ Finally, $N_{k-1}(v_i)$ and
$N_1(y_i)$ intersect because  $d(v_i,y_i)=k=k-1+1.$ By Helly
property, the balls of this collection intersect in a vertex $y'_i.$
Since $y'_i$ is at distance at most $k-1$ from $v_i$ and at distance
at most 1 from $y_i,$ from the equality $d(v_i,y_i)=k$ we
immediately deduce that $y'_i$ is a neighbor of $y_i$ located at
distance $k-1$ from $v_i.$ This establishes the
$(s,s-1)^*$-dismantling property for Helly graphs in $\CW(s,s-1)$.

Now, suppose that $G$ is a bridged graph and let the vertices
$v_i,y_i$ and the set $A$ be defined as in the previous case. Since
$G$ is bridged, the convexity of the ball $N_{k-1}(v_i)$ implies that
the set $C$ of neighbors of $y_i$ in the interval $I(v_i,y_i)$ induces
a complete subgraph. Pick any vertex $x\in A.$ Clearly, $d(x,y_i)\le
s-1$ and $d(x,v_i)\le s.$ If $d(x,v_i)\le s-1,$ then $v_i,y_i\in
N_{s-1}(x)$ and from the convexity of the ball $N_{s-1}(x)$ we
conclude that $I(v_i,y_i)\subset N_{s-1}(x).$ Hence, in this case,
$d(x,y)\le s-1$ for any $y\in I(v_i,y_i),$ in particular, for any
vertex of $C.$ Analogously, if $d(x,y_i)<s-1,$ then $d(x,y)\le s-1$
for any vertex $y\in C.$ Therefore the choice of the vertex $y'_i$ in
$C$ depends only of the vertices of the set $A_0=\{ x\in A: d(x,v_i)=s
\mbox{ and } d(x,y_i)=s-1 \}.$

Pick any
vertex $x\in A_0.$  If $I(x,y_i)\cap I(y_i,v_i)\ne \{ y_i\},$ then
$y_i$ has a neighbor $y'$ in this intersection located at distance
$s-2$ from $x.$ Since $y'\in C$ and $C$ is a complete subgraph, then
$d(y,x)\le s-1$ for any $y\in C.$ Therefore we can discard all such
vertices of $A_0$ from our future analysis and suppose
without loss of generality that $I(x,y_i)\cap I(y_i,v_i)=\{ y_i\}$
for any $x\in A_0.$ For $x\in A_0,$ let $x_0$ be a furthest from $x$
vertex of $I(x,y_i)\cap I(x,v_i).$ Let $v_0$ be a furthest from
$v_i$ vertex of $I(v_i,x_0)\cap I(v_i,y_i).$ Since $I(x,y_i)\cap
I(y_i,v_i)=\{ y_i\}$ and $G$ is bridged, the vertices $y_i,x_0,v_0$
define an equilateral metric triangle sensu \cite{BaCh_weak,BaCh_survey}:
$d(y_i,x_0)=d(x_0,v_0)=d(v_0,y_i)=:m.$ Moreover, any vertex of
$I(v_0,y_i)$ is located at distance $m$ from $x_0$  and therefore
at distance  $s-1$ from $x,$ showing, in particular, that
$N_{s-1}(x)\cap C\ne\emptyset$ for any $x\in A_0$.
From the definition of $x_0$ and $v_0$
we conclude that $m+d(x_0,x)=s-1,
d(x,x_0)+m+d(v_0,v_i)=s,$ and $d(v_i,v_0)+m\le s-1.$ Whence
$d(v_i,v_0)=1,$ yielding $d(v_i,y_i)=m+1.$

Pick in $C$ a vertex $y$ belonging to a maximum number of balls
$N_{s-1}(x)$ centered at $x\in A_0.$ Suppose by way of contradiction
that $A_0$ contains a vertex $x'$ such that $y\notin N_{s-1}(x')$ (for an illustration, see
Fig.~\ref{figBridged}).
Since $d(x',y_i)=s-1$ and $y$ is adjacent to $y_i,$ we have
$d(x',y)=s.$ Let $y'$ be a vertex of $C$ belonging to $N_{s-1}(x')$
(such a vertex $y'$ exists because of the remark in above paragraph).
Let $v'_0$ be the neighbor of $v_i$ defined with respect to $x'$ in
the same way as $v_0$ was defined for $x$. Then all vertices of
$I(v'_0,y')$ are located at distance $s-1$ from $x'.$ We can suppose
that there exists a vertex $x\in A_0$ such that $y\in N_{s-1}(x)$
but $y'\notin N_{s-1}(x),$ otherwise we will obtain a contradiction
with the choice of $y.$ Since the balls $N_{s-1}(x)$ and
$N_{s-1}(x')$ are convex, the intervals $I(v_0,y_i)$ and
$I(v'_0,y_i)$ belong to these balls, respectively, whence
$d(v_0,y)=d(v'_0,y')=m-1$ but $d(v_0,y')=d(v_0',y)=m.$ Let $z$
be a neighbor of $y$ in $I(v_0,y).$
Since $z,y'\in I(y,v'_0)$ and $G$ is bridged, the vertices $z$ and
$y'$ are adjacent. Hence $y'\in I(v_0,y_i),$ yielding $d(x,y')=s-1,$
contrary to our assumption that $y'\notin N_{s-1}(x).$
This contradiction shows that $C$
contains a vertex belonging to all balls $N_{s-1}(x)$ centered at
vertices of $A_0,$ thus establishing the $(s,s-1)^*$-dismantling
property for bridged graphs in $\CW(s,s-1)$.
\end{proof}


\begin{proposition} \label{hyps>2s'} If $s\ge 2s',$ then any graph $G$ of $\CW(s,s')$ is
$(s-1)$-hyperbolic.
\end{proposition}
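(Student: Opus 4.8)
The plan is to follow the inductive scheme of Proposition~\ref{hyp_dismantlable}: fix a quadruplet $u,v,x,y$ drawn as in Fig.~\ref{fig0} with $\xi\le\eta$, and argue by induction on the total distance sum $S(u,v,x,y)$ that the two largest of the three distance sums differ by at most $2(s-1)$, i.e. $\xi\le s-1$. As in Proposition~\ref{hyp_dismantlable}, if some pairwise distance among $u,v,x,y$ is at most $s-1$, the realization in Fig.~\ref{fig0} already gives $\xi\le s-1$, so I may assume that all six pairwise distances are at least $s$. Fixing an $(s,s')$-dismantling order $v_1,\dots,v_n$, I let $u$ be the vertex of the quadruplet with smallest rank $\alpha(u)=i$ and let $c=v_j$ (with $j>i$) be the vertex eliminating $u$, so that $d(u,c)\le s'$, $c\in X_i$, and $c\notin\{v,x,y\}$ since $d(u,c)\le s'<s$. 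The whole argument then reduces to the single claim that $c\in I(u,w)$ for each $w\in\{v,x,y\}$; granting this, replacing $u$ by $c$ finishes the proof.

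To see that $c\in I(u,w)$, I would fix a shortest $(u,w)$-path $P$ and look at its vertex $p$ at distance $s$ from $u$ (which is $w$ itself when $d(u,w)=s$, and then lies in $X_i$ trivially). If the prefix $P[u,p]$ passes through $c$, then $c\in I(u,p)\subseteq I(u,w)$ and we are done. Otherwise $P[u,p]$ is a path of length $s$ avoiding $c$, so \emph{provided $p\in X_i$}, the dismantling condition $N_s(u,G\setminus\{c\})\cap X_i\subseteq N_{s'}(c)$ yields $d(c,p)\le s'$; but then $s=d(u,p)\le d(u,c)+d(c,p)\le 2s'\le s$, which forces every inequality to be an equality, in particular $d(u,c)+d(c,p)=d(u,p)$, i.e. again $c\in I(u,p)\subseteq I(u,w)$. (When $s>2s'$ this second case is simply impossible, and the prefix is forced through $c$.) Thus the only thing left to justify is that the distance-$s$ vertex $p$ may be chosen in $X_i$, i.e. $\alpha(p)\ge\alpha(u)$; this is the crux of the proof.

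This monotonicity statement is the analogue of Proposition~\ref{monotone}, and I expect the exchange technique developed there to be the main obstacle. The plan is to take the $s$-net $N(u,w)=(u=x_0,x_1,\dots,x_k,x_{k+1}=w)$ of $P$ and, assuming for contradiction that $\alpha(x_1)<\alpha(u)$, to repeatedly replace each local minimum of $\alpha$ along the net by the vertex eliminating it, exactly as in Proposition~\ref{monotone}. Each such exchange turns the two incident links into links of length at most $s'$, so after finitely many exchanges one obtains a $(u,w)$-walk with $k+1$ links, each of length at most $s'$, whose interior vertices all have rank larger than $\alpha(u)$. The point is that for $s\ge 2s'$ the resulting estimate is far tighter than in Proposition~\ref{monotone}: the triangle inequality gives $d(u,w)\le (k+1)s'$, while the net forces $d(u,w)>ks$; hence $k(s-s')<s'$, and since $s\ge 2s'$ gives $s-s'\ge s'$, this yields $ks'\le k(s-s')<s'$, i.e. $k<1$. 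As $d(u,w)>s$ forces $k\ge 1$, this is a contradiction, so monotonicity holds already with threshold $s$ (rather than $s^2$), placing $p=x_1$ in $X_i$.

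Finally, with $c\in I(u,w)$ for every $w\in\{v,x,y\}$, I have $d(c,w)=d(u,w)-d(u,c)$, so replacing $u$ by $c$ decreases each of the three distance sums $d(u,v)+d(x,y)$, $d(u,x)+d(v,y)$, $d(u,y)+d(v,x)$ by exactly $d(u,c)\ge 1$. Consequently their pairwise differences, and in particular the quantity $\xi$, are unchanged, while the total distance sum drops: $S(c,v,x,y)=S(u,v,x,y)-3\,d(u,c)<S(u,v,x,y)$. The induction hypothesis applied to the quadruplet $c,v,x,y$ then gives that its two largest distance sums differ by at most $2(s-1)$, and since the differences are preserved the same holds for $u,v,x,y$. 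Hence $\xi\le s-1$ for every quadruplet, i.e. $G$ is $(s-1)$-hyperbolic.
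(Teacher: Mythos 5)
Your overall scheme coincides with the paper's: induct on the total distance sum, take $u$ to be the member of the quadruplet eliminated first, let $c$ be its eliminator, show that $c$ lies on a shortest path from $u$ to each of $v,x,y$ (via the dichotomy ``either the shortest path to the distance-$s$ point $p$ avoids $c$, whence $d(c,p)\le s'$ and $2s'\le s$ force $c\in I(u,p)$, or it passes through $c$''), and conclude that all three distance sums drop by exactly $d(u,c)$. All of that is correct and is what the paper does. You also correctly identify the crux: one must know that the distance-$s$ point $p=x_1$ of the net satisfies $\alpha(x_1)>\alpha(u)$, so that the dismantling inclusion applies to it.

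The gap is in your proof of that monotonicity statement. You assert that each exchange of a local minimum of the net by its eliminating vertex ``turns the two incident links into links of length at most $s'$.'' The dismantling condition does not give this. Write the unique local minimum as $x_k$ with eliminator $y_k$. If $y_k$ lies on the portion of $P(u,w)$ between $x_{k-1}$ and $x_k$, then $x_{k-1}$ need not belong to $N_s(x_k,G\setminus\{y_k\})$, so no $s'$-bound on $d(y_k,x_{k-1})$ is available; what one actually gets is $d(y_k,x_{k+1})=d(y_k,x_k)+\gamma\le s'$ (hence $d(y_k,x_k)\le s'-\gamma$) and therefore $d(y_k,x_{k-1})=s-d(y_k,x_k)\ge s-s'+\gamma>s'$. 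So after the exchange one of the two new links has length strictly greater than $s'$ (close to $s$), the invariant propagates no better in later exchanges, and your count $d(u,w)\le(k+1)s'$, hence $k<1$, collapses. This is exactly why Proposition~\ref{monotone} only obtains links of length $\le s-1$ and must assume $d(u,v)>s^2$; with $s'\le s/2$ the same bookkeeping gives nothing. The paper's proof of Proposition~\ref{hyps>2s'} does not run the exchange cascade at all: having shown that $x_k$ is the unique local minimum, it derives a contradiction directly from the possible positions of $y_k$ relative to $P(u,v)$ (off the path between $x_{k-1}$ and $x_{k+1}$: both links would be $\le s'$, forcing $d(x_{k-1},x_{k+1})\le 2s'\le s$, impossible; on the path: one of $d(y_k,x_{k-1})$, $d(y_k,x_{k+1})$ is simultaneously $\le s'$ by dismantling and $>s'$ by its location). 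To repair your argument, replace the exchange cascade by this direct case analysis; the rest of your proof then goes through.
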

\begin{proof}  First, similarly to Proposition \ref{monotone}, we prove
that if $d(u,v)\ge s$ and $\alpha(u)<\alpha(v),$ then the vertex $x_1$ of
the $s$-net $N(u,v)$ of any shortest $(u,v)$-path satisfies the
inequality $\alpha(x_1)>\alpha(u).$ Suppose by way of contradiction
that $\alpha(u)>\alpha(x_1).$ Then as in proof of Proposition
\ref{monotone} we conclude that $x_k$ is the unique local minimum of
$\alpha$ on $N(u,v):$ $\alpha(x_{k-1})>\alpha(x_k)<\alpha(x_{k+1}).$
Let $y_k$ be the vertex eliminating $x_k$ in the $(s,s')$-dominating
order.  If $y_k$ does not belong to the segment of $P(u,v)$  between $x_{k-1}$ and $x_k,$
then $d(x_{k-1},x_{k+1})\le d(x_{k-1},y_k)+d(y_k,x_{k+1})\le 2s',$
contrary to the assumption that $d(x_{k-1},x_{k+1})>s\ge 2s'.$ So
$y_k$ belongs to the subpath of $P(u,v)$ between $x_{k-1}$ and
$x_{k+1}.$ If $y_k$ belongs to the subpath comprised between $x_k$
and $x_{k+1},$ then the dismantling condition implies that
$d(y_k,x_{k-1})\le s',$ which is impossible because
$d(y_k,x_{k-1})=d(y_k,x_k)+s>2s'.$ The same contradiction is
obtained if $y_k$ belongs to the second half of the subpath
between $x_{k-1}$ and $x_k.$ Finally, if $y_k$ belongs to
the first half of this subpath, then $d(y_k,x_{k+1})\le s'$ by the
dismantling condition, contradicting the fact that the location of
$y_k$ on this subpath of $P(u,v)$ implies that $d(y_k,x_{k+1})>s'.$ This shows that
indeed $\alpha(x_1)>\alpha(u).$

To establish $(s-1)$-hyperbolicity of $G,$ as in the proof of
Proposition \ref{hyp_dismantlable} we pick any quadruplet of
vertices $u,v,x,y$ of $G$  and proceed by induction on the total
distance sum $S(u,v,x,y)=d(u,v)+d(u,x)+d(u,y)+d(v,x)+d(v,y)+d(x,y).$
Again, we can suppose that  the distances between any two vertices
of this quadruplet is at least $s,$ otherwise we are done. Let $u$ be
the vertex of our quadruplet occurring first in some
$(s,s')$-dismantling order of $G$. Pick three shortest paths
$P(u,v),P(u,x),$ and $P(u,y)$ and denote by $v_1,x_1,$ and $y_1$ their
respective vertices located at distance $s$ from $u.$ From first
part of our proof we infer that $u$ is eliminated before
$v_1,x_1,$ and $y_1.$ Let $u'$ be the vertex  eliminating $u.$ From the
$(s,s')$-dismantling condition we infer that $d(u,u')\le s'.$
Moreover, either $d(u',v_1)\le s'$ or $v_1 \notin N_s(u,G\setminus \{
u'\}).$ Since $d(u,v_1)=s\ge 2s',$ in both cases we conclude that
$u'$ belongs to a shortest $(u,v_1)$-path of $G$. Analogously, we
conclude that $u'$ lie on a shortest $(u,x_1)$-path and on a shortest
$(u,y_1)$-path. Therefore, if we replace in our quadruplet $u$
by $u',$ we will get a quadruplet with total distance sum
$S(u',v,x,y)=S(u,v,x,y)-3d(u,u')<S(u,v,x,y).$ By induction hypothesis,
the two largest distance sums of this quadruplet differ by at most $2(s-1).$
On the other hand, since $d(u,v)+d(x,y)=d(u',v)+d(x,y)+d(u,u')$,
$d(u,x)+d(v,y)=d(u',x)+d(v,y)+d(u,u')$, and
$d(u,y)+d(v,x)=d(u',y)+d(v,x)+d(u,u'),$ the two largest distance
sums of the quadruplet $u,v,x,y$ also differ by at most $2(s-1).$
Hence $G$ is $(s-1)$-hyperbolic.
\end{proof}

\section{Cop-win graphs for game with fast robber: class $\CW(s)$}

In this section, we specify the dismantling scheme provided by
Theorem~\ref{copwin} in order to characterize the graphs in which one
cop with speed $1$ captures a robber with speed $s\ge 2$.  First we
show that the graphs from $\CW(2)$ are precisely the dually chordal
graphs \cite{BraDraCheVol}.  Then we show that for  $s\ge 3$ the
classes $\CW(s)$ coincide with $\CW(\infty)$ and we provide a
structural characterization of these graphs.

\subsection{$\CW(2)$ and dually chordal graphs}

We start by showing that when the cop has speed $1$ and the robber has speed $s\ge 1$,
then the dismantling order in Theorem \ref{copwin} can be defined using the subgraphs $G_i=G(X_i).$

\begin{proposition}\label{prop-sans-X}
A graph $G$ is $(s,1)$-dismantlable if and only if the vertices of $G$
can be ordered $v_1, \ldots, v_n$ in such a way that for each vertex $v_i\ne v_n$
there exists a vertex $v_j$ with $j>i$ such that $N_s(v_i,
G_i\setminus\{v_j\})) \subseteq N_1(v_j,G_i).$
\end{proposition}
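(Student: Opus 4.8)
The plan is to prove the two implications separately, with essentially all of the content in the ``if'' direction. Throughout I keep the notation $X_i=\{v_i,\ldots,v_n\}$, $G_i=G(X_i)$, and I use repeatedly that adjacency is inherited by induced subgraphs, so that $N_1(v_j,G_i)=N_1(v_j)\cap X_i$. For the ``only if'' direction, suppose $v_1,\ldots,v_n$ is an $(s,1)$-dismantling order in the sense of Theorem~\ref{copwin}, with witness $v_j$ ($j>i$) satisfying $N_s(v_i,G\setminus\{v_j\})\cap X_i\subseteq N_1(v_j)$. Since $G_i$ is a subgraph of $G$, any path in $G_i\setminus\{v_j\}$ of length at most $s$ is also such a path in $G\setminus\{v_j\}$, so $N_s(v_i,G_i\setminus\{v_j\})\subseteq N_s(v_i,G\setminus\{v_j\})\cap X_i\subseteq N_1(v_j)$; as the left-hand side lies in $X_i$, it is contained in $N_1(v_j)\cap X_i=N_1(v_j,G_i)$. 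Thus the \emph{same} ordering witnesses the stated property, and this direction is immediate.

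For the ``if'' direction I would show that the same ordering is in fact an $(s,1)$-dismantling order, i.e. that the inclusion computed inside $G_i$ forces the stronger inclusion of Theorem~\ref{copwin}. For each $t<n$ fix a witness $v_{\phi(t)}$ (with $\phi(t)>t$) given by the hypothesis. The first observation is a structural consequence: the length-$0$ path gives $v_i\in N_s(v_i,G_i\setminus\{v_j\})\subseteq N_1(v_j,G_i)$, so $v_i\sim v_j$, and the length-$1$ paths $v_i,b$ (for $b\neq v_j$ a neighbour of $v_i$ in $G_i$) give $b\in N_1(v_j,G_i)$; together these say that $v_i$ is \emph{dominated} by $v_{\phi(i)}$ in $G_i$ in the classical sense. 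Hence for every $t$ the map $\rho_t\colon G_t\to G_{t+1}$ sending $v_t\mapsto v_{\phi(t)}$ and fixing all other vertices is a well-defined nonexpansive fold, and in particular it carries any walk to a walk of no greater length.

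Now fix $i<n$ with witness $v_j=v_{\phi(i)}$ and take any $w\in N_s(v_i,G\setminus\{v_j\})\cap X_i$ with $w\neq v_j$; the goal is $w\in N_1(v_j)$. Choose a $v_i$--$w$ path $P$ of length at most $s$ in $G$ avoiding $v_j$ and push it into $G_i$ by applying $\rho_1,\ldots,\rho_{i-1}$ in turn (skipping $\rho_t$ when $v_t$ is absent from the current walk), always keeping a $v_i$--$w$ walk of length at most $s$ avoiding $v_j$; this is legitimate because $v_i,w\in X_i$ are fixed by all these folds and because each fold introduces only the higher-indexed vertex $v_{\phi(t)}$. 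The single dangerous case is when, at some step $t$, the current walk still meets $v_t$ and $v_{\phi(t)}=v_j$, for then folding would insert the forbidden vertex. But precisely then the sub-walk from $v_t$ to $w$ lies in $G_t$, avoids $v_j$, and has length at most $s$, so the hypothesis at level $t$ yields $w\in N_s(v_t,G_t\setminus\{v_j\})\subseteq N_1(v_j,G_t)\subseteq N_1(v_j)$, and we stop. If no dangerous step ever occurs, then after the last fold the walk lies entirely in $G_i$, whence $w\in N_s(v_i,G_i\setminus\{v_j\})\subseteq N_1(v_j,G_i)$ by the hypothesis at level $i$. In all cases $w\in N_1(v_j)$, so the ordering satisfies the condition of Theorem~\ref{copwin} and $G$ is $(s,1)$-dismantlable.

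The crux, and the only place the argument could fail, is exactly this coincidence between a fold target and the current witness $v_j$: the point is that a robber ``detour'' through an already-eliminated vertex $v_t$ can never reach a vertex non-adjacent to $v_j$, since such a detour is controlled by the hypothesis at the lower level $t$, and the dichotomy (harmless fold versus direct application at level $t$) is what lets the parameter $s'=1$ collapse the punctured-ball-in-$G$ condition to the punctured-ball-in-$G_i$ condition. I expect the only routine bookkeeping to be the verification that the image of a walk under a fold remains a walk of length at most $s$ avoiding $v_j$, together with the standard passage from a walk of length at most $s$ to a path of length at most $s$ for membership in the ball.
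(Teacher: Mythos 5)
Your proof is correct, and the ``only if'' direction is the same one-line observation as in the paper. For the converse the two arguments rest on the same insight --- the punctured ball computed in $G$ can exceed the one computed in $G_i$ only through a detour via an already-eliminated vertex whose own witness is $v_j$, and such a detour is controlled by the hypothesis at that earlier level --- but the mechanics differ. The paper fixes $u\in N_s(v_i,G\setminus\{v_j\})\cap X_i$, locates the critical index $i_0$ at which a short $v_i$--$u$ connection avoiding $v_j$ survives in $G_{i_0}\setminus\{v_j\}$ but not in $G_{i_0+1}\setminus\{v_j\}$, notes that a shortest such path must pass through $v_{i_0}$, and rules out $v_{j_0}\neq v_j$ by exhibiting the length-$2$ path $(u,v_{j_0},v_i)$, which contradicts the choice of $i_0$; the level-$i_0$ hypothesis then gives $u\in N_1(v_j)$, exactly as in your ``dangerous'' case. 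You instead push the witnessing path forward through the explicit folds $\rho_1,\ldots,\rho_{i-1}$ and stop at the first fold whose target is $v_j$. This buys a few things: it makes the retraction structure explicit, it treats $s=1$ uniformly (the paper disposes of it separately), and it sidesteps the one slightly delicate point of the paper's argument, namely checking that $u$ and $v_i$ really are adjacent to $v_{j_0}$ when $v_{j_0}$ might itself lie on the path $\pi$ --- your dichotomy between a harmless fold and a direct appeal to the level-$t$ hypothesis avoids that case analysis altogether. The bookkeeping you defer (folds send walks to walks of no greater length fixing $X_{t+1}$, and a walk of length at most $s$ avoiding $v_j$ yields such a path) is indeed routine.
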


\begin{proof}
First, note that for any $i \leq j$, $N_1(v_j,G) \cap X_i =
N_1(v_j,G_i)$. Thus, if a graph $G$ is $(s,1)$-dismantlable, then any
$(s,1)$-dismantling order satisfies the requirement $N_s(v_i,
G_i\setminus\{v_j\})) \subseteq N_1(v_j,G_i).$
Conversely, consider an order $v_1, \ldots, v_n$ on the vertices of
$G$ satisfying this condition.    If $s = 1$, then $N_1(v_i, G_i\setminus\{v_j\})) =
N_1(v_i,G\setminus\{v_j\}) \cap X_i$ and thus our assertion is
obviously true. We now suppose that $s \geq 2$. By induction on $i,$ we will show that $N_s(v_i,G\setminus\{v_j\}) \cap X_i
\subseteq N_1(v_j)$. For $i = 1$, $G_i = G$ and thus the property
holds. Consider $i$ such that for any $i' < i$, the property is
satisfied. Pick any vertex $u \in N_s(v_i) \cap X_i$.  If the distance in $G_i\setminus\{v_j\}$
between $v_i$ and $u$ is at most $s,$ then $u \in
N_s(v_i,G_i\setminus\{v_j\}) \subseteq N_1(v_j)$ and we are
done. Otherwise, we can find a unique index $i_0 < i$ such that the distance between $v_i$
and $u$ in the graph $G_{i_0}\setminus\{v_j\}$ is at most $s$ and in the graph
$G_{i_0+1}\setminus\{v_j\}$ is larger than $s.$  Consider a shortest path
$\pi$ between $v_i$ and $u$ in
$G_{i_0}\setminus\{v_j\}$. From the choice of $i_0$, necessarily $v_{i_0}$ is a vertex
of $\pi$. Since the length of $\pi$ is at most $s,$ we deduce that
$d_{G_{i_0}}(u,v_{i_0}) \leq s$ and $d_{G_{i_0}}(v_i,v_{i_0}) \leq
s$. By the induction hypothesis, there exists $j_0 > i_0$ such
that $N_s(v_{i_0}, G_{i_0}\setminus\{v_{j_0}\})) \subseteq
N_1(v_{j_0})$. If $j_0 \neq j$, then there exists a path
$(u,v_{j_0},v_i)$ of length $2$ between $u$ and $v_i$ in $G_{j_0}.$ Since $j_0>i_0,$
we obtain a contradiction with the definition of $i_0.$
Hence $j_0 = j$, and, by our induction
hypothesis, $u \in N_s(v_{i_0}, G_{i_0}\setminus\{v_{j}\})) \subseteq
N_1(v_{j})$, and we are done.
\end{proof}

Analogously to Theorem 3 of Clarke \cite{Clarke} for the witness
version of the game, it can be easily shown that, for any $s$, the
class $\CW(s)$ is closed under retracts:

\begin{proposition}\label{prop-retract}
If $G \in \CW(s)$ and $G'$ is a retract of $G$, then $G'
\in \CW(s)$.
\end{proposition}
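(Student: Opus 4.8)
The plan is to use a \emph{shadow} (image) strategy: a cop playing in $G'$ will mimic, through the retraction $\varphi\colon G\to G'$, a winning cop in the full graph $G$. First I would record two elementary facts about the retract $G'$. Since $\varphi$ is idempotent and nonexpansive and fixes $G'$ pointwise, $G'$ is an isometric induced subgraph of $G$: for $u,v\in G'$, a shortest $(u,v)$-path in $G$ maps under $\varphi$ to a $(u,v)$-walk inside $G'$ of no greater length, whence $d_{G'}(u,v)=d_G(u,v)$. Consequently every move or trajectory that a player uses inside $G'$ (a path of length at most $s$ for $\mathcal R$, a step of length at most $1$ for $\mathcal C$) is also admissible in $G$.

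Next I would set up the simulation. Fix a parsimonious positional winning strategy $\sigma$ for $\mathcal C$ in $G$ (which exists by the remarks in Section~\ref{sec:prelim}). Confronting an arbitrary robber trajectory $r_1,r_2,\ldots$ with paths $\pi_1,\pi_2,\ldots$, all contained in $G'$, the cop in $G'$ maintains an auxiliary ``virtual'' position $c_k^G$ in $G$: it starts with $c_1^G:=c_1'$ (its own initial vertex, which lies in $G'$), computes $c_{k+1}^G:=\sigma(c_k^G,r_k)$, and then actually moves in $G'$ to $c_{k+1}':=\varphi(c_{k+1}^G)$, thereby preserving the invariant $c_k'=\varphi(c_k^G)$. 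This move is legal in $G'$, since $\sigma$ has speed $1$ and $\varphi$ is nonexpansive: $d_{G'}(c_k',c_{k+1}')=d_G(\varphi(c_k^G),\varphi(c_{k+1}^G))\le d_G(c_k^G,c_{k+1}^G)\le 1$.

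The crucial point, which I expect to be the main obstacle, is to verify that the \emph{same} robber trajectory is legal in the virtual $G$-game, i.e. that each path $\pi_k\subseteq G'$ avoids the virtual cop position $c_{k+1}^G$. Here I would argue by cases on $c_{k+1}^G$: if $c_{k+1}^G\notin G'$, then $\pi_k\subseteq G'$ trivially avoids it; while if $c_{k+1}^G\in G'$, then $c_{k+1}^G=\varphi(c_{k+1}^G)=c_{k+1}'$, which $\pi_k$ avoids by the rules of the real $G'$-game. Hence the virtual play is a genuine play of the fast-robber game in $G$ against $\sigma$, so $\mathcal R$ is captured there at some step $m$. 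It then remains to transfer this capture back to $G'$ using the two ways a capture can occur (cf. conditions (i) and (ii) in the proof of Theorem~\ref{copwin}). In mode (i), $d_G(c_m^G,r_m)\le 1$ gives, using $\varphi(r_m)=r_m$ and nonexpansiveness, $d_{G'}(c_m',r_m)=d_G(\varphi(c_m^G),r_m)\le d_G(c_m^G,r_m)\le 1$, so the parsimonious cop captures in $G'$. In mode (ii), $\pi_m$ passes through $c_{m+1}^G$, forcing $c_{m+1}^G\in G'$ and hence $c_{m+1}^G=c_{m+1}'$, so the robber's real trajectory in $G'$ runs into the real cop. Since the robber trajectory in $G'$ was arbitrary, the shadow strategy is winning, and therefore $G'\in\CW(s)$.
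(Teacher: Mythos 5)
Your proof is correct, and it is exactly the standard ``shadow strategy'' argument that the paper has in mind: the paper gives no proof of Proposition~\ref{prop-retract}, merely asserting it can be shown analogously to Clarke's Theorem~3 for the witness game, and that classical argument is precisely your simulation of a winning cop in $G$ via the retraction $\varphi$ (including the key verification that the robber's $G'$-trajectory remains legal in the virtual $G$-game). The only cosmetic remark is that your capture mode (ii) is vacuous given your legality argument, and that the resulting strategy uses memory, which is harmless since the definition of $\CW(s)$ does not require positionality (and the paper's preliminaries note that a winning strategy yields a positional one).
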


Recall that a graph $G$
is called {\it dually chordal} \cite{BraDraCheVol} if its clique
hypergraph (or, equivalently, its ball hypergraph) is a hypertree,
i.e., it satisfies the Helly property and its line graph is
chordal (see the Berge's book on hypergraphs \cite{Be} for these two
definitions). Dually chordal graphs are equivalently defined as the graphs
$G$ having a spanning tree $T$ such that any maximal clique or any
ball of $G$ induces a subtree of $T.$ Finally, dually chordal graphs
are exactly the graphs $G=(V,E)$ admitting a maximum neighborhood
ordering of its vertices. A vertex $u\in N_1(v)$ is a {\it maximum
  neighbor} of $v$ if for all $w\in N_1(v)$ the inclusion
$N_1(w)\subseteq N_1(u)$ holds. The ordering $\{ v_1,\ldots,v_n\}$ is a
{\it maximum neighborhood ordering} (\emph{mno} for short) of $G$
\cite{BraDraCheVol}, if for all $i<n,$ the vertex $v_i$ has a maximum
neighbor in the subgraph $G_i$ induced by the vertices $X_i=\{
v_i,v_{i+1},\ldots,v_n\}$. Dually chordal graphs comprise strongly
chordal graphs, doubly chordal, and interval graphs as subclasses and can be
recognized in linear time. Any
graph $H$ can be transformed into a dually chordal graph by adding a
new vertex $c$ adjacent to all vertices of $H.$


\begin{theorem} \label{dually_chordal} For a graph $G=(V,E),$ the following conditions are equivalent:
\begin{itemize}
\item[(i)] $G\in \CW(2)$;
\item[(ii)] $G$ is $(2,1)$-dismantlable;
\item[(iii)] $G$ admits an mno ordering;
\item[(iv)] $G$ is dually chordal.
\end{itemize}
\end{theorem}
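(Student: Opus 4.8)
The plan is to prove the four-way equivalence by disposing of the ``free'' links first and concentrating the work on a single combinatorial equivalence. The equivalence (i)~$\Leftrightarrow$~(ii) is immediate from Theorem~\ref{copwin} applied with $s=2$ and $s'=1$. The equivalence (iii)~$\Leftrightarrow$~(iv) is exactly the characterization of dually chordal graphs recalled just before the statement (from \cite{BraDraCheVol}): a graph is dually chordal if and only if it admits a maximum neighborhood ordering. Consequently, the entire content of the theorem reduces to establishing (ii)~$\Leftrightarrow$~(iii), i.e.\ that $(2,1)$-dismantlability coincides with the existence of an \emph{mno} ordering, and I would in fact show that \emph{the same vertex order with the same witnesses} serves both roles.

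For this core step I would first pass from the punctured-ball condition of $(2,1)$-dismantlability to its localized form given by Proposition~\ref{prop-sans-X}: $G$ is $(2,1)$-dismantlable if and only if its vertices can be ordered $v_1,\ldots,v_n$ so that for each $v_i\neq v_n$ there is $v_j$ with $j>i$ and $N_2(v_i,G_i\setminus\{v_j\})\subseteq N_1(v_j,G_i)$. The key elementary observation is a decomposition of the radius-$2$ ball in $G_i$ with $v_j$ deleted: writing $H:=G_i\setminus\{v_j\}$, one has $N_1(x,H)=N_1(x,G_i)\setminus\{v_j\}$ for every $x$, hence $N_2(v_i,H)=\bigcup_{x\in N_1(v_i,G_i)\setminus\{v_j\}}\bigl(N_1(x,G_i)\setminus\{v_j\}\bigr)$. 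Since $v_j\in N_1(v_j,G_i)$ absorbs the deleted puncture, the inclusion $N_2(v_i,G_i\setminus\{v_j\})\subseteq N_1(v_j,G_i)$ is equivalent to demanding $N_1(x,G_i)\subseteq N_1(v_j,G_i)$ for every $x\in N_1(v_i,G_i)\setminus\{v_j\}$; and this also holds trivially for $x=v_j$, so it amounts to $N_1(x,G_i)\subseteq N_1(v_j,G_i)$ for all $x\in N_1(v_i,G_i)$.

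From here both directions of (ii)~$\Leftrightarrow$~(iii) fall out. Taking $x=v_i$ in the decomposition and using $v_i\in N_2(v_i,G_i\setminus\{v_j\})$ forces $v_i\in N_1(v_j,G_i)$, and since $j>i$ gives $v_i\neq v_j$, this means $v_j\in N_1(v_i,G_i)$, so the witness is a genuine neighbor of $v_i$ in $G_i$. The remaining inclusions $N_1(x,G_i)\subseteq N_1(v_j,G_i)$ for all $x\in N_1(v_i,G_i)$ are precisely the statement that $v_j$ is a maximum neighbor of $v_i$ in $G_i$. Thus a $(2,1)$-dismantling order (read through Proposition~\ref{prop-sans-X}) is an \emph{mno} ordering and conversely, with the same ordering and the same dominating vertices.

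I expect the only delicate point to be the bookkeeping around the puncture $v_j$ in the ball decomposition: one must check that the trivial membership $v_j\in N_1(v_j,G_i)$ exactly absorbs the vertex deleted by the puncture, so that the dismantling inclusion collapses cleanly to the family of maximum-neighbor inclusions, and that the witness $v_j$ is genuinely forced to be adjacent to $v_i$ rather than merely being at distance at most $2$. Beyond this, everything is a routine matching of definitions; no induction is required apart from what is already packaged inside Proposition~\ref{prop-sans-X}.
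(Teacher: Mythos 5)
Your proposal is correct and follows essentially the same route as the paper: both reduce the theorem to (ii)~$\Leftrightarrow$~(iii) via Theorem~\ref{copwin} and the cited result of \cite{BraDraCheVol}, and both rest on the observation that the punctured-$2$-ball inclusion at step $i$ is exactly the statement that $v_j$ is a maximum neighbor of $v_i$ in $G_i$. The only (cosmetic) difference is that you run both directions through Proposition~\ref{prop-sans-X} and a single local set identity, whereas the paper verifies (iii)$\Rightarrow$(ii) by a one-line comparison of $N_2(v_i,G_i)$ with $N_1(v_j,G_i)$ and proves (ii)$\Rightarrow$(iii) by an explicit induction on $|V|$ using that $G\setminus\{v_1\}$ is a retract --- the same content packaged as an induction rather than as your ball decomposition.
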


\begin{proof} Since $\CW(2)=\CW(2,1),$ the equivalence (i)$\Leftrightarrow$(ii) follows from Theorem \ref{copwin}. The equivalence
(iii)$\Leftrightarrow$(iv) is a result of \cite{BraDraCheVol}. Notice
  that $u$ is a maximum neighbor of $v$ in $G$ iff $N_2(v)=N_1(u).$
  Therefore, $\{ v_1,\ldots,v_n\}$ is a maximum neighborhood ordering
  of $G$ iff for all $i<n,$ $N_2(v_i,G_i)=N_1(v_j, G_i)$ for some
  $v_j, j>i$. Hence any mno ordering is a $(2,1)$-dismantling
  ordering, establishing (iii)$\Rightarrow$(ii).  Finally, by
  induction on the number of vertices of $G$ we will show that any
  $(2,1)$-dismantling ordering $\{ v_1,\ldots,v_n\}$ of the vertex set
  of $G$ is an mno, thus (ii)$\Rightarrow$(iii).  Suppose that
  $N_2(v_1,G\setminus\{ u\})\subset N_1(u)$ for some $u:=v_j, j>1.$
  Then $u$ is adjacent to $v_1$ and to all neighbors of $v_1.$ Since
  for any neighbor $w\ne u$ of $v_1$ the ball $N_1(w)$ is contained in
  the punctured ball $N_2(v_1,G\setminus\{ u\}),$ we conclude that
  $N_1(w)\subseteq N_1(u),$ i.e., $u$ is a maximum neighbor of $v_1.$
  The graph $G'$ obtained from $G$ by removing the vertex
  $v_1$ is a retract, and therefore an isometric subgraph of $G$.
  Thus for any vertex $v_i, i > 1$, by what has been noticed above (Proposition~\ref{prop-sans-X}),
  the intersection of a ball (or of a punctured ball) of $G$ centered
  at $v_i$ with the set $X_2 = \{ v_2,\ldots,v_n\}$ coincides with the
  corresponding ball (or punctured ball) of the graph $G' = G(X_2)$
  centered at the same vertex $v_i$.  Therefore $\{ v_2,\ldots,v_n\}$
  is a $(2,1)$-dismantling ordering of the graph $G'$. By induction
  assumption, $\{ v_2,\ldots,v_n\}$ is an mno of $G'$. Since $v_1$ has
  a maximum neighbor in $\{ v_2,\ldots,v_n\},$ we conclude that $\{
  v_1,v_2,\ldots v_n\}$ is a maximum neighborhood ordering of $G$.
\end{proof}

\subsection{$\CW(k), k\ge 3,$ and big brother graphs}

A {\it block} of a graph $G$ is a maximal by inclusion vertex
two-connected subgraph of $G$ (possibly reduced to a single edge). Two
blocks of $G$ are either disjoint or share a single vertex, called an
{\it articulation point}.  Any graph $G=(V,E)$ admits a
block-decomposition in the form of a rooted tree $T$: each vertex of
$T$ is a block of $G,$ pick any block $B_1$ as a root of $T,$ label
it, and make it adjacent in $T$ to all blocks intersecting it, then
label that blocks and make them adjacent to all nonlabeled blocks
which intersect them, etc. A block $B$ of $G$ is {\it dominated} if it
contains a vertex $u$ (called the {\it big brother} of $B$) which is
adjacent to all vertices of $B.$ A graph $G$ is a {\it big brother
graph}, if its block-decomposition can be represented in the form of a
rooted tree $T$ is such a way that (1) each block of $G$ is dominated
and (2) for each block $B$ distinct from the root $B_1$, the
articulation point between $B$ and its father-block dominates $B.$
Equivalently, $G$ is a big brother graph if its blocks can be ordered
$B_1,\ldots,B_r$ such that $B_1$ is dominated and, for any $i>1$, the
block $B_i$ is a leaf in the block-decomposition of $\cup_{j \leq i}
B_j$ and is dominated by the articulation point connecting $B_i$ to
$\cup_{j < i} B_j$ (we will call such a decomposition a
\emph{bb-decomposition} of $G$); see Fig.~\ref{bigBrother}(a) for
an example.

\begin{figure}[t]
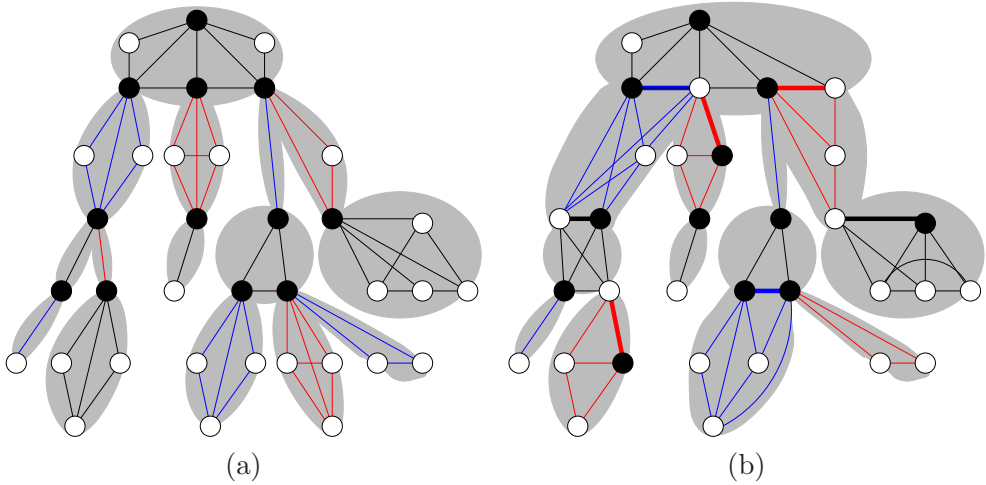

\begin{center}
\begin{tabular}{cc}
{\input{bigBrother.pstex_t}}&
{\input{cww.pstex_t}}\\
(a) & (b)
\end{tabular}
\end{center}
\caption{(a) A big brother graph. (b) A big two-brother graph.}
\label{bigBrother}
\end{figure}

\begin{theorem} \label{th-speed-3}
For a graph $G=(V,E)$ the following conditions are equivalent:
\begin{itemize}
\item[(i)] $G \in \CW(3)$; \item[(i$'$)] $G$ is $(3,1)$-dismantlable;
\item[(ii)] $G \in \CW(\infty);$ \item[(ii$'$)] $G$ is $(\infty,1)$-dismantlable;
\item[(iii)] $G$ is a big brother graph.
\end{itemize}
In particular, the classes of graphs $\CW(s), s\ge 3,$ coincide.
\end{theorem}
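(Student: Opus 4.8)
The plan is to prove the theorem as a cycle of implications, using Theorem~\ref{copwin} to translate between the game-theoretic classes and the dismantling conditions, and to handle the geometric content via the block structure. Since $\CW(s)=\CW(s,1)$, Theorem~\ref{copwin} gives immediately $(i)\Leftrightarrow(i')$ and $(ii)\Leftrightarrow(ii')$ for the respective speeds; moreover Proposition~\ref{prop-sans-X} lets me work with the cleaner reformulation of the dismantling condition using the induced subgraphs $G_i=G(X_i)$, namely $N_s(v_i,G_i\setminus\{v_j\})\subseteq N_1(v_j,G_i)$. So the real work is to establish the ring $(iii)\Rightarrow(ii')$, $(ii')\Rightarrow(i')$, and $(i')\Rightarrow(iii)$; the final sentence (coincidence of all $\CW(s)$ for $s\ge 3$) then follows because $\CW(\infty)\subseteq\CW(s)\subseteq\CW(3)$ for every $s\ge 3$ by monotonicity in the robber's speed, and the chain of equivalences pins both ends to the big brother condition.

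First I would show $(iii)\Rightarrow(ii')$: a big brother graph is $(\infty,1)$-dismantlable. Given a bb-decomposition $B_1,\ldots,B_r$, I peel off the blocks in reverse order, and within each non-root leaf block $B_i$ I eliminate all its non-articulation vertices first, using the big brother $u$ (the articulation point joining $B_i$ to the rest) as the dominating vertex $v_j$. The point is that once $B_i$ is a leaf of the current block-decomposition, removing a vertex $w\in B_i\setminus\{u\}$ separates $w$ from the rest of the graph only through $u$, so any path in $G_i\setminus\{u\}$ from $w$ stays inside $B_i$ and hence $N_\infty(w,G_i\setminus\{u\})\subseteq B_i\subseteq N_1(u,G_i)$ since $u$ dominates $B_i$. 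This is exactly the $(\infty,1)$-condition with the dominating vertex adjacent to, and eliminated after, $w$. Iterating contracts each block to its articulation point and finally the root block to a single vertex, using its big brother there. The implication $(ii')\Rightarrow(i')$ is trivial since $N_3(v_i,\cdot)\subseteq N_\infty(v_i,\cdot)$, so an $(\infty,1)$-dismantling order is a fortiori a $(3,1)$-dismantling order.

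The main obstacle is the converse $(i')\Rightarrow(iii)$: a $(3,1)$-dismantlable graph is a big brother graph. I would take a $(3,1)$-dismantling order $v_1,\ldots,v_n$ and analyze what the condition $N_3(v_i,G_i\setminus\{v_j\})\subseteq N_1(v_j,G_i)$ forces at each elimination step. The key structural claim to extract is that the dominating vertex $v_j$ must in fact be adjacent to \emph{every} vertex of $N_2(v_i,G_i)$, which is precisely a ``big brother'' type condition within the local neighborhood; the reach-$3$ punctured ball is what guarantees $v_j$ dominates not just neighbors of $v_i$ but a whole two-connected chunk around $v_i$, preventing $v_i$ from lying on any separating structure that is not collapsed onto $v_j$. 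The plan is to prove by induction on $n$ that the first eliminated vertex $v_1$ together with its dominator lies in a single block that is dominated, and that eliminating $v_1$ preserves both $(3,1)$-dismantlability (via the retract/Proposition~\ref{prop-retract} and Proposition~\ref{prop-sans-X} observation that punctured balls restrict correctly to $G_2$) and the big brother structure on $G\setminus\{v_1\}$. The delicate point is tracking how blocks merge or split when $v_1$ is removed and verifying that the articulation structure required in~$(iii)$ is exactly what the distance-$3$ condition enforces; I expect that the argument must rule out ``shortcut'' configurations where a vertex is dominated at distance $1$ but the surrounding cycle is too large, which is where induced $4$- and $5$-cycles would otherwise cause the distance-$2$ (dually chordal) analysis of Theorem~\ref{dually_chordal} to fail and genuinely distinguishes $\CW(3)$ from $\CW(2)$.

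Once the three implications close the cycle, I would conclude with the final sentence. For any $s\ge 3$ we have the sandwiching $\CW(\infty)\subseteq\CW(s)\subseteq\CW(3)$, since a robber with larger speed is at least as powerful, so a graph cop-win against the faster robber is cop-win against the slower one; the established equivalence $(i)\Leftrightarrow(ii)$ forces $\CW(3)=\CW(\infty)$, and the sandwich then squeezes every intermediate $\CW(s)$ to coincide with both. Hence all classes $\CW(s)$, $s\ge 3$, are equal and are exactly the big brother graphs.
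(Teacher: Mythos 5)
Your framing of the easy parts is sound: $(i)\Leftrightarrow(i')$ and $(ii)\Leftrightarrow(ii')$ from Theorem~\ref{copwin}, the triviality of $(ii')\Rightarrow(i')$, and the final sandwich $\CW(\infty)\subseteq\CW(s)\subseteq\CW(3)$ all check out. Your route for $(iii)\Rightarrow(ii')$ is also correct and genuinely different from the paper: the paper proves $(iii)\Rightarrow(i)\&(ii)$ by exhibiting a cop strategy (move toward the robber; the cop's position tracks articulation points down the block tree), whereas you build an $(\infty,1)$-dismantling order directly by peeling leaf blocks, using that any path from $w\in B_i\setminus\{a\}$ avoiding the articulation point $a$ stays in $B_i\subseteq N_1(a)$. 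That is a clean alternative.

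The genuine gap is in $(i')\Rightarrow(iii)$, which is where almost all of the work lies, and your plan does not contain the ideas that make it go through. First, your single induction ``remove $v_1$, extend the bb-decomposition of $G\setminus\{v_1\}$'' silently assumes that the local condition $N_3(v_1,G\setminus\{u\})\subseteq N_1(u)$ controls how $v_1$'s block sits in the \emph{rooted} tree structure of $G\setminus\{v_1\}$; it does not. Note that $(3,1)$-dismantlability does not even locally imply that $u$ dominates the whole component of $G\setminus\{u\}$ containing $v_1$ (a vertex at distance $4$ from $v_1$ in $G\setminus\{u\}$ need not lie in $N_3(v_1,G\setminus\{u\})$), so the passage from the radius-$3$ condition to the global block statement is exactly the theorem's content. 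The paper's proof splits into two cases. When $G$ is not two-connected, it does induction on blocks (not on $v_1$) and needs a separate argument showing that at most one leaf block of the block tree can fail to be dominated by its articulation point: if two leaf blocks both fail, the paper constructs an explicit escape strategy for a speed-$3$ robber oscillating between them, contradicting $(i)$ --- note this uses the game formulation, not the dismantling order, and it is what lets one \emph{root} the tree correctly. When $G$ is two-connected, the paper shows the dominator $u$ of $v_1$ must dominate all of $G$, via a careful analysis proving that $u$ is the only possible articulation point of $G\setminus\{v_1\}$ and then using length-$3$ paths through the big brother of $G\setminus\{v_1\}$ to force every vertex into $N_1(u)$. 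Neither the case split, nor the escape-strategy argument, nor the articulation-point analysis appears in your plan; ``tracking how blocks merge or split when $v_1$ is removed'' is precisely the step you would not be able to complete as stated.
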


\begin{proof} The equivalences (i)$\Leftrightarrow$(i$'$) and (ii)$\Leftrightarrow$(ii$'$) are particular cases of Theorem \ref{copwin}.
Next we will establish (iii)$\Rightarrow$(i)$\&$(ii), i.e., that any
big brother graph $G$ belongs to $\CW(s)$ for all $s\ge 3$. Let
$B_1,\ldots,B_r$ be a bb-decomposition of $G$. We consider the
following strategy for the cop. At the beginning of the game, we
locate the cop at the big brother of the root-block $B_1$. Now, at
each subsequent step, the cop moves to the neighbor of his current
position that is closest to the position of the robber. Notice the
following invariant of the strategy: the position of the cop will
always be at the articulation point of a block $B$ on the path of $T$
between the previous block hosting $\mathcal C$ and the current block
hosting $\mathcal R$.  This means that, since $\mathcal R$ cannot
traverse this articulation point without being captured, $\mathcal R$
is restricted to move only in the union of blocks in the subtree
rooted at $B.$ Now, if before the move of  the cop, $\mathcal C$ and
$\mathcal R$ occupy their positions in the same block, then
$\mathcal C$ captures $\mathcal R$ at the next move. Otherwise, the
next move will increase the distance in $T$ between the root and the
block hosting $\mathcal C$. Therefore after at most diameter of $T$
rounds, $\mathcal R$ and $\mathcal C$ will be located in the same
block, and thus the cop captures the robber at next move. This shows
that (iii)$\Rightarrow$(i)$\&$(ii).

The remaining part of the proof is devoted to the implication
(i)$\&$(i$'$)$\Rightarrow$(iii). Let $G$ be a graph of $\CW(3).$
Notice first that for any articulation point $u$ of $G$, and any
connected component $C$ of $G \setminus \{u\}$, the graph induced by
$C \cup \{u\}$ also belongs to $\CW(3)$. Indeed, this follows by
noticing that $G(C \cup \{u\})$ is a retract of $G$ (this retraction
is obtained by mapping all vertices outside $C$ to $u$) and that
$\CW(3)$ is closed under retracts by
Proposition~\ref{prop-retract}.  To prove that a graph $G=(V,E)\in
\CW(3)$ is a big brother graph, we will proceed by induction on the
number of vertices of $G.$ If $G$ has one or two vertices, the result
is obviously true. For the inductive step, we distinguish two cases,
depending if $G$ is two-connected or not.

\vspace{1ex}
\noindent{\bf Case 1:}  $G$ is not two-connected.

Since each block of $G$ has strictly less vertices than $G$, by
induction hypothesis each block is a big brother graph, i.e.,
it has a dominating vertex. First suppose that the
block-decomposition of $G$ has a leaf $B$ such that the articulation
point $a$ of $B$ separating $B$ from the rest of $G$ is a big brother
of $B.$ Let $G'$ be the subgraph of $G$ induced by all blocks of $G$
except $B,$ i.e., $G'=G(V\setminus (B\setminus\{a\})).$ Since $G'\in
\CW(3)$ by what has been shown above, from the induction hypothesis we
infer that $G'$ is a big brother graph. Consequently, there exists a
bb-decomposition $B_1, \ldots, B_r$ of $G'$. Then, $B_1, \ldots, B_r,
B$ is a bb-decomposition of $G$ and thus, $G$ is a big brother
graph. Suppose now that for any leaf in the block-decomposition of
$G$, the articulation point of the corresponding block does not
dominate it. Pick two leaves $B_1$ and $B_2$ in the
block-decomposition of $G$ and consider their unique articulation
points $a_1$ and $a_2$ ($a_i$ disconnects $B_i$ from the rest of
$G$). We claim that in this case, a robber that moves at speed $3$ can
always escape, which will contradicts the assumption that $G \in
\CW(3)$. Let $b_i$ be the dominating vertex of the block $B_i$, $i =
1, 2$ (by assumption, $b_i \neq a_i$). Consider now a vertex $c_i\in
B_i \setminus \{b_i\}$ which can be connected with $a_i$ by a 2-path
$(c_i,g_i,a_i)$ avoiding $b_i$ (such a vertex exists because $B_i$ is
two-connected and, by assumption, $a_i$ is not a dominating vertex of
$B_i$). Let $\pi$ be a shortest path from $a_1$ to $a_2$ in $G$ and
let $h_1$ and $h_2$ be the neighbors in $\pi$ of $a_1$ and $a_2,$
respectively.  Note that $h_i$ does not belong to $B_i,$ thus $a_i$ is
the only neighbor of $h_i$ in $B_i$. We now describe a strategy that enables
the robber to escape. Initially, if the cop is not in $B_1$, then the robber
starts in $c_1$; otherwise, he starts in $c_2$. Then the robber stays
in $c_i$, as long as the cop is at distance $\ge 2$ from
$c_i$. When the cop moves to a neighboring vertex of $c_i$, then the
robber goes to $h_i$ (either via the path $(c_i,b_i,a_i,h_i)$ or via
the path $(c_i,g_i,a_i,h_i)$) and then,  no matter how the cop moves,
he goes to $c_{3-i}$ using the
shortest path $\pi$. Now notice that when
$\mathcal R$ is in $h_i$, $\mathcal C$ is in $B_i \setminus \{a_i\}$
and thus he cannot capture the robber. When the robber is moving from
$h_i$ to $c_{3-i}$, he uses a shortest path $\pi$ of $G$: the cop
cannot capture him either because he is initially at distance $2$ from
the robber and he moves slower than the robber. Consequently, the cop
cannot capture the robber, contrary with the assumption $G \in
\CW(3)$.

\vspace{1ex}
\noindent{\bf Case 2:}  $G$ is  two-connected.

We must show that $G$ has a dominating vertex. Consider a
$(3,1)$-dismantling order $v_1, \ldots, v_n$ of the vertices of $G.$
Let $u$ be a vertex such that $N_3(v_1,G\setminus\{u\}) \subseteq
N_1(u).$ Since $u$ is a maximum neighbor of $v_1,$ the isometric
subgraph $G':= G(V\setminus\{v_1\})$ of $G$ also belongs to $\CW(3)$
because $v_2,\ldots, v_n$ is a $(3,1)$-dismantling ordering of $G'.$
By induction hypothesis, $G'$ is a big brother graph.  Again, we
distinguish two subcases, depending on the two-connectivity of
$G'$. First suppose that $G'$ is two-connected. Since $G'$ is a big
brother graph, it contains a dominating vertex $t.$ If $t$ is adjacent
to $v_1$, then $t$ dominates $G$ and we are done. Otherwise, consider
a neighbor $w \neq u$ of $v_1$. Any vertex $x\ne u$ of $G$ can be
connected to $v_1$ by the path $(v_1,w,t,x)$ of length 3 avoiding $u,$
thus $x$ belongs to the punctured ball $N_3(v_1,G \setminus \{u\}).$
As a consequence, $x$ is a neighbor of $u,$ thus $u$ dominates
$G$. Now suppose that $G'$ is not two-connected. We assert that $u$ is
the only articulation point of $G'.$ Assume by way of contradiction
that $w\ne u$ is an articulation point of $G'$ and let $x$ and $y$ be
two vertices of $G'$ such that all paths connecting $x$ to $y$ go
through $w.$ In $G$, $x$ and $y$ can be connected by two
vertex-disjoint paths $\pi_1$ and $\pi_2.$ Assume without loss of
generality that $w\notin \pi_1.$ Since $\pi_1$ cannot be a path of
$G',$ the vertex $v_1$ belongs to $\pi_1$. Let $\pi_1 = (x,
x_1,\ldots, x_k,v_1,y_l, \ldots, y_1)$. Since $x_k,y_l\in N_1(v_1)
\subseteq N_3(v_1,G\setminus\{u\}) \cup \{u\}\subseteq N_1(u)$,
necessarily $x_k, y_l\in N_1(u)$. If $x_k = u$ or $y_l=u$, then $(x,
x_1,\ldots, x_k,y_l,\ldots,y_1)$ is a path between $x$ and $y$ in
$G'\setminus\{w\},$ which is impossible. Thus $u$ is different from
$x_k$ and $y_l$ but adjacent to these vertices. But then $(x,
x_1,\ldots, x_k,u,y_l, \ldots, y_1)$ is a path from $x$ to $y$ in
$G'\setminus\{w\}$, leading again to a contradiction.  This shows that
$w$ cannot be an articulation point of $G'.$ Since $G'$ is not
two-connected, we conclude that $u$ is the only articulation point of
$G'$. By the induction hypothesis, any block $B$ of $G'$ is dominated
by some vertex $b.$ Suppose that $u$ does not dominate $G',$ for
instance, $u$ is not adjacent to some vertex $t$ of $B.$ Since $u$ is
the unique articulation point of $G'$ but is not an articulation point
of $G$, $v_1$ necessarily has a neighbor $w\neq u$ in $B$. Hence,
there is a path $(v_1,w,b,t)$ of length $3$ in $G\setminus\{u\}$ and
thus $t$ is a neighbor of $u$, because $t\in N_3(v_1,G\setminus\{u\})
\subseteq N_1(u)$.  Thus $u$ dominates $G'=G \setminus \{v_1\}$, and, since $v_1 \in N_1(u),$
$u$ dominates $G$ as well. This concludes the
analysis of Case 2 and the proof of the theorem. \end{proof}

\section{Cop-win graphs for game with witness: class $\bigcap_{k\ge 1} {\mathcal C}{\mathcal
  W}{\mathcal W}(k)$}\label{sec-witnessI}

In this and next sections, we investigate the structure of
$k$-winnable graphs. In analogy with big brother graphs, we
characterize here the graphs $G$ that are $k$-winnable for all $k \geq
1$, i.e., the graphs from the intersection $\bigcap_{k\ge 1} {\mathcal
  C}{\mathcal W}{\mathcal W}(k)$.

\subsection{Game with witness: preliminaries} In the $k$-witness version of the game, the cop first selects his
initial position and then the robber selects his initial position
which is visible to the cop. As in the classical cop and robber game, the
players move alternatively along an edge or pass. However, the robber
is visible to the cop only every $k$ moves. After having seen the
robber, the cop decides a sequence of his next $k$ moves (the first
move of such a sequence is called a \emph{visible} move).  The cop
captures the robber if they both occupy the same vertex at the same step
(even if the robber is invisible). In particular, the cop can capture
the visible robber if after the robber shows up, they occupy two
adjacent vertices of the graph.  Since we are looking for winning
strategies for the cop, we may assume that the robber knows the cop's
strategy, i.e., after each visible move, the robber knows the next
$k-1$ moves of the cop.  In the $k$-witness version of the game, a
\emph{strategy} for the cop is a function $\sigma$ which takes as an
input the $i$ first visible positions of the robber and the $ik$ first
moves of the cop and outputs the next $k$ moves of the cop. A winning
strategy is defined as before and in any $k$-winnable graph, the cop
has a positional winning strategy.  We will call a \emph{phase} of the
game the movements of the two players comprised between two
consecutive visible moves.  We will call the behavior of the cop
during several consecutive moves of the same phase $\{a,b\}$-{\it
  oscillating} if his moves alternate between the adjacent vertices
$a$ and $b.$ In a $k$-winnable graph $G$, given a winning cop's
strategy $\sigma$, any trajectory $S_r$ of the robber ends up in a
vertex $r_p$ at which the robber is captured.  We will say that the
trajectory $S_r = (r_1, \ldots, r_p)$ is {\it maximal} if $(r_1,
\ldots, r_{p-1})$ cannot be extended to a longer trajectory for
which the robber is not captured by the cop. Notice that the last
vertex $r_p$ in a maximal trajectory $S_r$ corresponds to an
invisible move if and only if it is a leaf of $G$. Indeed, otherwise
let $r_{p-1}$ be the previous position of the robber. If $r_{p-1}
\neq r_p$, the robber could have stayed in $r_{p-1}$ to avoid being
captured. Thus $r_{p-1} = r_p$ and if $r_p$ has at least two
neighbors, the robber can safely move to one of the neighbors of
$r_p$ not occupied by the cop, and survive for an extra unit of
time.  We continue with two simple observations, the first shows
that during a phase an invisible robber can always safely move
around a cycle, while the second shows that  a robber visiting one
of the vertices $a$ or $b$  during one phase is always captured by
an $\{a,b\}$-oscillating cop.

\begin{lemma} \label{cycle}
Suppose that at his move, the robber $\mathcal R$ occupies a vertex $v$ of a cycle
$C$ of a graph $G$ and is not visible after this move. Then
$\mathcal R$ has a move (either staying at $v$ or going to
a neighbor of $v$) such that the cop does not capture the robber
during his next move.
\end{lemma}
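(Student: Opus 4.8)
The plan is to argue by a simple counting (pigeonhole) argument, the decisive point being that since $\mathcal R$ is invisible after the move in question, the cop's next move is fixed in advance and cannot react to where $\mathcal R$ actually goes. First I would fix notation: let $v^-$ and $v^+$ denote the two neighbors of $v$ on the cycle $C$. Since $C$ is a cycle, and hence has length at least $3$, the three vertices $v,v^-,v^+$ are pairwise distinct, and each of the moves ``stay at $v$'', ``go to $v^-$'', and ``go to $v^+$'' is a legal move for $\mathcal R$. Thus $\mathcal R$ has at least three candidate destinations lying on $C$.

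Next I would analyze the cop side. Let $c$ be the current position of $\mathcal C$ (before his next move) and let $c'$ be his position after that move; since $\mathcal C$ has unit speed, $c'\in N_1(c)$. Because $\mathcal R$ is not visible after his move, the cop's next move belongs to the sequence he committed to at the last sighting, and is therefore independent of $\mathcal R$'s choice of destination; in particular $c$ and $c'$ are two fixed vertices, together covering at most two vertices of $G$. As $\mathcal C$ moves along a single edge (with no intermediate vertices), the only way $\mathcal C$ captures $\mathcal R$ during this move is for $\mathcal R$'s position to coincide with $c'$; to avoid in addition being captured by walking onto the cop, $\mathcal R$ should also avoid $c$. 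Hence any destination in $\{v,v^-,v^+\}\setminus\{c,c'\}$ is safe against the cop's next move.

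Finally, since $|\{v,v^-,v^+\}|=3$ while $|\{c,c'\}|\le 2$, the set $\{v,v^-,v^+\}\setminus\{c,c'\}$ is nonempty, and $\mathcal R$ moves to any of its vertices, completing the proof. The only real subtlety, and the step I would state with care, is the invisibility hypothesis: were the cop able to observe $\mathcal R$'s move, he could simply chase the chosen destination, so it is essential that $c'$ be determined before $\mathcal R$ commits, which is exactly what ``$\mathcal R$ is not visible after this move'' guarantees. I do not expect any genuine obstacle beyond making this independence explicit.
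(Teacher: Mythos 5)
Your proof is correct and follows essentially the same route as the paper's: both exploit that the invisible robber's adversary has committed his next vertex in advance, and then apply a pigeonhole argument among the robber's options on the cycle (the paper compares the two candidates $v$ and an unoccupied cycle-neighbor $u$ against the cop's single next vertex $z$, while you compare the three candidates $v,v^-,v^+$ against the two cop positions $c,c'$). No gap.
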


\begin{proof}
Let $u$ be a neighbor of $v$ in $C$ which is not occupied by the cop. Since the robber
will not be visible after his next move, the
strategy of the cop is defined {\it a priori}. Let $z$ be the next
vertex to be occupied by the cop. Then the robber can stay at $v$
if $v\neq z$ or can move to $u$ if $u\neq z$.
\end{proof}

\begin{lemma} \label{oscillating_cop} If during one phase,  the cop
is performing $\{ a,b\}$-oscillating
moves and the robber moves to one of the vertices $a$ or $b,$ then
the robber is captured either immediately or at the next move of the
cop.
\end{lemma}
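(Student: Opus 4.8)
The plan is to argue by a short case analysis on the relative timing of the robber's move within the cop's oscillation. Recall that during a single phase the two players move alternately, and that a $\{a,b\}$-oscillating cop (where $a$ and $b$ are adjacent) occupies at each of his moves one of the two vertices $a,b$, with consecutive cop positions distinct. Suppose that at some move of the phase the robber moves onto a vertex $w\in\{a,b\}$, and let $c\in\{a,b\}$ denote the cop's most recent position, i.e.\ where the cop sat immediately before this robber move. The key point is that, by the oscillation, the cop's \emph{next} position is the vertex of $\{a,b\}$ other than $c$.

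First I would treat the case $c=w$. Then, right after the robber moves to $w$, both players occupy $w$, so (since capture is checked whenever the two players share a vertex at the same step) the robber is captured immediately.

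Next I would treat the case $c\neq w$, so that $\{c,w\}=\{a,b\}$. Because the cop oscillates between $a$ and $b$, his very next move carries him from $c$ onto $w$. As the robber is already sitting on $w$, the two players occupy $w$ after this cop move, and the robber is captured at the next move of the cop.

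Since $w\in\{a,b\}$ and $c\in\{a,b\}$, the two cases $c=w$ and $c\neq w$ are exhaustive, and in each the robber is captured either immediately or at the cop's next move, which is exactly the claim. The only step requiring care is the bookkeeping of the alternating turn order: one must verify that ``the cop's most recent position'' $c$ and ``the cop's next position'' are precisely the two endpoints $a,b$ of a single oscillation step, so that whenever the cop is not already on $w$ he is forced onto $w$ on the following move. There is no genuine obstacle beyond this timing check; the argument uses only the definition of $\{a,b\}$-oscillation and the capture rule.
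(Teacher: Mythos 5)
Your two-case analysis ($c=w$ versus $c\neq w$) is essentially the paper's argument, and the case $c=w$ and the case $c\neq w$ \emph{in the middle of the phase} are handled correctly. However, there is one boundary subcase you gloss over, and it is precisely the one the paper treats separately: the robber may move onto $w$ at the \emph{last} step of the phase. At that point the cop's prescribed $\{a,b\}$-oscillating sequence is over, so your justification ``because the cop oscillates, his very next move carries him from $c$ onto $w$'' no longer applies --- the cop's next move belongs to the next phase and is chosen after he sees the robber. The conclusion still holds, but for a different reason: the robber is now \emph{visible} at $w$, which is adjacent to the cop's position $c$, so the cop's next (visible) move captures him --- this uses that the cop plays to capture an adjacent visible robber (e.g.\ a parsimonious strategy), not the oscillation itself. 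You flag the ``timing check'' as the only delicate point but do not actually carry it out; adding the end-of-phase observation above closes the gap and makes your proof match the paper's three-case argument.
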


\begin{proof}
  Suppose that $\mathcal R$ moves to the
  vertex $a.$ If $\mathcal C$ is located at $a,$ then the robber is
  captured immediately. If $\mathcal C$ is located at $b$ and this is
  not the last vertex of the phase, then $\mathcal C$ will move to $a$
  and will capture there the robber. Finally, if $a$ and $b$ are the positions of
  $\mathcal R$ and $\mathcal C$ at the end of the phase, then the
  robber will be visible at $a$ and with the next visible move of $\mathcal C$ from $b$ to $a,$
  the robber will be caught at $a.$
\end{proof}

\subsection{On the inclusion of ${\mathcal C}{\mathcal W}{\mathcal W}(k+1)$ in ${\mathcal C}{\mathcal W}{\mathcal W}(k)$}
Clarke \cite{Clarke} noticed that for any $k\ge
2,$ the inclusion ${\mathcal C}{\mathcal W}{\mathcal F}{\mathcal
  R}(k)\subseteq {\mathcal C}{\mathcal W}{\mathcal W}(k)$ holds. 
Contrary to the classes considered in the previous
section which collapses for $k \geq 3$, we present now, for each $k$, an
example of a graph in ${\mathcal C}{\mathcal W}{\mathcal  W}(k)\setminus {\mathcal C}{\mathcal W}{\mathcal W}(k+1).$

\begin{proposition}\label{prop:stricte} For any $k\ge 2,$  ${\mathcal C}{\mathcal W}{\mathcal F}{\mathcal
  R}(k)$ is a proper subclass of ${\mathcal C}{\mathcal W}{\mathcal W}(k).$
For any  $k\ge 1,$ there exists a graph contained in
${\mathcal C}{\mathcal W}{\mathcal  W}(k)\setminus {\mathcal C}{\mathcal W}{\mathcal W}(k+1).$
\end{proposition}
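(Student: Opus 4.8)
The plan is to treat the two assertions together but to establish the \emph{second} one first, since the separating family it produces will immediately yield the first assertion for all $k\ge 3$. Throughout I use the already-noted inclusion $\CW(k)\subseteq\CWW(k)$, together with the structural descriptions $\CW(2)=$ dually chordal graphs (Theorem~\ref{dually_chordal}) and $\CW(s)=$ big brother graphs for every $s\ge 3$ (Theorem~\ref{th-speed-3}); in particular $\CW(k)=\CW(k+1)$ whenever $k\ge 3$.

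For the second assertion I would construct, for each $k$, an explicit graph $G_k$ and prove (a) $G_k\in\CWW(k)$ and (b) $G_k\notin\CWW(k+1)$. The graph $G_k$ will be built around a long cycle $Z$ (the robber's hideout) equipped with a ``cop shortcut'' --- a chord or short connecting path --- whose lengths are tuned to $k$. The mechanism I have in mind is that an evasive detour of the robber around $Z$ costs exactly one more invisible step than the cop needs to intercept it through the shortcut: with only $k$ invisible moves per phase the robber cannot clear the cop's reach, whereas with $k+1$ invisible moves he can.

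To prove (a) I would exhibit a cop strategy for the $k$-witness game. The cop drives the robber toward a distinguished edge (the ``gate'' of the hideout); because sightings occur often enough relative to the tuned lengths, the cop keeps the robber's whereabouts sufficiently constrained to force this crossing, and at the decisive phase he captures by oscillating on the gate edge, invoking Lemma~\ref{oscillating_cop} (a robber touching either endpoint of an oscillating edge is caught). To prove (b) I would exhibit, against an \emph{arbitrary} cop strategy for the $(k+1)$-witness game, a robber strategy surviving forever: whenever the cop comes within threatening range, the robber spends the $k+1$ invisible moves of the phase traversing an arc of $Z$, reappearing at distance $\ge 2$ from the cop. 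Lemma~\ref{cycle} guarantees that each individual invisible move can be made safely, and the tuning guarantees the reappearance distance. Iterating keeps the robber uncaptured, so $G_k\notin\CWW(k+1)$.

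The first assertion then follows. For $k\ge 3$ the graph $G_k$ of (b) already works: since $\CW(k)=\CW(k+1)\subseteq\CWW(k+1)$ and $G_k\notin\CWW(k+1)$, we get $G_k\notin\CW(k)$, while $G_k\in\CWW(k)$, hence $\CW(k)\subsetneq\CWW(k)$. For $k=2$ the same $G_2\in\CWW(2)\setminus\CWW(3)$ shows $G_2$ is not a big brother graph, but to conclude $G_2\notin\CW(2)$ I must separately verify, via Theorem~\ref{dually_chordal}, that $G_2$ is not dually chordal (a finite check that it admits no maximum neighborhood ordering), adjusting the gadget if necessary so that this holds. The main obstacle is the construction itself: choosing the cycle length and shortcut length so that the escape threshold lands \emph{exactly} between $k$ and $k+1$, and then verifying the two strategies. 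The delicate direction is (b), where the robber's escape must defeat \emph{every} cop strategy in the $(k+1)$-witness game; this requires controlling, phase by phase, the robber's reappearance distance under the worst cop response, which is where Lemma~\ref{cycle} and the tuned geometry must combine precisely. The cop side (a) is comparatively routine once the gate argument via Lemma~\ref{oscillating_cop} is in place.
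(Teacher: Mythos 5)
Your logical architecture is sound -- deducing the first assertion from the second via $\CW(k)=\CW(k+1)\subseteq\CWW(k+1)$ is valid for $k\ge 3$, and you correctly flag that $k=2$ needs a separate non-dual-chordality check. But the substance of the second assertion is the construction, and you have not given one: ``a long cycle with a shortcut whose lengths are tuned to $k$'' is a shape, not a graph, and it is not clear that this shape can deliver what you need. The paper's graph $G_k$ is of a rather different form: two adjacent vertices $x,y$ where $x$ dominates everything except one vertex $v$ and $y$ dominates everything except one vertex $u$, joined by a \emph{width-two ladder} $u_1,\dots,u_k,v_1,\dots,v_k$ with all crossing edges $u_iv_{i+1}$, $v_iu_{i+1}$ present. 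Each of the two specific features does real work. First, the cop's winning strategy in the $k$-witness game is a single oscillation between the adjacent pair $\{x,y\}$ ending at $y$, and it succeeds because $N_{k}(v,G\setminus\{x,y\})\subseteq N_1(y)$ while $u\in N_{k+1}(v,G\setminus\{x,y\})\setminus N_1(y)$ -- the threshold is encoded in a punctured-ball inclusion, not in the length of an interception route. Your cop strategy (``drive the robber toward a gate'') is underspecified and much harder to verify; in the paper the cop wins in essentially one phase because the robber's initial position is forced to be $v$. Second, on the robber's side, surviving $k+1$ invisible steps against a known cop trajectory requires that at \emph{every} step the robber has two interchangeable vertices to choose from (the paper uses $u_i$ versus $v_i$); a single arc of a cycle does not provide this redundancy, since the cop can sit on a cut vertex of the arc. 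You would also need to make the escape iterable forever, which the paper gets from the $u\leftrightarrow v$, $x\leftrightarrow y$ symmetry of $G_k$; reappearing once at distance $\ge 2$ is not enough.

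For the first assertion the paper takes a shorter route that you might prefer: it uses one fixed graph, the $3$-sun $S_3$, for all $k\ge 2$ simultaneously. Since no vertex of $S_3$ has a maximum neighbor, $S_3$ is not dually chordal, hence $S_3\notin\CW(2)$ by Theorem~\ref{dually_chordal} (and a fortiori $S_3$ is not a big brother graph, so $S_3\notin\CW(k)$ for $k\ge 3$); yet $S_3\in\CWW(k)$ for every $k\ge 2$ by the same oscillation argument via Lemma~\ref{oscillating_cop}. This sidesteps entirely the $k=2$ complication you acknowledge, where your plan would force you to verify by hand that your (as yet unspecified) gadget $G_2$ admits no maximum neighborhood ordering. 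As it stands, the proposal is a plausible plan rather than a proof: to complete it you must exhibit the graphs explicitly and carry out both verifications, and the ``tuned cycle'' picture would almost certainly have to be replaced by something closer to the ladder-between-two-near-dominating-vertices design.
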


\begin{proof}
To see the inclusion ${\mathcal C}{\mathcal W}{\mathcal F}{\mathcal
R}(k) \subseteq {\mathcal C}{\mathcal W}{\mathcal W}(k)$ (which was also mentioned in \cite{Clarke}),
it suffices to note that  we can interpret the moves at speed $k$ of the robber as if the cop moves only when the robber is visible
(i.e., each $k$th move).  Now, let $S_3$  be the 3-sun, the graph on 6 vertices obtained by gluing a triangle
to each of the three edges of another triangle (see Fig.~\ref{fig:examples}(a)). Since no vertex of $S_3$ has a maximum neighbor, the 3-sun is not dually chordal,
thus $S_3\notin \CW(2)$ by Theorem \ref{dually_chordal}. Then clearly, $S_3$ is not a big brother graph either. On the other hand, $S_3 \in {\mathcal
C}{\mathcal W}{\mathcal W}(k)$ for any $k\ge 2.$  Indeed,  initially the cop is placed at a vertex $u$ of degree
$4.$ Then, the robber shows himself  at the unique vertex $v$ which is not adjacent to $u.$
Let $x$ and $y$ be the two neighbors of $v$ in $S_3$. The
strategy of the cop consists in oscillating between $x$ and $y$
until the robber becomes visible again. Suppose without loss of generality that the
cop's sequence of moves is $x,y,x,y,\ldots,y.$ Then from Lemma \ref{oscillating_cop} we infer
that $\mathcal R$  is jammed at vertex $v.$  At the end,  when
the robber shows his position again, then
either he is at $v$ or he desperately moves to $x.$ In both cases,
he is caught by $\mathcal C$ at the next move. This shows that ${\mathcal C}{\mathcal W}{\mathcal F}{\mathcal
R}(k)$ is a proper subclass of  ${\mathcal C}{\mathcal W}{\mathcal W}(k)$

Now we will establish the second assertion.  Let $k\geq 1$ and $G_k$
be the graph defined as follows.  The vertex set of $G_k$ is
$\{x,y,u,v,u_1,\ldots,u_{k},v_1,\ldots,v_{k}\}$. The vertex $x$ is
adjacent to any vertex except $v,$ while $y$ is adjacent to any vertex
except $u$. For any $i<k$, the couples
$\{u_i,u_{i+1}\},\{u_i,v_{i+1}\}, \{v_i,v_{i+1}\}, \{v_i,u_{i+1}\}$
are edges of $G_k$.  Finally, $u$ is adjacent to $x,u_1,$ and $v_1$,
while $v$ is adjacent to $y,u_{k},$ and $v_{k}$ ($G_4$ is depicted in
Fig.~\ref{fig:examples}(b)). To prove that $G_k\in {\mathcal
  C}{\mathcal W}{\mathcal W}(k)$, consider the following strategy for
one cop. Initially, the cop occupies $x$. To avoid being caught
immediatly, the robber must show up at $v$. The cop occupies
alternatively $x$ and $y$ in such a way that after $k$ moves he is at
$y$ (if $k$ is odd, then the cop passes his first move). Therefore,
after $k$ steps, the robber shows up at a vertex of
$N_{k}(v,G\setminus\{x,y\}) \cup \{x\} \subseteq N_1(y),$ and at the
next move the cop caught him. On the other hand, we assert that in $G_k$ a
robber with witness $k+1$ can evade  against any strategy of
the cop.  Indeed, assume without loss of generality (in view of
symmetry) that the initial position of the cop belongs to the set
$L=\{x,u,u_1,\ldots,u_{\lceil k/2 \rceil},v_1,\ldots, v_{\lfloor k/2
  \rfloor}\}$. Then the robber chooses $v$ (or $v_{1}$ if $k=1$ and
the cop is occupying $u_{k}$) as his initial position.  Let $z$ be the
vertex occupied by the cop after $k+1$ steps. If $z \in L$, then by
Lemma \ref{cycle} the robber can move in the triangle $\{v,v_{k},y\}$
in order to avoid the cop during the $k+1$ steps and to finish at a
vertex of the triangle that is not adjacent to $z$. If $z \notin L$,
then the robber uses the $k+1$ steps to reach $u$ (or $u_1$ if $k=1$
and $z=v_1$). At any step, there is some $i\leq k$, such that the two
vertices $u_i$ and $v_i$ allow the robber to decrease his distance to
$u$ (or to $u_1$) by one; the robber chooses one of these vertices
that is not occupied and will not be occupied by the cop after his
move.
\end{proof}

\begin{figure}[t]
\centerline{
\begin{tabular}{ccc}
{\input{6sun.pstex_t}}& 
& {\input{kwitness.pstex_t}}\\ 
(a) & \qquad \qquad & (b)\\
\end{tabular}}
\caption{Two graphs in (a)  ${\mathcal C}{\mathcal
W}{\mathcal W}(k) \setminus {\mathcal C}{\mathcal W}{\mathcal
F}{\mathcal R}(k)$, $k\geq 2$ and (b)  ${\mathcal C}{\mathcal W}{\mathcal W}(4) \setminus
{\mathcal C}{\mathcal W}{\mathcal W}(5)$.}
\label{fig:examples}
\end{figure}

\medskip\noindent
{\bf Open question 2:} Is it true that ${\mathcal C}{\mathcal W}{\mathcal  W}(k+1)\subset {\mathcal C}{\mathcal W}{\mathcal W}(k)?$
\medskip

\subsection{$\bigcap_{k\ge 1} {\mathcal C}{\mathcal
  W}{\mathcal W}(k)$ and big two-brother graphs }

In analogy to  the big brother graphs, we say that a graph $G$ is
called a {\it big two-brother graph}, if $G$ can be represented as
an ordered union of subgraphs $G_1,\ldots,G_r$ in the form of a tree
$T$ rooted at $G_1$ such that (1) $G_1$ has a dominating vertex and
(2) any $G_i, i>1$, contains one or two adjacent vertices
disconnecting $G_i$ from its father and one of these two vertices
dominates $G_i$. Note that if $G_i$ and its father intersect in an
articulation point $x$, then $x$ is not necessarily the vertex which
dominates $G_i$. Equivalently, $G$ is a big two-brother graph if $G$
can be represented as a union of its subgraphs $G_1,\ldots,G_r$
labeled in such a way that $G_1$ has a dominating vertex, and for
any $i>1$, either the subgraph $G_i$ intersects $\cup_{j<i} G_j$ in
two adjacent vertices $x_i,y_i$ belonging to a common subgraph $G_j,
j<i,$ so that $y_i$ dominates $G_i$, or $G_i$ has a dominating
vertex $y_i$ and intersects $\cup_{j<i} G_j$ in a single vertex
$x_i$ (that may coincide with $y_i$); we will call such a
decomposition $G_1, \ldots, G_r$ a \emph{btb-decomposition} of $G$.
The vertices $y_i$ and $x_i$ are the big and the small brothers of
$G_i.$ Let ${\mathcal C}{\mathcal W}{\mathcal W}$ be the class of
all big two-brother graphs.  See Fig.~\ref{bigBrother}(b) for an
example of a big two-brother graph. As for big brother graphs, one
can associate a rooted tree $T$ with the decomposition $G_1, \ldots,
G_r$ of a big two-brother graph $G$. Obviously any big brother graph
$G$ is also a big two-brother graph because the required union of
subgraphs is provided by the block decomposition of $G$ and
$x_i=y_i$ is the articulation point of the block $G_i=B_i$ relaying
it with its father. The 2-trees and, more generally, the chordal
graphs in which all minimal separators are vertices or edges are
examples of  big two-brother graphs which are not big brother
graphs.

\begin{theorem}\label{theor:CWW}
A graph $G=(V,E)$ is $k$-winnable for all $k\ge 1$ if and only if $G$
is a big two-brother graph, i.e., ${\mathcal C}{\mathcal W}{\mathcal
  W}= \bigcap_{k\ge 1}{\mathcal C}{\mathcal W}{\mathcal W}(k).$
\end{theorem}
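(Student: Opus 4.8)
The statement is the set equality $\CWW=\bigcap_{k\ge 1}\CWW(k)$, and the plan is to establish the two inclusions separately: the inclusion $\CWW\subseteq\bigcap_{k\ge 1}\CWW(k)$ (any big two-brother graph is $k$-winnable for all $k$) by producing an explicit winning cop strategy from a btb-decomposition, and the reverse inclusion by induction on $|V|$, in the spirit of the proof of Theorem~\ref{th-speed-3}. Two facts will be used throughout the second inclusion. First, exactly as Proposition~\ref{prop-retract} was obtained from Clarke~\cite{Clarke} for the classes $\CW(s)$, each class $\CWW(k)$ is closed under retracts, so $\bigcap_{k\ge 1}\CWW(k)$ is as well. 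Second, if an edge $\{x,y\}$ separates a vertex set $H$ (with $\{x,y\}\subseteq H$) from the rest of $G$ and $y$ is adjacent to every vertex of $H$, then mapping $H\setminus\{x,y\}$ onto $y$ is a non-expansive idempotent retraction; hence $G(V\setminus(H\setminus\{x,y\}))$ again belongs to $\bigcap_{k\ge 1}\CWW(k)$. This is the operation I will use to peel dominated pieces attached along an edge.

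For $\CWW\subseteq\bigcap_{k\ge 1}\CWW(k)$, fix a btb-decomposition $G_1,\dots,G_r$ with rooted tree $T$, big brothers $y_i$, small brothers $x_i$, and note that by definition every piece $G_i$ is dominated by its $y_i$. I would place the cop at the dominating vertex of the root $G_1$ and maintain, between two consecutive visible moves, the invariant that the robber is confined to the union of the pieces in the subtree of $T$ rooted at the current node $G_i$ and that the cop stands on a separating vertex of $G_i$. During each invisible phase the cop oscillates on the separating pair $\{x_i,y_i\}$; since this pair induces an edge (or degenerates to a single vertex) and disconnects the subtree from the rest of $G$, Lemma~\ref{oscillating_cop} forbids the robber from crossing it, and the invariant survives. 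When the robber reappears, the cop reads off the child subtree entered and, using that $y_i$ dominates $G_i$ so that the next separating pair is within easy reach of the current one, advances one level down $T$. Since $T$ is finite this eventually confines the robber to a leaf piece $G_i$; as $y_i$ dominates $G_i$, the visible robber is then adjacent to $y_i$ and is captured at the next move. The only real bookkeeping is the transition between successive separators inside a phase, and this is routine given Lemmas~\ref{cycle} and~\ref{oscillating_cop}.

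For the reverse inclusion, let $G\in\bigcap_{k\ge 1}\CWW(k)$ and induct on $|V|$, the cases $|V|\le 2$ being immediate; as in Theorem~\ref{th-speed-3} I split on two-connectivity. If $G$ is not two-connected, I pick a leaf block $B$ of its block-decomposition with articulation point $a$. Then both $G(B)$ and the remainder $G(V\setminus(B\setminus\{a\}))$ are retracts of $G$, hence lie in $\bigcap_{k\ge 1}\CWW(k)$ and, by induction, are big two-brother graphs. Since every piece of a btb-decomposition is dominated, a piece containing $a$ can serve as the attaching piece, and one concatenates a btb-decomposition of the remainder with one of $B$ glued at the single vertex $a$ (the single-vertex alternative of the definition), just as in Case~1 of Theorem~\ref{th-speed-3}.

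The two-connected case is where the work lies, and it differs essentially from Theorem~\ref{th-speed-3}: two-connectivity no longer forces a global dominating vertex (witness the $2$-trees), so instead of exhibiting one dominating vertex I must peel off a \emph{leaf piece} $H$ that is dominated by an endpoint $y$ of an edge $\{x,y\}$ separating $H$ from the rest of $G$. Given such an $H$, the retraction recorded in the first paragraph removes $H\setminus\{x,y\}$, leaves a smaller graph still in $\bigcap_{k\ge 1}\CWW(k)$ to which induction applies, and appending $H$ along $\{x,y\}$ (via option~(a) of the definition) returns a btb-decomposition of $G$. The obstacle is proving that such a dominated, edge-attached piece always exists: here I would use winnability for \emph{large} $k$. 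For $k$ exceeding the diameter the robber can roam an entire region during one invisible phase, so to make progress the cop is forced to pin the robber behind an oscillation on some edge-cut $\{x,y\}$ (Lemma~\ref{oscillating_cop}), and the innermost region it can pin must be dominated, for otherwise Lemma~\ref{cycle} would let the robber circulate forever -- this innermost dominated region is the desired $H$. Turning this intuition into a rigorous selection of $\{x,y\}$ and $H$, and checking that the resulting separator can always be taken to be a single edge contained in a common piece, is the step I expect to demand the most care.
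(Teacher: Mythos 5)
Your forward direction matches the paper's argument essentially verbatim (cop at the big brother of the root, oscillation on the separating pair, Lemma~\ref{oscillating_cop} to block crossings, progress measured by depth in $T$), and your overall shape for the converse --- peel off a dominated piece attached along an edge or a vertex, retract, induct --- is also the paper's. But the converse has a genuine gap exactly where you flag it: the existence of the dominated, edge-attached leaf piece is not an afterthought but the technical heart of the theorem, and your sketch of it does not work as stated. The paper isolates this as Proposition~\ref{prop:particularEdge}: if $G\in\CWW(k^2)$ has minimum degree at least $2$, then there exist a vertex $v$ and an edge $xy$ with $N_{k}(v,G\setminus\{x,y\})\subseteq N_1(y)$; applied with $k=n$ this yields a connected component of $G\setminus\{x,y\}$ dominated by $y$. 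Its proof is not ``for $k$ exceeding the diameter the cop must pin the robber behind an oscillation'': the cop's last phase need not be an oscillation at all, and the argument instead takes a parsimonious winning strategy against a maximal robber trajectory, looks at the cop's trajectory $c_0,\dots,c_{k^2}=y$ in the final phase, and runs a case analysis --- either the cop is twice consecutively off $y$ somewhere (Case~1, handled via a claim about cycles reachable faster by the robber than by the cop), or the cop returns to $y$ every other step, in which case one further splits on whether some window of $k$ steps stays inside a two-element set $\{x,y\}$ or always meets three distinct vertices. The quadratic phase length $k^2$ is needed precisely so the robber has time both to reach a distant cycle and to creep along length-$k$ paths while dodging the cop; ``$k$ larger than the diameter'' is not enough to make your pinning intuition rigorous. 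Without this proposition your induction has no engine in the two-connected case.

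Two smaller points. First, the paper does not split on two-connectivity at all: it peels degree-one vertices separately and then invokes Proposition~\ref{prop:particularEdge} on graphs of minimum degree $2$, which works uniformly; your block-decomposition route in the non-two-connected case has its own wrinkle, since concatenating a btb-decomposition of the leaf block $B$ after one of the remainder requires the first piece of $B$'s decomposition to meet the remainder in the single vertex $a$, and re-rooting a btb-decomposition at a prescribed piece is not obviously legitimate (the domination conditions are directional). Second, your retraction collapsing $H\setminus\{x,y\}$ onto $y$ is fine and is exactly what the paper uses (via Clarke's retract closure of $\CWW(k)$), so that part of your setup is sound.
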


\begin{proof}
First we show that any big two-brother graph $G$ is $k$-winnable for
any $k\ge 1.$ Let $G_1,\ldots,G_r$ be a btb-decomposition of $G$. We
consider the following strategy for the cop. The cop starts the game
in the big brother of the root graph $G_1$ and, more generally, at the
beginning of each phase, we have the following property: the cop is
located in the big brother $y_i$ of some subgraph $G_i$ such that the
robber is located in a subgraph $G_k$ that is a descendent of $G_i$ in
the decomposition tree $T$ of $G$.  If $G_i = G_k$, then the cop will
capture the robber at the first move of the phase. Otherwise, let
$G_{j}$ be the son of $G_i$ on the unique path of $T$ between $G_i$
and $G_k$. If $G_i$ and $G_{j}$ intersect in an articulation point
$x_j$, then the cop moves from $y_i$ to $x_j$, stays there during
$k-2$ steps, and then, at the last step of the phase, if $x_j$ is not the
big brother $y_j$ of $G_j$, he moves to $y_j$. If $G_i$ and $G_{j}$
intersect in an edge $x_jy_j$ where $y_j$ is the big brother of $G_j$,
then the cop moves from $y_i$ to one of the vertices $x_j,y_j$ and
then oscillate between $x_j$ and $y_j$ in such a way that when
$\mathcal R$ becomes visible again $\mathcal C$ occupies the vertex $y_j$
(the decision to move first to $x_j$ or to $y_j$ depends only on the
parity of $k$).

During this phase, the robber cannot leave the subgraph induced by the
descendants of $G_j$, otherwise he has to go from $G_j$ to $G_i$. In
the first case, the cop stays during the whole phase in the unique
vertex $x_j$ which cannot be traversed by the robber. In the second
case, the cop oscillates between $x_j$ and $y_j$; therefore, by
Lemma~\ref{oscillating_cop} the robber cannot traverse
$\{x_j,y_j\}$. Therefore, after this phase, the invariant is preserved
and the distance in $T$ between the root and the subgraph $G_j$
hosting the cop has strictly increased.  Thus after at most diameter
of $T$ phases, $\mathcal R$ and $\mathcal C$ will be located in the
same subgraph $G_k$, and the cop captures the robber.

Conversely, let $G \in {\mathcal C}{\mathcal W}{\mathcal W}(k)$ for
any $k\ge 1$. If $G$ has a vertex $z$ of degree 1, then $G'=G\setminus
\{z\}$ is a retract of $G,$ thus $G'\in {\mathcal C}{\mathcal
  W}{\mathcal W}(k)$ for any $k\ge 1.$ Hence $G'$ has a
btb-decomposition $G_1,\ldots,G_{r-1}$ by induction hypothesis.  If
$w$ is the unique neighbor of $z,$ then setting $G_r$ to be the edge
$zw$ and $y_r=x_r:=w,$ we will conclude that $G$ is a big two-brother
graph as well. So, we can suppose that $G$ does not contain vertices
of degree 1.

Since $G \in {\mathcal C}{\mathcal W}{\mathcal W}(n^2),$ applying
Proposition \ref{prop:particularEdge} below for $k=n$, where $n$ is
the number of vertices of $G,$ we deduce that $G$ contains a vertex
$v$ and two adjacent neighbors $x,y$ of $v$ such that
$N_{n}(v,G\setminus\{x,y\}) \subseteq N_1(y)$. This means that the
connected component $C$ of $G\setminus\{x,y\}$ containing the vertex
$v$ is dominated by $y.$ The graph $G':=G(V\setminus C)$ is a retract
of $G,$ thus by Theorem 3 of \cite{Clarke} $G'\in {\mathcal
  C}{\mathcal W}{\mathcal W}(k)$ for any $k\ge 1.$ By induction
assumption, either $G'$ is empty or $G'$ has a btb-decomposition
$G_1,\ldots,G_{r-1}.$ If $G'$ is empty, then, since $y$ dominates $C,$
we conclude that $G$ has a btb-decomposition consisting of a single
subgraph. Otherwise, setting $G_r:=G(C\cup\{ x,y\})$, $y_r := y$ and
$x_r := x$, one can easily see that $G_1,\ldots, G_{r-1}, G_r$ is a
btb-decomposition of $G.$
\end{proof}

\begin{proposition} \label{prop:particularEdge}
Let  $G \in {\mathcal C}{\mathcal W}{\mathcal
W}(k^2)$ for $k\ge 1$. If the minimum degree of a vertex of  $G$ is at least $2$, then $G$ contains a vertex $v$ and an edge $xy$
such that $N_{k}(v,G\setminus\{x,y\})\subseteq N_1(y)$.
\end{proposition}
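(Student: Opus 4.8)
The plan is to prove the contrapositive: assuming that for \emph{every} vertex $v$ and \emph{every} edge $xy$ of $G$ one has $N_k(v,G\setminus\{x,y\})\not\subseteq N_1(y)$, I will exhibit a strategy that lets the robber evade one cop forever in the $k^2$-witness game, contradicting $G\in{\mathcal C}{\mathcal W}{\mathcal W}(k^2)$. Throughout, recall that the decisive events happen only at the \emph{phase boundaries}: during a phase the robber is invisible and the cop's $k^2$ moves are committed in advance, so the only way the cop can threaten an invisible robber all of whose vertices have degree at least $2$ is to force him, at the next visible move, onto a vertex of $N_1(c)$, where $c$ is the cop's position at the end of the phase (from a vertex at distance $\ge 2$ the robber cannot be caught at the immediately following move). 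Hence the robber's single objective will be the invariant that at the start of every phase his position $v$ satisfies $d(v,c_0)\ge 2$, where $c_0$ is the cop's current position; the initial position is chosen to satisfy it, which is possible because the standing assumption guarantees a vertex outside the closed neighbourhood of the cop.

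The heart of the argument is to show that the robber can carry one phase through while restoring the invariant. Fix the cop's committed walk $c_0,c_1,\ldots,c_{k^2}$ and set $y:=c_{k^2}$ and $x:=c_{k^2-1}$ (if the cop passes on his last move, so that $x=y$, replace $x$ by the previous distinct vertex of the walk incident to $y$). First I would apply the standing assumption to the pair $\{x,y\}$ to produce a target $w$ with $d_{G\setminus\{x,y\}}(v,w)\le k$ and $w\notin N_1(y)$, and route the robber from $v$ to $w$ along such a path of length at most $k$. Because the robber is invisible and every vertex has degree at least $2$, at each individual step he has at least three admissible moves (stay, or one of two neighbours), at most one of which equals the single cop position $c_i$ of that step; thus the route can be followed without ever coinciding with the cop. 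This consumes at most $k$ of the $k^2$ moves and brings the robber to a vertex that is neither $x$, $y$, nor a neighbour of $y$.

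For the remaining $k^2-k$ moves the robber must idle in the region far from $y$ without being pushed back into $N_1(y)$ and without meeting the cop, finishing at distance $\ge 2$ from $y=c_{k^2}$. Here I would invoke the minimum-degree-$2$ hypothesis together with Lemma~\ref{cycle}: since the robber sits on (or can reach) a cycle inside the far region and is invisible, at every move he has a safe non-coinciding move keeping him on the cycle, so he can mark time arbitrarily long and survive to the end of the phase at a vertex outside $N_1(y)$. This restores $d(\rho_{k^2},c_{k^2})\ge 2$, and the argument repeats; since it succeeds against an arbitrary committed walk, the robber is never caught, so $G\notin{\mathcal C}{\mathcal W}{\mathcal W}(k^2)$, proving Proposition~\ref{prop:particularEdge}.

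The main obstacle---and the place where the quadratic witness bound $k^2$ really enters---is the honest justification of the idling step and, more generally, the interaction between the single radius-$k$ escape guaranteed by the hypothesis and the need to survive a full phase of length $k^2$. A single application of the hypothesis only moves the robber distance $k$ away from the \emph{final} edge $xy$, whereas the cop has $k^2$ intermediate moves and may threaten the robber several times before the phase ends. The clean way to absorb this is an amortized argument in which the robber re-applies the escape guarantee each time the committed walk comes within distance $2$ of him, each escape costing at most $k$ moves and strictly increasing his distance to $y$, so that after at most $k$ escapes---hence within the $k^2$ budget---he is so far from $y$ that the cop cannot reach $N_1(\rho_{k^2})$ by step $k^2$. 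Turning this sketch into a rigorous invariant (choosing the successive escape targets, verifying non-coincidence with the committed walk at \emph{every} intermediate step, handling the degenerate case where the cop passes, and controlling the distances to $y$ across escapes) is where essentially all of the work lies; Lemma~\ref{cycle} and Lemma~\ref{oscillating_cop} are exactly the tools set up to run it.
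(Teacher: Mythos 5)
There is a genuine gap at the step where you route the robber along a path of length at most $k$ from $v$ to the escape vertex $w$. Avoiding coincidence with the committed walk at each step is easy (with degree at least $2$ the robber always has three options and only $c_i$ and $c_{i+1}$ are forbidden), but \emph{making progress along a prescribed path} while avoiding coincidence is not: a committed walk that oscillates between some internal vertex $p_{j+1}$ of your path and one of its neighbours blocks the robber at $p_j$ for as long as the oscillation lasts, since moving to $p_{j+1}$ when it is about to be reoccupied is fatal and stepping off the path loses the route. So the assertion ``the route can be followed without ever coinciding with the cop'' does not follow from degree $\ge 2$, and consequently neither does the claim that one application of the hypothesis to the single pair $\{c_{k^2-1},c_{k^2}\}$ suffices. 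Your proposed repair, an amortized sequence of escapes ``each strictly increasing his distance to $y$,'' is not supported by the hypothesis either: the negation only yields some $w\notin N_1(y)$ within punctured distance $k$, i.e.\ $d(w,y)\ge 2$, with no monotonicity in $d(\cdot,y)$ across successive escapes, so the claimed termination after at most $k$ escapes has no justification. This blocking phenomenon is precisely where the exponent $2$ in $k^2$ must be spent, and the proposal does not actually spend it.

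For comparison, the paper does not build a global evasion strategy from the contrapositive. It fixes a parsimonious winning strategy and a \emph{longest-surviving} robber trajectory, looks only at the last phase (from the next-to-last visible position $v$ to the final cop position $y=c_{k^2}$), and splits on the structure of the committed walk: if the cop sits on a two-element set $\{x,y\}$ for $k$ consecutive steps, the robber can traverse any length-$\le k$ path of $G\setminus\{x,y\}$ unimpeded during that window; if instead every window of $k$ consecutive steps contains three distinct cop positions, then for each vertex of the path there are three consecutive safe steps per window, so the robber gains one path-vertex per $k$ cop-moves and covers a length-$k$ path within the $k^2$ budget; the remaining case (the cop leaves $y$ for two consecutive steps somewhere) is handled by a separate cycle argument. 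In each case the conclusion is that the maximal trajectory would be extendable unless $N_k(v,G\setminus\{x,y\})\subseteq N_1(y)$, which is the desired statement. If you want to salvage your approach, you would need to import exactly this case analysis on the committed walk; the choice $x:=c_{k^2-1}$, $y:=c_{k^2}$ and a single application of the hypothesis cannot replace it.
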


\begin{proof}
If $G$ contains a dominating vertex $y,$ then the result follows by
taking as $x$ any vertex of $G$ different from $y.$ Assume thus that
$G$ does not have any dominating vertex.  Consider a parsimonious
winning strategy of the cop and suppose that the robber uses a
strategy to avoid being captured as long as possible.  Since $G$ does
not contain leaves, the robber is caught immediately after having been
visible, i.e., at step $pk^2+1.$ Since $G$ does not have dominating
vertices, the robber is visible at least twice, i.e. $p\ge 1.$ Let $y$
be the vertex occupied by the cop when the robber becomes visible for
the last time before his capture. Let $v$ be the next-to-last visible
vertex occupied by the robber, i.e., his position at step
$(p-1)k^2+1,$ and let $c_0$ be the vertex occupied by the cop at that
moment.   Finally, let
$S^p_c=(c_0,c_1,\ldots,c_{k^2}=y)$ be the trajectory of the cop
between the steps $(p-1)k^2+1$ and $pk^2+1$ (repetitions are
allowed). Note that $v \notin N_1(c_0),$ otherwise the robber would
have been caught immediately at step $(p-1)k^2+1$. We distinguish two cases depending on whether or not the cop
occupies $y$ at least once every two consecutive steps.

\vspace{1ex}
\noindent
{\bf Case 1:} There exists an index  $(p-1)k^2+1\le i<pk^2-1$ such that $y \notin \{c_i,c_{i+1}\}$.

Let $i$ be the largest index satisfying the condition of Case 1 and
set $x:=c_{i+1}$.  We will use the following assertion.

\vspace{-0.5ex}
\begin{claim}\label{claim-cycle}
If $G$ contains a cycle $C$ and a vertex $w\in C$ such that $d(v,w)<
d(c_1,w)-1,$ then $G\setminus\{x,y\}$ has a connected component that
is dominated by $y$.
\end{claim}
\vspace{-0.5ex}

\begin{proof}
Let $w$ be a closest to $v$ vertex satisfying the condition of the claim. If the assertion of the
claim is not satisfied, we will exhibit a strategy allowing the robber to escape the cop during
more steps, contradicting the choice of the strategy of the robber.
Suppose that at the beginning of the $p$th phase the robber move from
$v$ to $w$ along a shortest $(v,w)$-path.  Since $d(v,w)<d(c_1,w)$,
the robber cannot be intercepted by the cop during these moves.
Suppose that the robber reaches the vertex $w$ before the $i$th step
when the cop arrives at $c_i$. Then by Lemma \ref{cycle} the robber
can safely move on $C$ until the cop reaches the vertex $c_i$.

Let $z$ be the position of $\cR$ when $\cC$ reaches
$c_i$.  Then $z \in N_1(y)$, otherwise the robber could stay at $z$
without being caught because starting with this step the cop moves
only on vertices of $N_1(y)$. Suppose that there exists a vertex $t$ at
distance $2$ from $y$ in $G\setminus\{x\}$. Let $r \neq x$ be a common
neighbor of $t$ and $y$. The following sequence of moves is valid for
the robber: when the cop is in $c_i$, the robber goes from $z$ to $y$
(or stays in $y,$ if $z = y$); once the cop has moved to $x = c_{i+1}$,
the robber goes from $y$ to $r$; finally, once the cop has moved to
$y$, the robber goes from $r$ to $t$. After this step, by definition
of $c_i$, the cop only stays in $N_1(y)$ and finishes in $y$. Hence,
the robber can remain in $t$ and will not be captured the next time he
shows up, a contradiction. This concludes the proof of the claim.
 \end{proof}

If the vertex $v$ belongs to a cycle $C,$ then setting $w:=v$ and applying Claim 1 we conclude
that  $y$ dominates the connected component of  $G\setminus \{x,y\}$
containing $v$,
establishing thus the assertion of Proposition
\ref{prop:particularEdge}.  So, suppose that $v$ is an articulation
point of $G$ not contained in a cycle. Since the minimum degree of
$G$ is at least $2$, $G\setminus \{ v\}$ has a connected component
$D$ that does not contain $c_0$ (nor $c_1$). Necessarily $D$
contains a cycle $C$, otherwise we will find in $D$ a vertex of
degree 1 in $G.$  Since any path from $c_1$ to a vertex $w$ of $C$
passes via $v$ and $c_1$ is not adjacent to $v,$  we obtain
$d(v,w)<d(c_1,w)-1$. The result then follows from the claim. This
concludes the analysis of Case 1.

\vspace{1ex}
\noindent
{\bf Case 2:} For any $(p-1)k^2 \leq i\leq pk^2$ we have $y \in \{c_i,c_{i+1}\},$ i.e., $\mathcal C$ occupies $y$ at
least once every $2$ steps.

First, assume that there exists a vertex $x$ (possibly $x=y$) and
$(p-1)k^2 \leq i\leq pk^2-k$ such that $c_i,\ldots,c_{i+k} \in
\{y,x\},$ i.e., that there are at least $k$ consecutive steps when
the cop remains at $x$ or $y$. Then, we claim that
$N_{k}(v,G\setminus\{x,y\}) \subseteq N_1(y)$. Indeed, pick $z\in
N_{k}(v,G\setminus\{x,y\})$ and let $P = (v= p_1, \ldots,p_k = z)$
be a shortest path in $G\setminus\{x,y\}$ between $v$ and $z$. Until
the $i$th step of the phase, the robber may progress ``slowly" along
$P$: either by staying at his current position, or moving to the
next vertex of $P$ toward $z$, depending on the moves of the cop.
The cop starts oscillating between $x$ and $y$ at step $i$. Then
during the next $k$ steps, the robber can follow $P$ until he
reaches $z$ (since the length of $P$ is at most $k$). Therefore, if
$z$ is not a neighbor of $y$, then the robber can remain at $z$
until step $k^2p$ without being captured. Since by our assumption
the robber is caught at step $k^2p$, necessarily $z \in N_1(y)$.
Hence $N_{k}(v,G\setminus\{x,y\}) \in N_1(y)$  and  the assertion of
Proposition \ref{prop:particularEdge} holds.

Therefore, we may assume that between the steps $(p-1)k^2$ and
$pk^2$, for all $k$ consecutive steps, the cop occupies at least
three distinct vertices (one of which is $y$). We assert in this
case that $N_{k}(v,G\setminus\{ y\}) \subseteq  N_1(y)$. Pick $z \in
N_{k}(v,G\setminus \{y\})$ and let $P$ be a shortest path between
$v$ and $z$ in $G\setminus \{ y\}$. Then for any vertex $w$ of $P$,
among any sequence of $k$ moves of the cop we can find three
consecutive moves during which the cop does not occupy $w$.
Therefore, for any sequence of $k$ consecutive steps the robber can
reduce by one his distance to $z$ by moving on $P$ towards $z$
without being captured. Hence, he will reach $z$ before step $pk^2.$
If $z$ is not adjacent to $y$, then staying at $z$ the robber will
not be captured, a contradiction. This concludes the proofs of
Proposition \ref{prop:particularEdge} and Theorem \ref{theor:CWW}.
\end{proof}


\section{Cop-win graphs for game with witness: classes
${\mathcal C}{\mathcal  W}{\mathcal W}(k)$}\label{sec-witnessII}

In this section we investigate the dismantling orders related to
$k$-winnable graphs.  We provide a dismantling order which must be
satisfied by all graphs of the class ${\mathcal C}{\mathcal
  W}{\mathcal W}(2)$. We show that this order is not sufficient but
some its reinforcement is. Then we continue with similar results about
$k$-winnable graphs for odd values of $k\ge 3.$

\subsection{Class ${\mathcal C}{\mathcal  W}{\mathcal W}(2)$}

We continue with the definition of a dismantling ordering which seems
to be intimately related with the witness variant of the cop and
robber game. Again, we will consider a slightly more general version
of the game: given a subset of vertices $X$ of a graph $G=(V,E),$ the
$X$-{\it restricted $k$-witness game} of cop and robber, is a
variant in which $\mathcal R$ can pass through any vertex of $G,$
$\mathcal C$ can move only inside $X$, and all visible positions of the
robber are at vertices of $X$. Then $X$ is called $k$-{\it winnable} if for
any starting positions of
$\mathcal C$ and $\mathcal R,$ the cop wins in the $X$-restricted
variant of the $k$-witness version of the game.  We will say that a
subset of vertices $X$ of a graph $G=(V,E)$ is $k$-{\it
  bidismantlable} if the vertices of $X$ can be ordered
$v_1,\ldots,v_m$ in such a way that for each vertex $v_i, 1\le i<m,$
there exist two adjacent or coinciding vertices $x,y$ with $y=v_j,
x=v_\ell$ and $j,\ell>i$ such that $N_{k}(v_i,G\setminus\{ x,y\})\cap
X_i\subseteq N_{1}(y),$ where $X_i:=\{ v_i,v_{i+1},\ldots,v_m\}$ and
$X_m=\{ v_m\}.$ We say that a graph $G=(V,E)$ is $k$-{\it
  bidismantlable} if its vertex-set $V$ is $k$-bidismantlable. In case
$k=2,$ the inclusion $N_{2}(v_i,G\setminus\{ x,y\})\cap X_i\subseteq
N_{1}(y),$ can be equivalently written as $N_2(v_i,G\setminus\{
x\})\cap X_i \subseteq N_1(y)$.  Any $(k,1)$-dismantlable graph is
$k$-bidismantlable but the converse is not true: for any $k\ge 2,$ the
3-sun $S_3$ presented in Fig.~\ref{fig:examples} is $k$-bidismantlable but
not $(k,1)$-dismantlable. In some proofs, we will denote by $x(v)$ and $y(v)$
the vertices eliminating a vertex $v$ in a $k$-bidismantling order.

\begin{proposition} \label{2witness}
Any graph $G=(V,E)$ of ${\mathcal C}{\mathcal W}{\mathcal W}(2)$  is $2$-bidismantlable.
\end{proposition}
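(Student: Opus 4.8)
The plan is to establish the stronger statement that every subset $X\subseteq V$ which is $2$-winnable in the $X$-restricted $2$-witness game is $2$-bidismantlable; applying this to $X=V$ gives the proposition. I would argue by induction on $|X|$, following the template of the converse direction of Theorem~\ref{copwin}: the base case $|X|=1$ is vacuous, and for the inductive step I would isolate one vertex $v$ together with a witnessing pair $(x,y)$, and then show that $X':=X\setminus\{v\}$ is again $2$-winnable, so that a $2$-bidismantling order of $X'$ furnished by the induction hypothesis extends to one of $X$ by placing $v$ first.

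To produce $v$, $x$, $y$, I would fix a parsimonious positional winning strategy for the cop in the $X$-restricted game and, among all robber trajectories played against it, choose one that survives for the maximal number of steps. Let $v\in X$ be the last visible position from which the robber still has a move: parsimony forces the cop to be at some vertex $c_0$ with $d(c_0,v)\ge 2$ there, since otherwise the robber would be captured immediately. Let $x$ and $y$ be the (at most two) positions the cop occupies during the phase that follows, with $y$ the position reached at the end of the phase; they are adjacent or coincide, being consecutive stops of a unit-speed walk, and $d(c_0,v)\ge 2$ already gives $x\ne v$. The core claim is $N_2(v,G\setminus\{x,y\})\cap X\subseteq N_1(y)$: any $w$ in the left-hand set is joined to $v$ by a path of length at most $2$ avoiding both $x$ and $y$, so, checking the capture conditions round by round (the cop moving first), the robber can traverse this path without being caught during the phase; were $w\notin N_1(y)$, the cop at $y$ could not capture him on the first move of the next phase, and the robber would survive strictly longer, contradicting maximality. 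Lemmas~\ref{cycle} and~\ref{oscillating_cop} are the tools that make such ``the robber slips past the cop'' arguments rigorous, and I would invoke them here and in the parallel case where the capture occurs in the middle of the phase.

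With $v,x,y$ in hand, the reduction to $X'$ rests on an observation that the witness setting makes essential: since $v$ itself belongs to $N_2(v,G\setminus\{x,y\})\cap X$, the inclusion yields $v\in N_1(y)$, and combined with $d(x,y)\le 1$ it gives $N_1(v)\cap X\subseteq N_1(y)$. Hence the substitution ``play $y$ wherever the $X$-strategy prescribes the stop $v$'' is move-legal, exactly as in Theorem~\ref{copwin}. I would then construct a cop strategy on $X'$ using one bit of memory that simulates the winning $X$-strategy against the very same robber trajectory, replacing every prescribed stop $v$ by $y$. The two runs coincide except at the displaced stops; since the robber's visible positions lie in $X'$, a simulated capture at a phase boundary occurs at a vertex distinct from $v$ and transfers verbatim to the real game. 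The one genuinely new phenomenon is that the robber may pass through $v$ at an invisible mid-phase step and so avoid the displaced cop; but then at the next visible moment he stands in $N_1(v)\cap X\subseteq N_1(y)$, and the real cop, which the memory bit keeps parked at $y$, captures him. Thus $X'$ is $2$-winnable, and since memory strategies convert to positional ones in these parity games, the induction closes.

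The step I expect to be the main obstacle is making the extraction of $(v,x,y)$ fully robust, and in particular excluding or separately treating the degenerate configuration in which the cop returns at the end of the phase to $v$ itself, so that $y=v$; this would make $N_2(v,G\setminus\{x,y\})$ ill-defined and violate $x,y\in X'$. Resolving it appears to require exploiting that the cop moves first in each round (so that a robber lingering at $v$ is captured the instant the cop steps back onto $v$) in order either to rule the case out for a suitably chosen parsimonious strategy or to supply an alternative witnessing pair avoiding $v$. The remaining bookkeeping, namely that the capture may happen mid-phase rather than at a boundary and that the simulation stays valid while the robber is momentarily off $X$, parallels Theorem~\ref{copwin} and should be routine once the selection is settled.
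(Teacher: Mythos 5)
Your overall architecture is the same as the paper's: extract a first vertex $v$ and a witnessing pair $(x,y)$ from a maximal robber trajectory against a parsimonious positional strategy, prove $N_2(v,G\setminus\{x,y\})\cap X\subseteq N_1(y)$ by a ``the robber could have survived longer'' argument, show $X\setminus\{v\}$ is still $2$-winnable via a one-bit-memory simulation that replaces the stop $v$ by $y$, and induct. The simulation half of your argument, including the observation $N_1(v)\cap X\subseteq N_1(y)$ and the treatment of a robber slipping through $v$ mid-phase, matches the paper's.

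However, the degenerate configuration you flag at the end --- the cop's end-of-phase position $y$ coinciding with the robber's last visible position $v$ --- is a genuine gap, not a routine loose end, and neither of your two proposed fixes is how it gets resolved. The case cannot be ruled out by choosing the strategy better, and the resolution is not an alternative pair for the same $v$ in general. What the paper does when $y=v_1$ is: if $N_1(y)\cap X\subseteq N_1(x)$ it takes $x(v_1)=y(v_1)=x$; otherwise it abandons $v_1$ as the first eliminated vertex and instead eliminates some $v\in N_1(y)\cap X\setminus N_1(x)$, proving $N_2(v,G\setminus\{x,y\})\cap X\subseteq N_1(y)$ for that $v$. That last verification needs to control common neighbors of $v$ and a vertex $u$ at distance $2$ that lie \emph{outside} $X$, and for this the paper runs a strictly stronger induction than yours: the hypothesis is not merely ``$X$ is $2$-winnable'', but also that the already-deleted vertices $u_1,\ldots,u_\ell$ of $V\setminus X$ come equipped with an elimination order and witnesses $x(u_i),y(u_i)$ satisfying the bidismantling inclusion relative to $X_i=\{u_i,\ldots,u_\ell\}\cup X$. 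The argument picks the latest such deleted common neighbor $r$ and uses $u,v\in N_1(r)\subseteq N_1(y(r))$ with $y(r)\in X$ to manufacture the needed common neighbor inside $X$. Your induction, which carries no data about $V\setminus X$, cannot reproduce this step, so as written the proof does not close.
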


\begin{proof}
  Suppose that a subset $X \subseteq V$ is $2$-winnable and assume
  that there exists an order $u_1, \ldots u_\ell$ on the vertices of
  $V\setminus X$ such that for each $1\le i\le \ell$, there exist the
  vertices $x(u_i), y(u_i) \in X_{i+1}$ such that $N_2(u_i,
  G\setminus\{x(u_i),y(u_i)\}) \cap X_i \subseteq N_1(y(u_i))$ holds,
  where $X_i = \{u_{i}, \ldots, u_\ell\} \cup X$. We show by induction
  on the size of $X$ that the set $X$ is $2$-bidismantlable. Assume
  $|X| \geq 2$, otherwise, $X$ is trivially $2$-bidismantlable.
  We first show that we can select a vertex $v_1\in X$, a vertex $y
  \in N(v_1)\cap X$, $y\neq v_1$, and a vertex $x \in N_1(y)\cap
  N(v_1)\cap X$ such that $N_2(v_1,G\setminus\{x,y\}) \cap X \subseteq
  N_1(y)$. If there exists a vertex $y \in X$ such that $X \subseteq
  N_1(y)$, then taking $x := y$ and any vertex of $X\setminus \{ y\}$
  as $v_1$, we are done. So, further we assume that $X$ does not
  contain dominating vertices.

  Consider a parsimonious winning strategy of the cop and a maximal
  trajectory of the robber.  First suppose that the capture happened
  when $\mathcal R$ is invisible. Let $v_1$ be the last position where the robber
  is visible. Let $a$ be the position of the cop when the robber shows
  up in $v_1$. We know that $v_1 \notin N(a)$, otherwise the cop would
  have captured the robber before. Let $y$ be the vertex where
  $\cC$ moves when he sees $\cR$ in $v_1$. Since the robber is
  captured when he is invisible, it implies he is captured in $v_1$.
  Moreover, since the robber follows a maximal trajectory, it implies
  that $N_2(v_1,G\setminus\{y\}) \cap X = \{v_1\}$, otherwise the
  robber could live longer. Consequently, by setting $x := y$, we have
  $N_2(v_1, G\setminus\{x,y\}) \cap X \subseteq N_1(y)$.

  Now suppose that $\mathcal C$ captures $\mathcal R$ at the next
  visible move. This means that when $\mathcal C$ sees $\mathcal R$,
  the cop is located in some vertex $y\in X$ and the robber is located
  in some vertex $w\in X$ and $w\in N_1(y)$ holds. Then the cop moves
  from $y$ to $w$ and captures $\mathcal R$ there. Denote by $v_1$ the
  vertex of $X$ where $\mathcal R$ is visible for the next-to-last
  time.  Suppose that after having seen the robber in $v_1,$ the cop
  moves first to a vertex of $X$ which we denote by $x$ and then to
  vertex $y$. Note that $x \neq v_1$ (otherwise the robber would have
  been caught when he shows up in $v_1$) and that $y$ may coincide
  with $x$ or with $v_1$. When the cop moves to $x$, the
  robber first moves to some vertex $u\in N_1(v_1)\setminus \{ x\}$
  and then, when $\cC$ moves to $y$, $\cR$ moves to a vertex $w\in
  N_1(u)\cap X\subseteq (N_2(v_1, G\setminus \{ x\})\cup\{ x\}).$ By
  the definition of the vertices $y$ and $w$, in $y$ the cop sees (for
  the last time) the robber which is located at $w$ and with the next
  move captures him. Since $\mathcal R$ follows a maximal sequence of
  moves before his capture, any vertex of $N_2(v_1, G\setminus \{
  x\})\cap X$ must be adjacent to $y,$ otherwise, if there exists
  $z\in N_2(v_1, G\setminus \{ x\})\cap X$ not adjacent to $y,$
  instead of moving to $w,$ in two moves the robber can safely reach
  $z$ and survive for a longer time. Thus $N_2(v_1,G\setminus\{
  x\})\cap X\subseteq N_1(y)$ holds.

  If $v_1 \neq y$, then we are done. If $v_1 = y$, then $N_2(y,G\setminus\{x\}) \cap X \subseteq
  N_1(y)$. If $N_1(y) \cap X \subseteq N_1(x)$, then $N_2(v_1,
  G\setminus \{x\})\cap X \subseteq N_1(y) \cap X \subseteq N_1(x)$
  and thus by setting $y(v_1) := x(v_1) := x$, we have $N_2(v_1,
  G\setminus\{x(v_1),y(v_1)\}) \cap X \subseteq N_1(y(v_1))$ and again we are done.  Suppose
  now that there exists a vertex $v \in N_1(y) \cap X$ which does not belong
  to $N_1(x)$. We assert that $N_2(v,G\setminus\{x,y\}) \cap X \subseteq
  N_1(y)$. Since $N_1(v, G\setminus\{x,y\}) \cap X \subseteq N_2(y,
  G\setminus\{x\}) \cap X \subseteq N_1(y)$, any neighbor $u$ of $v$ in $X$
  is a neighbor of $y$. Consider a vertex $u \in
  N_2(v,G\setminus\{x,y\}) \cap X$ and suppose there exists a vertex
  $r \in N_1(v) \cap N_1(u) \cap X \setminus\{x,y\}$. Then $r \in N_1(y)$
  and thus $u \in N_2(y, G\setminus\{x\}) \cap X \subseteq
  N_1(y)$. Suppose now that there does not exist any vertex $r \in
  N_1(v) \cap N_1(u) \setminus\{x,y\}$ that belongs to $X$. Among all
  vertices in $N_1(v) \cap N_1(u) \setminus\{x,y\}$, let $r$ be the last vertex
  occurring in the ordering $u_1, \ldots,
  u_{\ell}$. Then, since $u,v \in N_1(r) \cap X$, $u,v \in N_1(y(r))$
  and consequently, $y(r) \neq x$, since $v \notin N_1(x)$. By our
  choice of $r$, we know that $y(r) \in X$ and thus there exists a
  vertex in $N(v) \cap N(u) \cap X \setminus\{x,y\}$, a contradiction.
  Therefore, by setting $x(v):=y(v):=y,$ we have $N_2(v,
  G\setminus\{x(v),y(v)\}) \cap X \subseteq N_1(y(v)).$
  In the rest of the proof, we denote by $v_1$ the vertex satisfying
  this condition, it can be either $v_1$ or $v$.

  Consider the set $X':=X\setminus\{ v_1\}$. Note that
  $V\setminus X' = V\setminus X \cup \{v_1\}$, and there exists an
  order $u_1, \ldots u_\ell, u_{\ell+1}:=v_1$ on the vertices of
  $V\setminus X'$ such that for each $1\le i\le \ell+1$, there
  exist $x(u_i), y(u_i) \in X_{i+1}$ such that $N_2(u_i,
  G\setminus\{x(u_i),y(u_i)\}) \cap X_i \subseteq N_1(y(u_i)).$ We show that the set
  $X'$ is 2-winnable as well.  Consider a positional
  parsimonious winning strategy $\sigma$ of the cop in $X$.  For any
  positions $c$ of the cop and $r$ of the robber in $X'$, we note
  $\sigma(c,r) = (c_1,c_2)$. As in the proof of Theorem~\ref{copwin},
  we construct a strategy that uses one bit of memory $m$: it is a
  function that associates to each $(c,r,m)$ a couple $((c_1',
  c_2'),m)$. As in the proof of Theorem~\ref{copwin}, the intuitive
  idea is that the cop plays using $\sigma$, except when he is in $y$
  and his memory contains $1$; in that case, he plays using $\sigma$
  as if he was in $v_1$.

  If $m = 0$ or $c \neq y$, let $(c_1, c_2) = \sigma(c,r)$. If $c_1 =
  v_1$, then $c'_1 = y$ and $c'_1 = c_1$ otherwise.  If $c_2 = v_1,$
  then $\sigma'(c,r,m) = ((c'_1,y),1)$ and $\sigma'(c,r,m) =
  ((c'_1,c_2),0)$ otherwise. If $m = 1$ and $c = y$, let $(c_1, c_2) =
  \sigma(v_1,r)$. If $c_1 = v_1$, then $c'_1 = y$ and $c'_1 = c_1$
  otherwise. If $c_2 = v_1$, then $\sigma'(y,r,1) = ((c'_1,y),1)$ and
  $\sigma'(y,r,1) = ((c'_1,c_2),0)$ otherwise. Since $N_1(v_1) \cap X
  \subseteq N_1(y)$, one can easily check that $\sigma'$ is a valid
  strategy for the $X'$-restricted game.

  By way of contradiction, suppose now that there exists an infinite
  $X'$-valid sequence $S_r'$ of moves of the robber in the
  $X'$-restricted game allowing him to escape forever against a cop
  using the strategy $\sigma'$.  First note that the sequence of moves
  $S_c$ of the cop playing $\sigma$ against $S_r'$ differs from the
  sequence of moves $S_c'$ of the cop playing $\sigma'$ against $S_r'$
  only in the positions where the cop is in $v_1$ in $S_c$.

  We show that there exists an infinite sequence $S_r$ in the
  $X$-restricted game enabling the robber to escape
  forever against a cop using the strategy $\sigma$. The visible
  positions of $\mathcal R$ in $S_r$ will coincide with the visible
  positions of $\mathcal R$ in $S_r'$ (thus the cop's strategies
  $\sigma$ and $\sigma'$ behave in the same way against both
  sequences). It is sufficient to show that if during a phase of
  $S_r'$, the robber goes from $r_0' \in X'$ to $r_2' \in X'$ via
  $r_1' \in V(G)$, then in the $X$-restricted game
  where the cop plays with strategy $\sigma$ (going first to $c_1$ and
  then to $c_2$), there exists $r_1$ such
  that $\mathcal R$ can go from $r_0'$ to $r_2'$ via $r_1$ without
  being captured in $r_1$.

  If $r'_1 \neq v_1$ or if $v_1 \notin \{c_1,c_2\},$ then one can choose
  $r_1 = r'_1$ (since $r_0', r_2' \in X'$, they are different from
  $v_1$). Thus, we may assume that $r_1' = v_1$ and that $c_1 = v_1$
  or $c_2 = v_1$. If $c_2 \in \{v_1,y\}$, then $c'_2 = y$. Since $r_1'
  = v_1$, $r'_2 \in N_1(v_1)\cap X \subseteq N_1(y)$ and thus
  the robber is captured when he shows up in $r'_2$, i.e., $S_r'$ does
  not enable the robber to escape forever. Consequently, $c_2 \notin
  \{v_1,y\}$ and $c_1 = v_1.$ In this case,
  $(r_0',r_1 := y,r_2')$ is a $X$-valid sequence since $r'_0, r'_2 \in
  N_1(v_1)\cap X \subseteq N_1(y)$ and moreover $y \notin
  \{c_1,c_2\}$ (since $c_1 = v_1$ and $y \neq c_2$). It implies that
  there exists an infinite $X$-valid sequence $S_r$ enabling the
  robber to escape forever, a contradiction.

  Starting from a positional strategy for the $X$-restricted game, we
  have constructed a winning strategy using memory for the
  $X'$-restricted game. As mentioned in the introduction, it implies
  that there exists a positional winning strategy for the
  $X'$-restricted game. Consequently, the set $X':=X\setminus\{ v_1\}$
  is 2-winnable as well. By induction assumption, $X'$ admits a
  2-bidismantling order $v_2,\ldots,v_m$. Then clearly
  $v_1,v_2,\ldots,v_m$ is a 2-bidismantling of $X.$ If $G$ is
  2-winnable, then its set of vertices is 2-winnable and therefore
  2-bidismantlable, showing that $G$ is 2-bidismantlable.
\end{proof}

We continue with two examples. The first one shows that we cannot
replace in the definition of 2-bidismantlability the condition
$N_{2}(v_i,G\setminus\{ x\})\cap X_i\subseteq N_{1}(y)$ by a weaker
condition $N_{2}(v_i,G_i\setminus\{ x\})\subseteq N_{1}(y)$ (i.e.,
instead of all vertices of $X_i$ reachable from $v_i$ by paths of
length 2 avoiding $x$ of the whole graph $G$ to consider only the
vertices reachable by such paths of the subgraph $G_i$). The second
example shows that unfortunately 2-bidismantlability is not a
sufficient condition.

\begin{proposition} Let $G$ be the graph from Fig.
  \ref{fig-contre-ex1}. Then $G$ admit a dismantling order satisfying the
  condition $N_{2}(v_i,G_i\setminus\{ x\})\subseteq N_{1}(y),$ however
  $G$ is not 2-bidismantlable nor 2-winnable.
\end{proposition}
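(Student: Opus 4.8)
The plan is to treat this as a verification about one explicit graph, since all three assertions concern the single graph of Fig.~\ref{fig-contre-ex1}. The cleanest organization splits the statement into its three parts: exhibiting a weak dismantling order, refuting 2-bidismantlability, and then deducing non-2-winnability for free from the already-proved Proposition~\ref{2witness}.

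First I would read off from Fig.~\ref{fig-contre-ex1} an explicit ordering $v_1,\ldots,v_n$ of the vertices and check, step by step, that it satisfies the weak condition $N_2(v_i,G_i\setminus\{x\})\subseteq N_1(y)$ for appropriate adjacent-or-coinciding $x=v_\ell$, $y=v_j$ with $j,\ell>i$. The point to keep in mind here is that $G_i=G(X_i)$, so only length-$2$ paths that remain inside the not-yet-eliminated set $X_i$ are counted; this makes each step a finite local check. This part is routine and follows the figure directly.

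The heart of the proof is showing that $G$ is \emph{not} 2-bidismantlable, and here the key observation is exactly where the weak condition and the genuine 2-bidismantling condition diverge: in $N_2(v_i,G\setminus\{x,y\})\cap X_i$ the length-$2$ connecting path is allowed to detour through a vertex already removed (one lying outside $X_i$) before returning to a vertex of $X_i$, whereas the weak version forbids this. The counterexample graph is engineered so that such external detours create distance-$2$ witnesses that destroy domination. To rule out every 2-bidismantling order, I would argue at the first elimination: any such order must legally remove some $v_1$ while $X_1=V$, and since $N_2(v_1,G\setminus\{x,y\})\cap V=N_2(v_1,G\setminus\{x,y\})$, this requires a vertex $v_1$ together with an adjacent-or-coinciding pair $x,y$ such that $N_2(v_1,G\setminus\{x,y\})\subseteq N_1(y)$. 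Going through the candidate triples $(v_1,x,y)$ — reduced in number by the symmetries of $G$ — I would exhibit in each case a vertex $z$ at distance at most $2$ from $v_1$ in $G\setminus\{x,y\}$ that is not adjacent to $y$, thereby refuting every possible first step and hence every 2-bidismantling order. I expect this case analysis to be the main obstacle: the difficulty is being exhaustive over all first eliminations, and the argument only stays short if the graph indeed fails already at step~$1$ (rather than permitting a first fold and getting stuck later, which would force one to track the resulting induced subgraphs).

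Finally, non-2-winnability is immediate and requires no new work: by Proposition~\ref{2witness} every graph in $\CWW(2)$ is 2-bidismantlable, so its contrapositive together with the previous paragraph yields $G\notin\CWW(2)$, i.e., $G$ is not 2-winnable. This closes the proof and, incidentally, confirms that the weaker inclusion using $G_i$ cannot replace the inclusion using all of $G$ in the definition of 2-bidismantlability.
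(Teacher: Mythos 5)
Your overall organization matches the paper's: exhibit an explicit order for the weak condition, refute 2-bidismantlability by a finite case analysis, and get non-2-winnability for free from Proposition~\ref{2witness}. The first and third parts are fine. But the core of your argument rests on the wrong branch of the alternative you yourself flagged: you plan to refute \emph{every possible first elimination}, i.e.\ to show that no triple $(v_1,x,y)$ with $X_1=V$ satisfies $N_2(v_1,G\setminus\{x,y\})\subseteq N_1(y)$. This is false for the graph of Fig.~\ref{fig-contre-ex1}: the vertices $a_1,a_2,a_3$ each admit a legal first fold (e.g.\ $N_2(a_1,G\setminus\{y_3,y_1\})\subseteq N_1(y_1)$ with $x(a_1)=y_3$, $y(a_1)=y_1$, and cyclically for $a_2,a_3$). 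So when you ``go through the candidate triples,'' you will not find a contradiction at step~1, and your proof collapses exactly at the point you identified as the risk.

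The graph is engineered to permit the three folds $a_1,a_2,a_3$ and only then get stuck, so the argument must go one level deeper. The way the paper handles this is to show that for every $v\in V(G)\setminus\{a_1,a_2,a_3\}$ and every admissible pair $x(v),y(v)$, there is a witness $z(v)\in N_2(v,G\setminus\{x(v),y(v)\})\setminus\{a_1,a_2,a_3\}$ with $z(v)\notin N_1(y(v))$; because the witness is chosen \emph{outside} $\{a_1,a_2,a_3\}$, it survives in $X_i$ no matter which subset of $\{a_1,a_2,a_3\}$ has already been removed, so no vertex other than the $a_i$'s can ever be eliminated, and the order terminates after at most three steps with $|X_i|\ge 2$. (By the threefold symmetry of the graph this reduces to checking $v\in\{u_1,t_1,y_1\}$.) You would need to replace your ``refute the first step'' plan with this two-stage argument, or something equivalent that tracks which vertices can have been removed before $v$.
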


\begin{figure}[h!]
\begin{center}
{\input{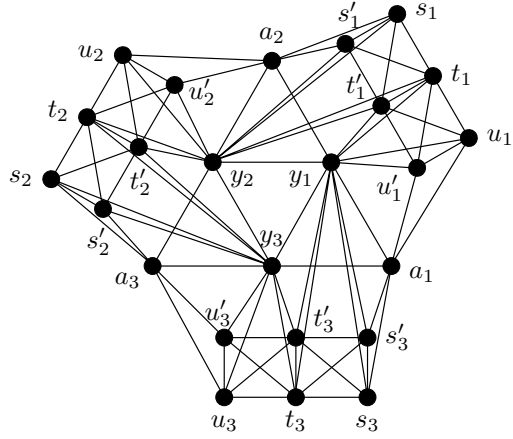}}
\caption{A weakly $2$-bidismantlable graph that is not
  $2$-bidismantlable}\label{fig-contre-ex1}
\end{center}
\end{figure}

\begin{proof}
Consider the following order on the vertices of $G$: $a_1$, $a_2$,
$a_3$, $u_1$, $u_1'$, $u_2'$, $u_3$, $u_3'$, $s_1$, $s_1'$, $s_2$,
$s_2'$, $s_3$, $s_3'$, $t_1$, $t_1'$, $t_2$, $t_2'$, $t_3'$, $t_3'$,
$y_1$, $y_2$, $y_3$. For each vertex $v \in V(G)\setminus\{y_3\}$,
we give below two adjacent vertices $x(v), y(v)$ that are eliminated
later than $v$ and such that $N_2(v,G_i\setminus\{x(v),y(v)\})
\subseteq N_1(y(v))$ (the vertex $x(v)$ is not defined if
$N_2(v,G_i\setminus\{y(v)\}) \subseteq N_1(y(v))$).


$$
\begin{array}{|c|cccccccccccc|}
\hline
v & a_1 & a_2 & a_3 & u_1 & u_1' & u_2 & u_2' & u_3 & u_3' & s_1 &
s_1' & s_2 \\
\hline
y(v) & y_1 & y_2 & y_3 & t_1 & t_1 & t_2 & t_2 & t_3 & t_3 & y_2 & y_2 &
y_3 \\
\hline
x(v) & y_3 & y_1 & y_2 & y_1 & y_1 & y_2 & y_2 & y_3 & y_3 & - & -
& - \\
\hline
\hline
v & s_2' & s_3 & s_3' & t_1 & t_1' & t_2 & t_2' & t_3 & t_3'
& y_1 & y_2 & y_3\\
\hline
y(v) & y_3 & y_1 & y_1 & y_2 & y_2 & y_3 & y_3 & y_1 & y_1 & y_2 & y_3
& -\\
\hline
x(v) & - & - & - & y_1 & y_1 & - & - & - & - & - & - & - \\
\hline
\end{array}
$$






\medskip
We prove now that $G$ is not $2$-bidismantlable. Note that for $a_1$
(resp. $a_2, a_3$), there exist $y(a_1) = y_1$ (resp. $y(a_2) = y_2,
y(a_3) = y_3$) and $x(a_1) = y_3$ (resp. $x(a_2) = y_1, x(a_3) = y_2$)
such that $N_2(a_1,G\setminus\{y_1,y_2\}) \subseteq N_1(y_1)$ (resp.
$N_2(a_2,G\setminus\{y_2,y_3\}) \subseteq N_1(y_2)$,
$N_2(a_3,G\setminus\{y_3,y_1\}) \subseteq N_1(y_3)$). Consequently,
any $2$-bidismantling order of $G$ can start with $a_1, a_2, a_3$. In fact,
one can check that any $2$-bidismantling of $G$ must start with a
permutation of $a_1,a_2,a_3.$ We will show now that it is impossible
to extend a $2$-bidismantling order starting with $a_1,a_2,a_3$.
To prove this, it suffices  to show that for any $v \in V(G) \setminus\{a_1,a_2,a_3\}$
and for all adjacent vertices $x(v), y(v) \in N_1(v)$, there exists
a vertex $z(v) \in N_2(v,G\setminus\{x(v),y(v)\})\setminus \{a_1,a_2,a_3\}$
such that $z(v)\notin N_1(y(v))$.  In view of symmetry of $G$, it is sufficient
to check this property for $v \in \{u_1, t_1, y_1\}$.

If $v = u_1$, then $y(u_1), x(u_1) \in \{a_1, u_1', t_1, t_1',
y_1\}$. If $y(u_1) \in \{a_1,u'_1,y_1\}$, then either $t_1 \neq
x(v_1)$, or $t_1' \neq x(v_1).$ In both cases, $s_1 \in
N_2(v,G\setminus\{x(u_1),y(u_1)\})$ and $s_1 \notin N_1(y(u_1))$. By
symmetry, we can suppose that $y(u_1) = t_1$. Since $a_1 \notin N_1(t_1)$, we must have
$x(u_1) \neq a_1$ and consequently, $s_3 \in
N_2(v,G\setminus\{x(u_1),y(u_1)\})$ and $s_3 \notin N_1(y(u_1))$.

If $v = t_1$, then $y(t_1), x(t_1) \in \{u_1, u_1', s_1, s_1', t_1',
y_1, y_2\}$. If $y(t_1) \in \{y_1,u_1,u_1'\}$
(resp. $y(t_1)=\{y_2,s_1,s_1'\}$), then set $z(v) = s_1$ (resp. $z(v)
= u_1$); in all cases, $z(v) \in N_2(v,G\setminus\{x(t_1),y(t_1)\})$
and $z(v) \notin N_1(y(t_1))$.  If $y(t_1) = t_1'$, then either
$x(t_1) \neq y_2$ or $x(t_1) \neq y_1$; in both cases, $y_3 \in
N_2(v,G\setminus\{x(t_1),y(t_1)\})$ and $y_3 \notin N_1(y(t_1))$.

If $v = y_1$, since $N_1(y_1) \subseteq N_1(y(y_1))$, the vertex $y(y_1)$
must belong to $N_1(y_1) \cap N_1(y_2) \cap N_1(y_3)$. Consequently,
by symmetry, we can assume  that $y(y_1)=y_2$. However, since $u_1 \in
N_1(y_1)\setminus N_1(y_2)$, we obtain $u_1 \in
N_2(y_1,G\setminus\{x(y_1),y(y_1)\})$ and $u_1 \notin N_1(y(y_1))$. This completes the proof that
$G$ is not $2$-bidismantlable. Since any graph $G \in \CWW(2)$ is $2$-bidismantlable,
it also implies that $G  \notin \CWW(2)$.
\end{proof}

\begin{proposition} Let $G$ be the graph from Fig. \ref{fig-contre-ex2}. Then $G$ is 2-bidismantlable,
however $G\notin \CWW(2).$
\end{proposition}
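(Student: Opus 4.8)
The plan is to establish the two assertions separately: first that $G$ admits a $2$-bidismantling order, and then that the robber can evade a single cop forever in the $2$-witness game, so that $G\notin\CWW(2)$. The first part is a finite combinatorial verification, while the second requires producing and analysing an explicit robber strategy; the latter is where the real work lies.

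For the bidismantlability I would exhibit an explicit elimination ordering $v_1,\ldots,v_m$ of the vertices of $G$ together with, for each $v_i$ with $i<m$, a pair of adjacent (or coinciding) vertices $x(v_i),y(v_i)$ occurring later in the order, and then verify the inclusion $N_2(v_i,G\setminus\{x(v_i),y(v_i)\})\cap X_i\subseteq N_1(y(v_i))$. As in the preceding proposition, this is best presented in a table listing $v$, $y(v)$, and $x(v)$; the graph of Fig.~\ref{fig-contre-ex2} has a $3$-fold rotational symmetry, so it suffices to check the condition on one representative of each orbit and propagate by symmetry. The key point, and what distinguishes this example from the graph of Fig.~\ref{fig-contre-ex1}, is that here the order can be completed all the way down to a single vertex; this is exactly why $G$ is $2$-bidismantlable whereas the earlier graph is not.

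For the non-winnability I would describe an evasion strategy for the robber and prove it never fails against any positional cop strategy. The robber exploits the symmetry of $G$: at the start of each phase he occupies a vertex of a cycle lying in a ``branch'' far from the cop, and he uses the two moves of the phase either to stay safely inside that cycle or to slide to a symmetric branch while the cop is committed to its two pre-decided moves. Since the robber is invisible during the second move of each phase, Lemma~\ref{cycle} guarantees that whenever he sits on a cycle he has a safe move at the cop's next step; and Lemma~\ref{oscillating_cop} shows that the only way the cop can seal off a branch is by oscillating on a separating edge $\{a,b\}$, so I must verify that the robber never needs to cross such an edge---he always has an alternative route through another part of $G$. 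Maintaining the invariant ``at the beginning of each phase the robber occupies a vertex of a cycle that the cop cannot reach in two moves'' then yields an infinite evading trajectory, and $G\notin\CWW(2)$ follows.

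The main obstacle will be the preservation of this invariant across a single phase. One must check, for every possible pair of cop moves (including oscillations and passes), that the robber has a response keeping him inside an unguarded cycle, and that when he chooses to migrate to a symmetric branch the shortest route he uses is not blocked by the cop's two committed moves. The verification relies on the degrees and distances in $G$ being large enough that no single cop, seeing the robber only every second move, can simultaneously threaten the robber's current cycle and all the escape edges leading to the other branches; this is precisely where the additional structure of Fig.~\ref{fig-contre-ex2} over the $3$-sun is exploited. Once the invariant is shown to survive one phase, induction on the number of phases completes the proof that $G\notin\CWW(2)$, and together with the first part this establishes that $2$-bidismantlability is not sufficient for membership in $\CWW(2)$.
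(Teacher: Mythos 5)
Your first half is acceptable in spirit: exhibiting an explicit elimination order with a table of $x(v),y(v)$ and checking the inclusions is exactly what the paper does (order $a,b,c,d,e,f,g,h,i$ with $y$-values $f,f,g,g,i,i,i,i$). But be careful with the symmetry you invoke: the paper's table is not $3$-fold rotationally symmetric, so "check one orbit representative" needs justification before it can replace the full finite verification. Since the check is finite either way, this part is a matter of presentation rather than a gap.

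The second half has a genuine gap. You never specify the robber's strategy, and the invariant you propose to maintain --- ``at the beginning of each phase the robber occupies a vertex of a cycle that the cop cannot reach in two moves'' --- is almost certainly unachievable in this $9$-vertex graph of small diameter, where the cop's $2$-ball covers most of $V(G)$. The paper maintains a much weaker (and correct) invariant: it fixes a response map $c\mapsto r(c)$ assigning to each possible cop position $c$ a \emph{non-adjacent} vertex $r(c)$ (e.g.\ $a\mapsto d$, $e\mapsto b$, $i\mapsto b$, etc.), and then verifies that whenever the cop's two committed moves change his endpoint from $c$ to $c'$, the robber can travel from $r(c)$ to $r(c')$ in two steps whose intermediate vertex avoids the cop's intermediate vertex. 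That verification is the whole content of the argument, and it hinges on specific adjacency facts such as $h\notin N(a)\cup N(b)\cup N(f)$ (so the $d\leftrightarrow e$ transit through $h$ is never blocked) and $c\notin N(e)\cup N(h)\cup N(i)$ (for the $b\leftrightarrow d$ transit through $c$). Without naming the response map and checking these transitions, the claim $G\notin\CWW(2)$ is not established; and Lemmas~\ref{cycle} and~\ref{oscillating_cop}, which you lean on, do not by themselves rule out the cop cornering the robber in a graph this small.
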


\begin{figure}[h!]
\begin{center}
{\input{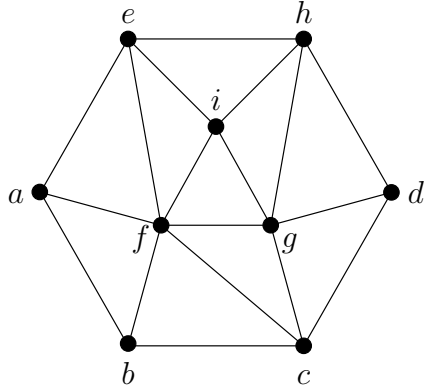}} \caption{A
$2$-bidismantlable graph $G\notin \CWW(2)$} \label{fig-contre-ex2}
\end{center}
\end{figure}

\begin{proof}
The graph presented in Fig.~\ref{fig-contre-ex2} is
$2$-bidismantlable with the following $2$-bidismantling order $a, b, c, d,
e, f, g, h, i,$ where each vertex $v$ is eliminated by the vertices
$x(v)$, $y(v)$ defined as follows:
$$
\begin{array}{|c|ccccccccc|}
\hline
v & a & b & c & d & e & f & g & h & i \\
\hline
y(v) & f & f & g & g & i & i & i & i & - \\
\hline
x(v) & e & c & f & h & - & - & - & - & -\\
\hline
\end{array}
$$
\medskip

However, one can show that for any vertex $c$  there exists a
vertex $r$ such that, if at step $i \geq 0$ the cop moves to (or
starts in) $c$ (going through any intermediate vertex), then the
robber can move to (or starts in) $r$ without being caught. Since for
any such couple $(c,r)$ the vertices $c$ and $r$ are not adjacent,
it means that the cop cannot catch the robber in this graph. The definition
of the pairs $(c,r)$ is given in the following table:

\medskip
$$
\begin{array}{|c|ccccccccc|}
\hline
c & a & b & c & d & e & f & g & h & i \\
\hline
r & d & d & e & e & b & d & e & b & b \\
\hline
\end{array}
$$
\medskip


Note that if the robber wants to go from $d$ to $e$, (resp., from $e$ to $d$),
then this means that the cop is in $a, b$ or $f$
(resp., wants to go to $a, b$ or $f$). Since $h \notin N(a)\cup
N(b)\cup N(f)$, $h$ cannot be the intermediate vertex used by the
cop. Thus, the robber can always go from $d$ to $e$ (resp. from $e$ to
$d$) via $h$.

If the robber wants to go from $b$ to $d$ (resp., from $d$
to $b$), then this implies that the cop is in $e, h, i$ (resp., wants to go
to $e, h, i$). Since $c \notin N(e)\cup N(h)\cup N(i)$, $c$ cannot be
the intermediate vertex used by the cop.  Thus, the robber can always
go from $b$ to $d$ (resp. from $d$ to $b$) through $c$.

If the robber wants to go from $b$ to $e$ (resp. from $e$ to $b$),
then this means that the cop neither starts in $a$ nor $f$ (because in this
case the robber would have been in $d$), nor goes to $a$ or $f$
(since in this case, the robber wants to go in $d$). Moreover, the
intermediate vertex used by the cop is different from $a$ or $f$.
In the first case (resp. second case), the robber
can go from $b$ to $e$ via $a$ (resp. $f$).
\end{proof}

We continue with a condition on 2-bidismantling which turns out to
be sufficient for 2-winability. We say that a graph $G$ is
\emph{strongly
  2-bidismantlable} if $G$ admits a 2-bidismantling order such that
for any vertex $v_i, i<n$, $y(v_i)=x(v_i)$ or $N_{2}(v_i,G\setminus\{
y(v_i)\})\cap X_i\subseteq N_{2}(x(v_i),G\setminus\{ y(v_i)\})$
(recall that $x(v)$ and $y(v)$ denote the vertices eliminating a
vertex $v$ in a 2-bidismantling order).


\begin{proposition} If a graph $G$ is strongly 2-bidismantlable, then
  $G\in \CWW(2)$.
\end{proposition}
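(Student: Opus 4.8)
The plan is to follow the same inductive architecture as the sufficiency direction of Theorem~\ref{copwin} and the reduction in Proposition~\ref{2witness}, but now building a winning strategy \emph{up} the dismantling order rather than tearing it down. Fix a strong $2$-bidismantling order $v_1,\ldots,v_n$ of $G$ and, for each $i$, write $X_i=\{v_i,\ldots,v_n\}$, $y=y(v_i)$, $x=x(v_i)$. I would prove by reverse induction on $i$ that the level set $X_i$ is $2$-winnable in the $X_i$-restricted $2$-witness game. The base case $X_n=\{v_n\}$ is immediate, and the target $X_1=V$ gives $G\in\CWW(2)$. Note at the outset that $v_i\in N_2(v_i,G\setminus\{x,y\})\cap X_i\subseteq N_1(y)$, so $y$ is a neighbour of $v_i$; this is what will let the cop convert a simulated capture at $y$ into a real capture at $v_i$.

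For the inductive step I would assume a positional parsimonious winning strategy $\sigma_{i+1}$ for the $X_{i+1}$-restricted game and manufacture, exactly as in the two cited proofs, a strategy $\sigma_i$ for the $X_i$-restricted game that uses a single bit of memory. The governing idea is that the cop plays $\sigma_{i+1}$ while interpreting a robber that is visible at $v_i$ as if it were visible at its dominator $y$; the memory bit records whether the cop is currently shadowing, i.e.\ whether his simulated position $y$ should in reality be the capture vertex $v_i$. Since $\sigma_{i+1}$ never stands on $v_i$, the only time the real cop visits $v_i$ is to make a capture, precisely as in Theorem~\ref{copwin}, and as in Proposition~\ref{2witness} the inclusion $N_1(v_i)\cap X_i\subseteq N_1(y)$ guarantees that $\sigma_i$ is a legal strategy.

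The heart of the argument is a correctness lemma: every $X_i$-valid robber trajectory is shadowed by an $X_{i+1}$-valid one obtained by replacing each visible occurrence of $v_i$ with $y$, in such a way that whenever $\sigma_{i+1}$ captures the shadow robber, $\sigma_i$ captures the real robber at the same step or at the next move (as in the final case analysis of Theorem~\ref{copwin}). Checking this phase by phase is routine except when the real robber is visible at $v_i$ and makes its two moves $v_i\to r_1\to r_2$ with $r_2\in X_i$ and $r_1$ off the cop's intermediate vertex. When $x=y$ the robber can reach only neighbours of $y$ while avoiding $y$, so the plain shadowing at $y$ works verbatim; the genuine difficulty is $x\ne y$, where I would split on the invisible vertex $r_1$. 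If the phase avoids $x$, then $r_2\in N_2(v_i,G\setminus\{x\})\cap X_i\subseteq N_1(y)$ by the ($k=2$) form of the $2$-bidismantling condition, so $r_2$ is a neighbour of $y$; the shadow robber realizes the phase by a single step out of $y$ padded with a pass, the invariant is preserved, and the cop sitting at $y$ wins just as in the $X_{i+1}$-game. If instead the phase passes through $x$, then the cop's commitment to occupy $x$ at the intermediate step intercepts it, and the strong inclusion $N_2(v_i,G\setminus\{y\})\cap X_i\subseteq N_2(x,G\setminus\{y\})$ certifies that every vertex the robber could still reach from $v_i$ while avoiding $y$ is already reachable from $x$ while avoiding $y$; this is what lets the detour through $x$ be rerouted into a legal shadow phase that loses no coverage relative to $v_i$.

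The step I expect to be the main obstacle is exactly this coupling of the single invisible move with the cop's two pre-committed moves: one must choose the shadow intermediate vertex so that it avoids the shadow cop's first position while keeping the shadow walk legal and consistent with $\sigma_{i+1}$, and it is only the strong inclusion $N_2(v_i,G\setminus\{y\})\cap X_i\subseteq N_2(x,G\setminus\{y\})$ that makes guarding $x$ simultaneously block every escape of the robber through $x$ and leave the cop correctly positioned at $y$ against every escape avoiding $x$. That plain $2$-bidismantlability does not suffice for this coupling is precisely the content of the preceding counterexample (Fig.~\ref{fig-contre-ex2}), so the strengthening cannot be dropped. Finally, as these are parity games, the resulting one-bit-memory winning strategy for $X_i$ implies a positional one (as invoked for Theorem~\ref{copwin} and Proposition~\ref{2witness}), which completes the induction and yields $G\in\CWW(2)$.
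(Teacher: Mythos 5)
Your overall architecture is the paper's: reverse induction along the dismantling order, a shadow robber replacing the visible occurrences of $v_i$, a parsimonious strategy lifted from $X_{i+1}$, and a case analysis showing that a capture of the shadow yields a capture of the real robber within one extra phase. (Two side remarks: the one-bit memory is not needed here --- it belongs to the necessity directions, where the \emph{cop's} position is projected; in this sufficiency direction the lifted strategy is positional. And the inclusion $N_1(v_i)\cap X_i\subseteq N_1(y)$ plays no role in legality, since the cop never stands on $v_i$ except to capture.)

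The genuine gap is your choice of the shadow vertex. You project a robber visible at $v_i$ to $y$; the paper projects it to $x$, and this is not cosmetic. Consider the configuration where the real robber shows up at $v_i$ and the cop sits at a vertex $c$ with $c\in N_1(y)$ but $c\notin N_1(x)\cup N_1(v_i)$ (this can already be the opening configuration). The parsimonious strategy $\sigma_{i+1}$ then captures the shadow at $y$ immediately, so the shadow game terminates --- but the real robber at $v_i$ is safe, and the cop, having moved to $y$, faces a robber who can slip through $x$ to a vertex of $N_2(v_i,G\setminus\{y\})\cap X_i\subseteq N_2(x,G\setminus\{y\})$, a set that need not be dominated by $y$, by $x$, or by anything the cop can reach in two moves. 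The simulation has ended without a real capture and without any progress measure; the robber may even return to $v_i$ and force this reset forever. The paper's $x$-shadow avoids exactly this: a ``false'' immediate capture of the shadow at $x$ means $c\in N_1(x)$, and then the explicit block $\sigma(c,v_i)=(x,y)$ forces the surviving robber into $N_2(v_i,G\setminus\{x\})\cap X_i\subseteq N_1(y)$, where the cop, now at $y$, finishes at the next visible move. Relatedly, your account of where the strong inclusion $N_2(v_i,G\setminus\{y\})\cap X_i\subseteq N_2(x,G\setminus\{y\})$ enters does not match its actual role: it is not about intercepting phases through $x$, but about guaranteeing, when the real robber's invisible vertex is not adjacent to $x$, a shadow intermediate vertex $u\neq y$ adjacent to both $x$ and the robber's landing vertex, so that the shadow walk out of $x$ is legal and retains $y$ as a fallback when the cop occupies $u$. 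As written, your construction does not close, and fixing it essentially forces you back to the paper's $x$-shadow.
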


\begin{proof} Suppose that a subset $X$ of vertices of $G$ admits a
strong 2-bidismantling order $v_1,\ldots,v_m$. Assume by induction
assumption that the set $X'=\{ v_2,\ldots,v_n\}$ is 2-winnable and
we will establish that the set $X$ itself is 2-winnable. Let
$N_2(v_1,G\setminus\{ x\})\cap X\subseteq N_1(y).$ Let $\sigma'$ be
a parsimonious positional winning strategy for $\mathcal C$ in $X'$.
We define the strategy $\sigma$ for $\mathcal C$ in $X$ as follows:
$\sigma(c,r)=r$ if $r \in N_1(c)$, $\sigma(c,v_1)=(x,y)$ if $c \in
N_1(x)$ (in this case, the robber will be caught during the next
move because $N_2(v_1,G\setminus \{x\}) \cup X \subseteq N_1(y)$)
and $\sigma(c,v_1)=\sigma'(c,x)$ otherwise, and
$\sigma(c,v)=\sigma'(c,v)$ in all other cases. We now prove that
$\sigma$ is winning. Let $S_r=(r_1,r_2,\ldots)$ be any $X$-valid
sequence of moves of the robber. We will transform $S_r$ into a
$X'$-valid sequence $S'_r=(r'_1,r'_2,\ldots)$ of moves of the robber
and prove that, since $\mathcal C$ playing $\sigma'$ eventually
captures $\mathcal R$ following $S'_r$, then $\mathcal C$ playing
$\sigma$ captures $\mathcal R$ following $S_r$.

 Let $r'_1:=x$ if $r_1=v_1$ and $r'_1:=r_1$ otherwise. Suppose that
$r'_1,\ldots,r'_{2j-1}$ ($j\geq 1$) have been already defined and we
wish to define $r'_{2j}$ and $r'_{2j+1}.$ We set
$r'_{2j+1}:=r_{2j+1}$ if $r_{2j+1}\ne v_1$ and $r'_{2j+1}:=x$
otherwise (indeed, when the cop sees the robber in the vertex $v_1,$
then $\mathcal C$ will plays against $\mathcal R$ as like the latter
was in $x$). We set $r'_{2j}:=r_{2j}$ in all cases unless $v_1 \in
\{r_{2j-1},r_{2j+1}\}$ and $r_{2j} \notin N_1(x)$ (in particular
$r_{2j}\neq y$). If $r_{2j-1}=v_1$  (resp., if $r_{2j+1}=v_1$) and
$r_{2j}\notin N_1(x)$, then there exists a common neighbor $u$ of
$r_{2j-1}$ (resp.,  $r_{2j+1}$) and $x$ different from $y$. The
choice of $r'_{2j}$ depends of the current position $c_{2j}$ of the
cop pursuing $\mathcal R$.  We set $r'_{2j}:=u$ if $c_{2j} \neq u$
and $r'_{2j}:=y$ otherwise (this is to avoid to artificially create
a move where the robber goes to a vertex occupied by the cop). It
can be easily seen that $S'_r$ is a $X'$-valid sequence of moves of
the robber.

Let $S'_c=(c'_1,c'_2,\ldots)$ be the $X'$-valid sequence of moves of the cop playing $\sigma'$ against a robber $\mathcal R'$ moving according to $S'_r$, and let $S_c=(c_1,c_2,\ldots)$ be the $X$-valid sequence of moves of the cop playing $\sigma$ against the robber $\mathcal R$ following $S_r$. It is easy to check that $S'_c$ and $S_c$ are similar
except one or two steps before the capture of the robber. Moreover, since $\sigma'$ is a winning strategy in $X'$, there is $j>0$ such that $c'_j=r'_j$.

First suppose that $\mathcal C$ captures the robber ${\mathcal R}'$
when he is visible, say ${\mathcal R}'$ is located in $r'_{2j+1}.$
If $r'_{2j+1}=r_{2j+1},$ then we are done. So, suppose that
$r'_{2j+1}\ne r_{2j+1},$ i.e., $r_{2j+1}=v_1$ and $r'_{2j+1}=x.$
Therefore,  when $\mathcal C$ sees ${\mathcal R}$ in $v_1,$  the cop
is located in a neighbor of $x$. According to $\sigma,$ $\mathcal C$
will move to $x$ and then to $y$, while $\mathcal R$ can only reach
a vertex in  $N_2(v_1,G\setminus \{x\}) \cap X$. Since
$N_2(v_1,G\setminus \{ x\})\cap X\subseteq N_1(y)$, the cop will
capture the visible robber at his next move.

Now suppose that $\mathcal C$ captures  ${\mathcal R}'$ when the latter  is invisible, say
${\mathcal R}'$ is located in $r'_{2j}.$ Again, if $r'_{2j}=r_{2j},$
then we are done. Otherwise, according to the definition of $S'_r,$ we
conclude that $r_{2j}$ is a common neighbor  of $r_{2j-1}$ and
$r_{2j+1}$ different from $y$  with either $v_1 = r_{2j+1}$ or $v_1=r_{2j-1}.$ Suppose that $v_1=r_{2j+1}$
(the other case is analogous), $r'_{2j}$ is either $y$ or
a common neighbor $u$ of $r_{2j-1}$ and $x$ provided by the strong
2-bidismantling order.  Since, between $r_{2j-1}$ and $r_{2j+1}=v_1$ the trajectory of ${\mathcal R}'$
avoids the cop if possible, we deduce that  $\{c_{2j-1},c_{2j}\} = \{u,y\}$ or $\{c_{2j},c_{2j+1}\} = \{u,y\}.$
If $\{c_{2j-1},c_{2j}\} = \{u,y\},$ then, when $\mathcal C$ sees
${\mathcal R}$ in $r_{2j-1},$  the cop is located in a neighbor of $r_{2j-1}.$  By the definition of $\sigma,$
$\mathcal C$ will move to $r_{2j-1}$ and captures $\mathcal R.$
Otherwise, if $\{c_{2j},c_{2j+1}\} = \{u,y\},$ then when the cop
sees ${\mathcal R}$ in $v_1,$ $\mathcal C$ is located in a neighbor of $x.$  By the definition of $\sigma,$
as before, $\mathcal C$ will move to $x$ and then to $y$, while $\mathcal R$
can only reach a vertex in $N_2(v_1,G\setminus \{x\}) \cup X$. Since
$N_2(v_1,G\setminus \{ x\})\cap X\subseteq N_1(y)$, the cop will
capture the visible robber at his next move.
\end{proof}

We conclude this section by showing that the existence of a strong
2-bidismantling order is not necessary.

\begin{figure}[h!]
\begin{center}
{\input{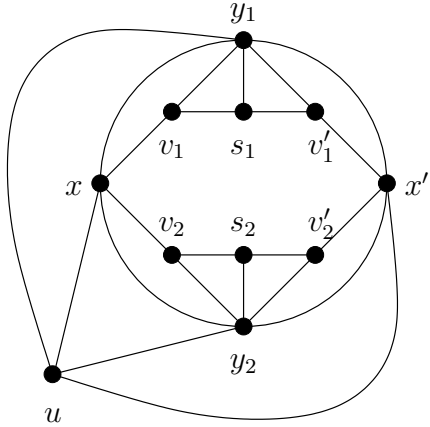}}
\caption{A graph $G \in \CWW(2)$ which is not strongly
  2-bidismantlable.}
\label{fig-contre-ex4}
\end{center}
\end{figure}

\begin{proposition} The graph $G$ from Fig.~\ref{fig-contre-ex4}
  belongs to $\CWW(2)$, however $G$ does not admit a strong
  2-bidismantling order.
\end{proposition}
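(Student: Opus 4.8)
The statement comprises two independent claims, and the plan is to establish them separately: first exhibiting a winning strategy for a single cop in the $2$-witness game to obtain $G\in\CWW(2)$, and then ruling out the existence of any strong $2$-bidismantling order by a finite, symmetry-reduced inspection. Throughout I would exploit the evident symmetry of $G$ that exchanges the index $1$ with the index $2$ (swapping $v_1\leftrightarrow v_2$, $v_1'\leftrightarrow v_2'$, $s_1\leftrightarrow s_2$, $y_1\leftrightarrow y_2$, and $x\leftrightarrow x'$), which halves the case analysis in both parts.

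For the membership $G\in\CWW(2)$, the plan is to describe a positional winning strategy for the cop. Since the cop sees the robber every two moves, I would start the cop at a central dominating vertex (the natural candidates being $y_1$ or $y_2$) and, each time the robber becomes visible, move the cop onto an edge that separates the robber's current region from the remainder of $G$ and then oscillate on that edge for the rest of the phase. By Lemma~\ref{oscillating_cop}, such an $\{a,b\}$-oscillating cop captures any robber attempting to cross the edge $ab$ during the phase, so the robber is confined to one side. The verification reduces to establishing an invariant: each phase ends with the cop holding a separating edge strictly closer to the robber's region, so after finitely many phases the robber shares a neighborhood with the cop and is captured at the next visible move. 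The necessary bookkeeping is to enumerate the possible robber positions at the start of each phase and confirm that a suitable separating edge is always available to the cop; the $1\leftrightarrow 2$ symmetry cuts this work roughly in half.

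For the second claim, the plan is to show that no $2$-bidismantling order of $G$ can meet the strong condition at \emph{every} step. It suffices to rule out the first step: in a strong order the initially removed vertex $v_1$ (for which $X_1=V$) must satisfy $y(v_1)=x(v_1)$ or $N_2(v_1,G\setminus\{y(v_1)\})\subseteq N_2(x(v_1),G\setminus\{y(v_1)\})$. I would enumerate, up to symmetry, every vertex $v_1$ that admits a valid bidismantling pair $(x(v_1),y(v_1))$ with $N_2(v_1,G\setminus\{x(v_1),y(v_1)\})\subseteq N_1(y(v_1))$, and for each such pair produce a witness vertex $w\in N_2(v_1,G\setminus\{y(v_1)\})$ with $w\notin N_2(x(v_1),G\setminus\{y(v_1)\})$. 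Since all the eliminations forced by the geometry of $G$ have $x(v_1)\neq y(v_1)$, both disjuncts of the strong condition then fail, so no strong order can even begin. Should some first elimination unexpectedly survive this test, the same witness-search is repeated one level deeper on the reduced graph $G\setminus\{v_1\}$.

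The main obstacle is the positive direction. Checking the strong condition is a finite inspection of neighborhoods, whereas proving the cop strategy correct requires defeating \emph{every} robber evasion, including the robber's ability to pass through (though not stop at) vertices outside the admissible region and to exploit invisibility between the witness moves. The crux is to verify the separating-edge invariant rigorously, namely that after each phase the cop can re-establish an oscillating edge one step closer to the robber in the separator structure of $G$, which is precisely what guarantees that the strategy terminates rather than merely containing the robber.
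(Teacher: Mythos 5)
Your proposal has genuine gaps in both directions.

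For the membership $G\in\CWW(2)$, what you give is a template rather than a proof, and the template does not match this graph. The paper's winning strategy starts the cop at $u$ (not at $y_1$ or $y_2$) and wins in at most two phases with no oscillation and no ``separating-edge invariant'': a robber starting in $N_1(u)$ is caught at once; a robber starting at $s_i$ is answered by the move $u\to y_i$, after which $N_2(s_i)\subseteq N_1(y_i)$ finishes him; a robber starting at $v_1$ (or a symmetric vertex) is answered by the two moves $u\to x\to y_1$, after which every vertex he can have reached without passing through $x$ lies in $N_1(y_1)$. You yourself flag that verifying your invariant is ``the crux,'' and you do not carry it out; since the whole content of this direction is that verification, the positive claim is not established.

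For the non-existence of a strong $2$-bidismantling order, your central assertion --- that every admissible first elimination has $x(v_1)\neq y(v_1)$, so ``no strong order can even begin'' --- is false for this graph. The vertices $s_1$ and $s_2$ satisfy $N_2(s_i)\subseteq N_1(y_i)$, so each can be eliminated first with $x(s_i)=y(s_i)=y_i$, and the strong condition then holds vacuously at that step. The obstruction only appears at the first elimination of a vertex outside $\{s_1,s_2\}$: the paper shows that (up to symmetry) this vertex must be $v_1$, that the only admissible pair is $y(v_1)=y_1$, $x(v_1)=x$, and that the strong condition then fails because $v_1'\in N_2(v_1,G\setminus\{y_1\})$ but $v_1'\notin N_2(x,G\setminus\{y_1\})$. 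Your fallback of ``repeating one level deeper'' points in the right direction, but as written it would need to be iterated past both $s_1$ and $s_2$ and combined with an argument that no other vertex of $V\setminus\{s_1,s_2\}$ can be eliminated at that stage; none of this is done, so the negative claim is not established either.
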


\begin{proof}
We first show that the graph $G$ from Fig.~\ref{fig-contre-ex4} is
in $\CWW(2).$  The cop starts in
$u$. Hence, if the robber starts in $x, x', y_1$ or $y_2$, he is
immediately caught. If the robber starts in  $s_1$
(or $s_2$), then the cop moves to $y_1$ (resp., to $y_2$) and since
$N_2(s_1) \subseteq N_1(y_1)$ (resp., $N_2(s_2) \subseteq N_1(y_2)$),
the robber is caught the next time he shows up. If the robber starts
in $v_1$ (the cases  $v_1'$, $v_2$,
$v_2'$ are similar), then the cop first moves to $x$ and then to
$y_1$. Then the robber has to show up in a vertex of $\{v_1, s_1, v_1', x\}
\subseteq N_1(y_1)$ and the cop can catch him.

Consider now any $2$-bidismantling order of $G$.  Let $v$ be the
first vertex in this order which is different from $s_1, s_2.$ We
may assume without loss of generality that $v \in\{v_1,x,y_1,u\}$.
Let $X = V(G)\setminus\{s_1,s_2\}$.  Since there does not exist $t$
such that the set $N_1(x) \cap X$ (resp. $N_1(y_1) \cap X$ or
$N_1(u) \cap X$) is included in $N_1(t)$, it implies $v = v_1$. We
know that $x(v_1), y(v_1) \in N_1(v_1)$ and that $N_1(v_1) \cap X
\subseteq N_1(y(v_1))$.  Consequently, $y(v_1) \in \{x,y_1\}$. If
$y(v_1) = x$, then $x(v_1) \neq s_1$ (since $s_1$ and $x$ are not
adjacent) and thus $v_1' \in
N_2(v_1,G\setminus\{x(v_1),y(v_1)\})\setminus N_1(y(v_1))$, which is
impossible. Thus, $y(v_1) = y_1$. If $x(v_1) \neq x$, then $v_2 \in
N_2(v_1,G\setminus\{x(v_1),y(v_1)\})\setminus N_1(y_1)$.
Consequently, $y(v_1) = y_1$ and $x(v_1) = x$. However, $v_1' \in
N_2(v_1) \cap X$ but $v_1' \notin N_2(x,G\setminus\{y_1\})$ and thus
$G$ is not strongly 2-bidismantlable.
\end{proof}

\subsection{Classes ${\mathcal C}{\mathcal  W}{\mathcal W}(k)$ for $k\ge 3$}

In this subsection we show that $k$-bidismantlable graphs are
$k$-winnable for any odd $k \geq 3$. We also show that for any $k \geq
3$, there exist graphs in $\CWW(k)$ that are not $k$-bidismantlable,
i.e., for $k\ge 3,$ $k$-bidismantlability of a graph is not a
necessary condition to be $k$-winnable.

\begin{theorem}
For any odd integer $k\ge 3$, if a graph $G$ is $k$-bidismantlable, then $G \in {\mathcal C}{\mathcal W}{\mathcal
W}(k).$
\end{theorem}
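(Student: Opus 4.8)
The plan is to mimic the inductive ``strategy-stealing'' scheme used for Theorem~\ref{copwin} and for the strongly $2$-bidismantlable case. Fix a $k$-bidismantling order $v_1,\dots,v_m$ of $V$ and prove, by descending induction on $i$, that each level set $X_i=\{v_i,\dots,v_m\}$ is $k$-winnable for the $X_i$-restricted $k$-witness game; the case $X_m=\{v_m\}$ is trivial. For the inductive step, let $x,y\in X_{i+1}$ be the two adjacent-or-coinciding vertices eliminating $v_i$, so that $N_k(v_i,G\setminus\{x,y\})\cap X_i\subseteq N_1(y)$; taking the trivial path already gives $v_i\in N_1(y)$, so $y$ is a neighbour of $v_i$. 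Assuming a parsimonious positional winning strategy $\sigma_{i+1}$ for the $X_{i+1}$-game, I will construct a winning strategy $\sigma_i$ (using one bit of memory, exactly as in Theorem~\ref{copwin}) for the $X_i$-game, and then invoke the parity-game fact recalled in the introduction to make it positional.

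The strategy $\sigma_i$ runs $\sigma_{i+1}$ against a \emph{virtual} robber whose visible positions copy those of the real robber, except that every appearance at the new vertex $v_i$ is reported to $\sigma_{i+1}$ as an appearance at $x$ (exactly as $v_1$ was reported as $x$ in the strongly $2$-bidismantlable proof). Since $\sigma_{i+1}$ keeps the cop inside $X_{i+1}$, it never prescribes a move to $v_i$, so the cop's moves transfer verbatim. The genuinely new ingredient is a \emph{guarding} phase: while the memory bit signals that $\sigma_{i+1}$ is chasing a virtual robber it believes to sit at $x$ and the cop has actually reached $x$, the cop instead oscillates along the edge $\{x,y\}$ for the whole phase, beginning at $x$. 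If $x=y$ the guarding phase degenerates into sitting at $y$.

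Correctness splits into two halves. First, the guarding phases make the blocking work: by Lemma~\ref{oscillating_cop}, while the cop oscillates on $\{x,y\}$ the robber cannot step onto $x$ or $y$ without being captured, so any real-robber move of length $\le k$ that starts at $v_i$ and survives the phase must avoid $\{x,y\}$, and therefore, by the dismantling inclusion, ends inside $N_k(v_i,G\setminus\{x,y\})\cap X_i\subseteq N_1(y)$; as the cop finishes the phase on $y$, such a robber is caught the instant it reappears. This is precisely the statement that $x$ may \emph{simulate} $v_i$, which lets me convert any $X_i$-valid trajectory of the real robber into an $X_{i+1}$-valid trajectory of the virtual robber with the same visible positions outside $v_i$. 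Second, since $\sigma_{i+1}$ wins against the virtual robber, the matching play of $\sigma_i$ captures the real robber: the two cop plays coincide except during the at-most-two steps surrounding a guarding phase, exactly as the sequences $S_c$ and $S_c'$ are compared in Theorem~\ref{copwin}.

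The crux, and the only place the hypothesis ``$k$ odd'' enters, is that a guarding phase must \emph{end on the dominating vertex} $y$. When $x\ne y$ the cop can guard the pair $\{x,y\}$ throughout a phase only by oscillating for all $k$ moves, and an oscillation started at $x$ lands on $y$ after $k$ steps precisely when $k$ is odd; for even $k$ it would land back on $x$, and a robber surfacing in $N_1(y)\setminus N_1(x)$ would escape. This is exactly why plain bidismantling fails for $k=2$ (the counterexample of Fig.~\ref{fig-contre-ex2}) and must be strengthened there, whereas for odd $k\ge 3$ the plain inclusion already suffices. The remaining technical hurdle is \emph{synchronisation}: the cop must be standing on $x$ at the \emph{start} of a guarding phase, since any move spent travelling to $x$ would leave $\{x,y\}$ momentarily unguarded. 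Here the extra length of the phase ($k\ge 3$) together with the parsimony of $\sigma_{i+1}$ — which drives the cop onto $x$ whenever it is about to ``catch'' the virtual robber there — is what brings the cop into position, with the memory bit keeping the real and virtual plays aligned. Everything else is the routine step-by-step bookkeeping already carried out for Theorem~\ref{copwin}.
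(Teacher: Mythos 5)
Your plan is genuinely different from the paper's: the paper does not do a level-by-level strategy-stealing induction at all, but instead marks all configurations $(c,r)$ (cop position, robber position) by a backward-induction procedure {\tt Mark($X$)}, where $(c,r)$ gets marked only if there exist $y_{(c,r)}\in N_1(c)\cap X$ and $x_{(c,r)}\in N_1(y_{(c,r)})\cap X$ such that every $z\in N_k(r,G\setminus\{x_{(c,r)},y_{(c,r)}\})\cap X$ yields an already-marked configuration $(y_{(c,r)},z)$; the cop then moves \emph{first to $y_{(c,r)}$} and oscillates, ending on $y_{(c,r)}$ because $k$ is odd. The real content of the paper's proof is the second claim, that bidismantlability forces \emph{all} configurations to be marked, which requires choosing a guarding pair depending on both $c$ and $r$, not just on the eliminated vertex.

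This points to the concrete gap in your argument: your guarding phase only works if the cop can be \emph{on} $x$ (or one move from $y$) at the instant the real robber becomes visible at $v_i$, and nothing in your construction guarantees this. The robber can simply sit at $v_i$; your virtual robber then sits at $x$, and the parsimonious $\sigma_{i+1}$ eventually brings the cop to a vertex $c$ \emph{adjacent} to $x$ at a visible moment and prescribes the capturing move $c\to x$. But $c$ need only be at distance $2$ from $y$, so the cop's forced first move is to $x$, and a strict $\{x,y\}$-oscillation whose first step lands on $x$ ends, after $k$ odd steps, on $x$ --- not on $y$. (You cannot insert a pause at $y$ either: a double occupation of $y$ leaves $x$ unguarded for one robber move, breaking Lemma~\ref{oscillating_cop}.) A surviving robber finishes in $N_k(v_i,G\setminus\{x,y\})\cap X_i\subseteq N_1(y)$, and if it surfaces in $N_1(y)\setminus N_1(x)$ it is not captured; at that point the virtual game has already declared a capture and $\sigma_{i+1}$ gives no further guidance. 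A secondary, unaddressed issue is that the cop's committed $k$-move sequence may step on $x$ mid-phase and ``capture'' the virtual robber invisibly while the real robber sits untouched at $v_i$; defining the virtual robber's invisible trajectory so as to avoid such spurious captures is exactly the bookkeeping that the marking-procedure formulation is designed to avoid, and it is not routine in the witness setting. So the synchronisation step you flag as the crux is in fact a gap, and the paper's configuration-marking argument is the mechanism that closes it.
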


\begin{proof}
Suppose that $X \subseteq V$ is a $k$-bidismantlable set of vertices of a graph
$G$.  We prove that there is a winning strategy for the cop in the
$X$-restricted $k$-witness game on $G$. To do so, we proceed as
in the papers \cite{IsKaKh,NowWin} and use the $k$-bidismantling order to mark all
$X$-configurations $(c,r)$. A {\it $X$-configuration} of  $X$-restricted game
is a couple $(c,r)$ that consists of a position of the cop $c \in X$ and a
position of the robber $r \in X$, with $r \neq c$. A $X$-configuration
$(c,r)$ is called {\it terminal} if $r \in N_1(c)$.

To mark the $X$-configurations, we use the following procedure {\tt Mark($X$)}.
\begin{enumerate}
\item Initially,
all $X$-configurations are unmarked.
\item Any {\it terminal} $X$-configuration $(c,r)$ is marked with label $1$.
\item
While it is possible, mark an unmarked $X$-configuration $(c,r)$ with
the smallest possible integer $\ell+1$ such that there exist vertices $y_{(c,r)} \in N_1(c) \cap
X$ and $x_{(c,r)}
\in (N_1(y_{(c,r)}) \setminus \{r\}) \cap X$ such that for all $z \in N_k(r,G\setminus
\{x_{(c,r)},y_{(c,r)}\}) \cap X$, the $X$-configuration $(y_{(c,r)},z)$ is marked with a label at most $\ell$.
\end{enumerate}
\begin{claim}
If all $X$-configurations are marked by {\tt Mark($X$)}, then
there is a winning strategy for the cop in the $X$-restricted
$k$-witness game on $G$.
\end{claim}

Indeed, pick any initial positions $c\in X$ of the cop and $r\in X$ of
the robber. If the configuration $(c,r)$ is terminal, then $r\in
N_1(c)$ and the robber is captured at the next move. Otherwise, the
cop first moves to $y_{(c,r)}$ and then oscillates between
$x_{(c,r)}$ and $y_{(c,r)}$ during $k-1$ steps, i.e., the cop ends in
$y_{(c,r)}$ since $k$ is odd. If during one of his invisible moves the
robber goes to $x_{(c,r)}$ or $y_{(c,r)},$ then he will be captured
immediately. Otherwise, in $k$ moves the robber goes from $r$ to a
vertex $z\in N_k(r,G\setminus \{ x_{(c,r)},y_{(c,r)}\})\cap X.$ According to {\tt Mark($X$)},
the label of $(y_{(c,r)},z)$ is strictly less than that of $(c,r)$.
Therefore, by repeating the same process, after a
finite number of steps either the cop captures the robber during an
invisible move or the cop and the robber arrive at a terminal
configuration.

\begin{claim}
If $X$ is $k$-bidismantlable, then {\tt Mark($X$)} marks all $X$-configurations.
\end{claim}

The general idea
of our proof follows  the proof of Theorem 12 of \cite{IsKaKh}.
Let $\{v_1,\ldots,v_t\}$ be a $k$-bidismantling ordering of $X$. We
prove by induction on $t-i$ that {\tt Mark($X_i$)} marks all
$X_i$-configurations, where $X_i=\{v_i,\ldots,v_t\}$. The assertion
trivially holds for $X_{t-1}$. Let $i<t-1.$ Assuming that all
$X_{i+1}$-configurations are marked by {\tt Mark($X_{i+1}$)}, we prove
that {\tt Mark($X_i$)} marks all $X_i$-configurations.

By definition of the $k$-bidismantling ordering, there exist two
adjacent or coinciding vertices $x,y \in X_{i+1}$ such that
$N_k(v_i,G\setminus \{x,y\}) \cap X_i \subseteq N_1(y)$.
Roughly speaking, {\tt Mark($X_i$)} marks the $X_i$-configurations in the same
order as {\tt Mark($X_{i+1}$)} marks the $X_{i+1}$-configurations, but
once a configuration $(c,y)$ with $c\in X_{i+1}$ is marked, {\tt
Mark($X_i$)} also marks the configuration $(c,v_i)$. Once {\tt
Mark($X_i$)} has marked all $X_i$-configurations $(c,r) \in X_{i+1}
\times X_i$, the remaining $X_i$-configurations $(v_i,r)$ with $r \in
X_{i+1}$ can also be marked by {\tt Mark($X_i$)}.

Let $\ell\geq 1$. By induction on $\ell$, we prove that {\it any
$X_{i+1}$-configurations $(c,r)$ that is marked by {\tt
Mark($X_{i+1}$)} with label at most $\ell$ will be also marked by {\tt
Mark($X_i$)}.} Moreover, if $r=y$, we prove that {\it once {\tt Mark($X_i$)}
has marked $(c,r)$, then it can mark $(c,v_i)$.} Let us first prove this assertion
for $\ell=1$. For any $(c,r) \in X_i \times X_i$ with
$r \in N_1(c)$, $(c,r)$ is marked by {\tt Mark($X_i$)} with label $1$ . If
$(c,y)$ is marked with label $1$ (i.e., $y \in N_1(c) \cap X_i$), then
$(c,v_i)$ can be marked with $2$. Indeed, for all $z \in N_k(v_i,G\setminus
\{x,y\}) \cap X_i$, we have $z \in N_1(y)$ (by definition of the
$k$-bidismantling order), and thus the $X_i$-configuration $(y,z)$ is
marked with label $1$. Hence, by setting $(x_{(c,v_i)},y_{(c,v_i)})=(x,y)$, the
procedure {\tt Mark($X_i$)} marks $(c,v_i)$ with label $2$.

Assume now that the induction hypothesis holds for some $\ell \geq
1$ and we will show that it still holds for $\ell+1$. Let $(c,r)$ be a
$X_{i+1}$-configuration marked by {\tt Mark($X_{i+1}$)} with label $\ell+1$.
We first prove that $(c,r)$ is eventually marked by {\tt
Mark($X_i$)}. By definition of {\tt Mark($X_{i+1}$)}, there exist
$y_{(c,r)} \in N_1(c) \cap X_{i+1}$ and $x_{(c,r)} \in (N_1(y_{(c,r)})
\setminus \{r\}) \cap X_{i+1}$ such that for all $z \in
N_k(r,G\setminus \{x_{(c,r)},y_{(c,r)}\}) \cap X_{i+1}$, the
$X_{i+1}$-configuration $(y_{(c,r)},z)$ is marked with label at most $\ell$ by
{\tt Mark($X_{i+1}$)}. By the induction hypothesis, this
implies that for all $z \in N_k(r,G\setminus \{x_{(c,r)},y_{(c,r)}\})
\cap X_{i+1}$, the $X_{i+1}$-configuration $(y_{(c,r)},z)$ is
marked by {\tt Mark($X_i$)}. If $v_i \notin N_k(r,G\setminus
\{x_{(c,r)},y_{(c,r)}\})$, then clearly $(c,r)$ is  marked by {\tt
Mark($X_i$)}. Let us assume that $v_i \in N_k(r,G\setminus
\{x_{(c,r)},y_{(c,r)}\})$. We aim at proving that $(y_{(c,r)},v_i)$
is eventually marked by {\tt Mark($X_i$)}. We distinguish three cases.
\begin{itemize}
\item If $y_{(c,r)}=y$, then $(y_{(c,r)},v_i)$ is marked with label $1$
since $y_{(c,r)}=y \in N_1(v_i)$.
\item
If $x_{(c,r)}=y$, then $(y_{(c,r)},v_i)$ is marked with label $1$ or $2$ by
setting $(x_{(y_{(c,r)},v_i)},y_{(y_{(c,r)},v_i)})=(x,y)$. Indeed, for all
$z \in N_k(v_i,G\setminus \{x,y\}) \cap X_i$, we have $z
\in N_1(y)$ (by definition of the $k$-bidismantling order), and thus the
$X_i$-configuration $(y,z)$ is marked with label $1$.
\item
Otherwise, we assert that $(y_{(c,r)},y)$ has already been marked by
{\tt Mark($X_i$)}. By the induction hypothesis, this implies that
$(y_{(c,r)},v_i)$ was also marked.

If $y \in N_k(r,G\setminus
\{x_{(c,r)},y_{(c,r)}\}) \cap X_{i+1}$ and since $(c,r)$ is marked with label
$\ell+1$ by the marking
procedure in $X_{i+1}$, then $(y_{(c,r)},y)$ must be marked by {\tt Mark($X_{i+1}$)}
with label at most $\ell$. By the induction
hypothesis, this implies that $(y_{(c,r)},y)$ has been marked by {\tt
Mark($X_i$)}. Hence, it remains to show that $y \in
N_k(r,G\setminus \{x_{(c,r)},y_{(c,r)}\}) \cap X_{i+1}$.

Let $P$ be a
path of length at most $k$ between $r$ and $v_i$ in $G\setminus
\{x_{(c,r)},y_{(c,r)}\}$. If $x$ or $y$ belongs to $P,$ then we
trivially get that $y \in N_k(r,G\setminus
\{x_{(c,r)},y_{(c,r)}\}) \cap X_{i+1}$. Otherwise, this means that
$r \in N_k(v_i,G\setminus \{x,y\}) \cap X_i$ and $r \in N_1(y)$ holds by
definition of the bidismantling order. Hence, $y \in N_k(r,G\setminus
\{x_{(c,r)},y_{(c,r)}\}) \cap X_{i+1}$.
\end{itemize}
In all three cases, the pair $(y_{(c,r)},v_i)$ is marked by {\tt
Mark($X_i$)}. Thus, for all $z \in N_k(r,G\setminus
\{x_{(c,r)},y_{(c,r)}\}) \cap X_i$, the $X_i$-configuration $(y_{(c,r)},z)$ has been marked.
Therefore, this is also the case for the $X_i$-configuration $(c,r)$.
To conclude the proof, we need to show that, once a
$X_i$-configuration $(c,y)$ ($c \neq v_i$) is marked by {\tt
Mark($X_i$)}, then $(c,v_i)$ can be marked as well. Since $(c,y)$ has been marked,
there exist $y_{(c,y)} \in N_1(c) \cap X_i$ and $x_{(c,y)} \in
(N_1(y_{(c,y)}) \setminus \{y\}) \cap X_i$ such that for all $z \in
N_k(y,G\setminus \{x_{(c,y)},y_{(c,y)}\}) \cap X_i$, the
$X_i$-configuration $(y_{(c,y)},z)$ is marked.  Let
$z' \in N_k(v_i,G\setminus \{x_{(c,y)},y_{(c,y)}\}) \cap X_i$. We prove that
$z' \in N_k(y,G\setminus \{x_{(c,y)},y_{(c,y)}\}) \cap
X_i,$ which shows that $(y_{(c,y)},z')$ has been already marked.
Let $P$ be a shortest path between $v_i$ and $z'$ in $G\setminus
\{x_{(c,y)},y_{(c,y)}\}$. Note that $|P|\leq k$. If $y \in P$, clearly $z' \in N_k(y,G\setminus \{x_{(c,y)},y_{(c,y)}\}) \cap
X_i$. Else, if $x \in P$, then let $P'$ be the subpath of $P$ from $z'$ to
$x$. Then $P' \cup \{x,y\}$ is a path of length at most $k$ between $z'$ and $y$ in the graph $G\setminus
\{x_{(c,y)},y_{(c,y)}\}$. Otherwise, $z' \in N_k(v_i, G \setminus \{x,y\}) \cap X_i$ and thus $z' \in N_1(y)$.
Therefore, for any $z' \in N_k(v_i,G\setminus \{x_{(c,y)},y_{(c,y)}\})
\cap X_i$, $(y_{(c,y)},z')$ is marked and thus the pair
$(c,v_i)$ can be marked as well.

Summarizing, we conclude that for all $c,r \in X_{i+1}$, the configurations $(c,r)$ and
$(c,v_i)$ are marked by the procedure {\tt
Mark($X_i$)}. To
conclude the proof, note that any configuration $(v_i,r)$ can be marked as well: either
with $1$ if $r \in N_1(v_i)$ or by setting
$(x_{(v_i,r)},y_{(v_i,r)})=(y,y)$ otherwise.
\end{proof}

From Theorem~\ref{theor:CWW} and by noticing that if a graph $G=(V,E)$
with $n$ vertices is $n$-bidismantlable, then there are two vertices $x,y$ such that $y$
dominates a connected component of $G\setminus \{x,y\}$, we obtain the following
observation:

\begin{proposition} \label{allkwitness}
${\mathcal C}{\mathcal W}{\mathcal W}$ is the class of graphs which are $k$-bidismantlable  for all $k \geq 1$.
\end{proposition}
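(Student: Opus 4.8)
The plan is to prove the two inclusions separately, using Theorem~\ref{theor:CWW} only to identify \CWW\ with the class of big two-brother graphs (equivalently, with $\bigcap_{k\ge 1}{\mathcal C}{\mathcal W}{\mathcal W}(k)$). First I would record a monotonicity remark: if an ordering witnesses $k$-bidismantlability then the same ordering witnesses $k'$-bidismantlability for every $k'\le k$, since $N_{k'}(v_i,G\setminus\{x,y\})\subseteq N_k(v_i,G\setminus\{x,y\})$; moreover, as soon as $k\ge n-1$ the punctured ball $N_k(v_i,G\setminus\{x,y\})$ equals the whole connected component of $v_i$ in $G\setminus\{x,y\}$. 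Hence being $k$-bidismantlable for all $k$ is the same property as being $n$-bidismantlable, and I may work with the latter throughout.

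For the inclusion of \CWW\ in the class of graphs that are $k$-bidismantlable for all $k$, I would argue directly from a btb-decomposition $G_1,\dots,G_r$ rooted at $G_1$, processing the tree $T$ from its leaves to the root. When $G_i$ is a current leaf with big brother $y_i$ and shared vertex set $\{x_i,y_i\}$ (or a single articulation point, possibly equal to $y_i$), I would list its private vertices $V(G_i)\setminus\{x_i,y_i\}$ and eliminate each such $v$ using the pair $(x_i,y_i)$: since $\{x_i,y_i\}$ separates $G_i$ from the rest of the graph, every path issued from $v$ and avoiding $x_i,y_i$ stays inside $G_i$, so $N_k(v,G\setminus\{x_i,y_i\})\cap X\subseteq V(G_i)\subseteq N_1(y_i)$ because $y_i$ dominates $G_i$. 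A private big brother $y_i$ (arising in the single-vertex case) is eliminated last among the privates of $G_i$ via the coinciding pair $(x_i,x_i)$, and finally the root $G_1$ is reduced to its dominating vertex $y_1$ via $(y_1,y_1)$. The same construction works verbatim for every $k$, which gives the inclusion.

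For the reverse inclusion I would induct on $n=|V|$, proving that an $n$-bidismantlable graph is a big two-brother graph. Taking the first vertex $v_1$ of an $n$-bidismantling order with eliminators $x=x(v_1),y=y(v_1)$ (adjacent or coinciding), the quoted observation gives a connected component $C$ of $G\setminus\{x,y\}$ containing $v_1$ with $C\subseteq N_1(y)$, i.e.\ $y$ dominates $C$. I would set $G'=G(V\setminus C)$ and check that the fold $\varphi$ sending $C$ to $y$ and fixing $V\setminus C$ is a retraction (it is nonexpansive because every path leaving $C$ passes through $x$ or $y$, and $x\sim y$). Granting that $G'$ is $(n-|C|)$-bidismantlable, induction yields a btb-decomposition $G_1,\dots,G_{r-1}$ of $G'$, and appending $G_r=G(C\cup\{x,y\})$ with big brother $y$ and small brother $x$ produces a btb-decomposition of $G$: the new piece meets $\bigcup_{j<r}G_j=G'$ in the edge $xy$ (or the single vertex $x=y$), and $x,y$ lie in a common earlier subgraph because the edge $xy\in E(G')$ is covered by the decomposition of $G'$.

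The crux, and the step I expect to be the main obstacle, is the claim that $G'=G(V\setminus C)$ is again $(n-|C|)$-bidismantlable. I would reduce it to producing an $n$-bidismantling order of $G$ in which $C$ forms a \emph{prefix}: once $C$ is eliminated first, the restriction of the order to $V(G')$ is automatically an $(n-|C|)$-bidismantling order of $G'$, because passing to the induced subgraph only shrinks punctured balls and the relevant tails already lie in $V(G')$. To build such a prefix I would move the vertices of $C$ to the front of the given order, each eliminated by the fixed pair $(x,y)$ (legitimate since $x,y\notin C$ survive and $N_k(c,G\setminus\{x,y\})\cap X\subseteq C\subseteq N_1(y)$ for $c\in C$); this move only deletes vertices from the tail of every $w\in V(G')$, so all of them keep their original eliminators — \emph{except} possibly $x$ and $y$, which are the only vertices outside $C$ adjacent to $C$ and hence the only ones whose original eliminators may sit in $C$. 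Re-eliminating these two exceptional vertices inside $G'$ is the delicate point; I would handle it by transporting their eliminator pairs through the fold $\varphi$ and exploiting the edge $xy$ together with the domination $C\subseteq N_1(y)$, using the freedom in the choice of the dominated component to ensure that the big brother we peel is not itself eliminated from within $C$. Everything else reduces to bookkeeping on the decomposition tree.
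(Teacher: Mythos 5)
Your overall architecture is the one the paper's one-line justification intends: reduce ``$k$-bidismantlable for all $k$'' to $n$-bidismantlability by monotonicity, obtain an $n$-bidismantling order from a btb-decomposition by eliminating the private vertices of the leaf pieces with their connector pairs (this half of your argument is correct and complete), and for the converse peel the component $C$ of $G\setminus\{x,y\}$ dominated by $y$ that the first vertex of the order supplies, then induct. The gap sits exactly where you flag it, and the repair you propose does not close it. A small inaccuracy first: $x$ and $y$ are not the only vertices of $V\setminus C$ whose eliminators may lie in $C$. For $w\notin C\cup\{x,y\}$ the vertex $y(w)$ is adjacent to $w$ and hence avoids $C$, but $x(w)$ only has to be adjacent to $y(w)$, so $x(w)\in C$ can occur whenever $y(w)\in\{x,y\}$; this case is harmless (replace the pair by $(y(w),y(w))$, since every path of $G'=G(V\setminus C)$ automatically avoids $x(w)\in C$), but it contradicts your claim that all such $w$ ``keep their original eliminators.'' The real difficulty is $w\in\{x,y\}$ with $y(w)\in C$, and ``using the freedom in the choice of the dominated component so that the big brother we peel is not eliminated from within $C$'' is not an argument: nothing guarantees that some admissible pair for $v_1$ avoids this configuration, and you offer no reason why it would.

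That case can be handled, but it requires an argument that is missing from your sketch. Suppose $y(y)=c\in C$ and let $X$ denote the set of vertices not yet eliminated when $y$ is eliminated. Then $x(y)$ is adjacent to $c$, hence $x(y)\in C\cup\{x\}$, and $N_1(c)\subseteq C\cup\{x,y\}$ because $x,y$ are the only neighbours of $C$ outside $C$. If $x(y)\in C$, any $w\in X\setminus(C\cup\{x,y\})$ is joined to $y$ by a path inside the connected graph $G'$, hence inside $G\setminus\{x(y),c\}$, forcing $w\in N_1(c)\subseteq C\cup\{x,y\}$, a contradiction; so the tail of $y$ in $G'$ is contained in $\{x,y\}$ and the coinciding pair $(x,x)$ (or no pair, if $y$ is last) works. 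If $x(y)=x$, the same reasoning shows every such $w$ is cut off from $y$ in $G\setminus\{x,c\}$, hence in $G'\setminus\{x\}$, so again $(x,x)$ eliminates $y$ in $G'$ using $x\sim y$. The symmetric analysis handles $w=x$ with the pair $(y,y)$. Alternatively -- and this is the cheaper route the paper makes available -- you can bypass this combinatorial analysis entirely: $k$-bidismantlability for all $k$ gives $G\in\CWW(k)$ for every odd $k\ge 3$ by the theorem of Section 5.3, and the retract-based induction in the proof of Theorem~\ref{theor:CWW} (applying Proposition~\ref{prop:particularEdge} with an odd $k\ge n$) then yields the btb-decomposition without ever having to show that $G(V\setminus C)$ inherits a bidismantling order.
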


We continue with an example showing that  $k$-bidismantlability is not a
necessary condition for any $k\ge 3$.

\begin{figure}[h!]
\begin{center}
{\input{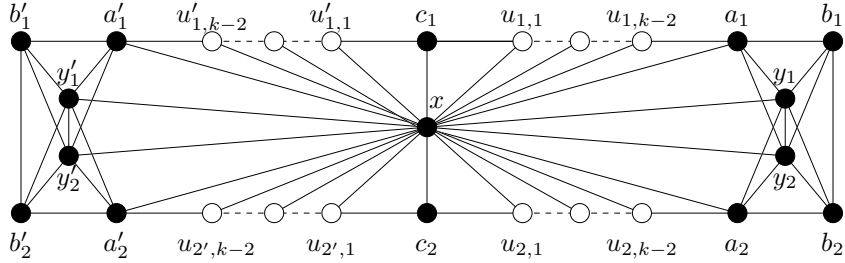}}
\caption{A graph $G \in \CWW(k)$ that is not $k$-bidismantlable}
\label{fig-contre-ex-k}
\end{center}
\end{figure}

\begin{proposition}
Let $G$ be the graph from Fig. \ref{fig-contre-ex-k}. Then $G \in \CWW(k)$,
however $G$ is not k-bidismantlable.
\end{proposition}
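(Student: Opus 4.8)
The plan is to prove the two assertions separately, in both cases exploiting the evident symmetries of the graph in Fig.~\ref{fig-contre-ex-k}: the left--right symmetry exchanging primed and unprimed vertices, and the top--bottom symmetry exchanging the subscripts $1$ and $2$. This fourfold symmetry will let me reduce each of the two proofs to a handful of representative situations.

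For the membership $G\in\CWW(k)$ I would exhibit an explicit winning strategy for a single cop and verify it using the machinery already available, namely Lemma~\ref{cycle} and Lemma~\ref{oscillating_cop}, exactly in the spirit of the strategy used in Theorem~\ref{theor:CWW}. The cop starts at the central vertex $x$, which sits between the junctions $c_1$ and $c_2$ from which the four arms emanate. Each arm is a path of length $k-2$ (the $u$-vertices) terminating in a triangle built on the pair $\{a_i,y_i\}$ (respectively its primed counterpart). When the robber reveals his position the cop reads off which arm the robber occupies; by symmetry there are only a few cases. The cop then walks to the mouth of the relevant arm and performs $\{a_i,y_i\}$-oscillation (resp. on the primed triangle) for the remainder of the phase. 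By Lemma~\ref{oscillating_cop} the robber cannot cross back through that triangle, so he is confined to the arm; and because the whole arm lies inside a ball $N_k(\cdot,G\setminus\{x_i,y_i\})$ that is dominated by the oscillation vertex $y_i$, the robber must reappear inside $N_1$ of the cop and is caught at the following move. Checking that each of the symmetric starting cases is handled this way is a short finite verification.

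For the non-bidismantlability I would argue by contradiction, assuming a $k$-bidismantling order $v_1,\dots,v_m$ exists and tracking the elimination process on the central gadget. The condition to be defeated is that, for the eliminating pair $(x',y')$ with $x'\sim y'$, one has $N_k(v_i,G\setminus\{x',y'\})\cap X_i\subseteq N_1(y')$. Mimicking the proof for the graph of Fig.~\ref{fig-contre-ex1}, I would first use symmetry to show that the order is forced to strip the peripheral (arm and triangle) vertices first, and then show that when the surviving set $X_i$ still contains the central obstruction (the vertex $x$ together with the two remaining far triangles in symmetric position), \emph{no} remaining vertex admits a valid eliminating pair. The crux is that deleting only the two vertices $\{x',y'\}$ cannot simultaneously sever the two internally disjoint length-$\le k$ routes that the symmetric construction provides from the candidate vertex to a far, non-dominated vertex: blocking one arm with a single vertex still leaves the detour through the opposite arm, producing a vertex in the punctured ball that is not adjacent to $y'$. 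Hence the dismantling cannot proceed, contradicting the assumed order.

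The hard part will be this second assertion, specifically the case analysis computing $N_k(v_i,G\setminus\{x',y'\})\cap X_i$ for each representative surviving candidate $v_i$ and each admissible neighbouring pair $(x',y')$, and checking that in every case a length-$\le k$ detour to a non-adjacent vertex survives. The arm length $k-2$ is tuned precisely for this: a vertex one step beyond the far triangle remains reachable within $k$ steps once a single blocking vertex is bypassed through the other arm, which is exactly what the two-vertex domination of bidismantling fails to prevent. I expect the symmetry to collapse the analysis to a few essentially different configurations, but getting the distances right along the $(k-2)$-paths, and confirming that the stuck configuration is genuinely reached rather than avoidable by a clever early ordering, is where the real work lies.
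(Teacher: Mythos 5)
Your proof of $G\in\CWW(k)$ has a genuine gap, and it is not a repairable detail: the single-phase capture you propose is refuted by the second half of the very proposition you are proving. In this graph the only vertices not adjacent to the cop's starting position $x$ are $b_1,b_2,b'_1,b'_2$, so the robber's first visible position is one of these, and these vertices lie in the end gadgets, not on the arms. From $b_1$ there are two internally disjoint routes of length at most $k$ to $c_1$, namely $(b_1,a_1,u_{1,k-2},\ldots,u_{1,1},c_1)$ and $(b_1,y_2,x,c_1)$, and $c_1$ is adjacent to neither $y_1$ nor $y_2$; hence for \emph{no} adjacent pair $\{x',y'\}$ does $N_k(b_1,G\setminus\{x',y'\})\subseteq N_1(y')$ hold --- this is exactly the content of the non-bidismantlability argument. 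Consequently a cop who, upon seeing the robber at $b_1$, walks to an edge and oscillates on it for the rest of the phase cannot force the robber to reappear in his closed neighbourhood: the robber takes whichever of the two disjoint routes the oscillation edge fails to block. A winning strategy genuinely needs several herding phases. In the paper's proof the cop's first phase is $x\to a_1\to x\to\cdots\to x\to y_1$ (not an oscillation on a single edge), which only forces the robber out of the gadget $\{a_1,b_1,a_2,b_2,y_1,y_2\}$ and into the arm vertices $u_{1,i},u_{2,i}$; the capturing oscillation happens in the \emph{next} phase, on the pair $\{a_1,x\}$ rather than $\{a_1,y_1\}$, exploiting the fact that $x$ dominates everything an arm vertex can reach in $k$ steps in $G\setminus\{a_1,x\}$. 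Your pair $\{a_1,y_1\}$ would fail even there, since from $u_{1,i}$ the robber can reach $c_1$ and the primed top arm, none of which is adjacent to $y_1$.

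For the non-bidismantlability, your central mechanism --- a single pair $\{x',y'\}$ cannot sever two internally disjoint length-$\le k$ routes to a vertex outside $N_1(y')$, the arm length $k-2$ being tuned for this --- is exactly the paper's, and you correctly flag the case analysis as the real work. But your plan to show the order is ``forced to strip the peripheral (arm and triangle) vertices first'' is off: the gadget vertices $a_i,b_i,y_i$ and their primed copies can \emph{never} be eliminated; only the arm vertices $u_{\cdot,i}$ can (each with the pair $(a_1,x)$ and its symmetric analogues). The clean statement, which also disposes of your worry about a ``clever early ordering'', is that for \emph{every} subset $Y$ of already-removed arm vertices, every surviving vertex $v$ (by the fourfold symmetry one may take $v\in\{a_1,b_1,c_1,x,y_1\}$) and every adjacent pair $x(v),y(v)$, some $z\in N_k(v,G\setminus\{x(v),y(v)\})\cap X$ lies outside $N_1(y(v))$. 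This quantification over all $Y$ is what your sketch still owes, together with the five-case verification itself.
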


\begin{proof}
To show that $G \in \CWW(k)$, we exhibit a strategy for $\cC$. The cop
starts at the vertex $x$. To avoid being captured immediately, the
robber starts in $b_1$, $b_2$, $b_1'$, or $b_2',$ say in $b_1$ (the other cases are
similar). Then the cop moves to $a_1$, goes to $x$ and stays there
during $k-2$ steps, and finally goes to $y_1$. Once $\cC$ is in $a_1$,
$\cR$ can go to $b_1,b_2,y_1$ or $y_2$. Then, while $\cC$ is in $x$,
$\cR$ can move to a vertex in $\{a_1,a_2,b_1,b_2,y_1,y_2\} \cup
\{u_{1,i},u_{2,i}: 2\le i\le k-2\}$. Finally, when the cop moves to
$y_1$, $\cR$ can go to a vertex in $\{a_1,a_2,b_1,b_2,y_2,x\} \cup
\{u_{1,i},u_{2,i}: i\le k-2\}$. Thus, if the robber is not
catched the second time he shows up, he is in a vertex of
$\{u_{1,i},u_{2,i}:  i\le k-2\}$. If the robber shows up in a
vertex of $\{u_{1,i}:  i\le k-2\}$ while the cop is in $y_1$
(the other case is similar), the cop oscillates between $a_1$ and $x$
during $k$ steps and finishes in $x$ (if $k$ is odd, $\cC$ first moves
to $x$ and if $k$ is even, $\cC$ first moves to $a_1$). Thus, the next
time $\cR$ shows up, he is in a vertex of
$\{c_1,a_1,a'_1\}\cup\{u_{1,i},u_{1,i}':  i\le k-2\}$: in any case,
$\cC$ (positioned at $x$) catches $\cR$ when he shows up.

Now we show that $G$ is not $k$-bidismantlable.  We can eliminate a
vertex $v$ if there exist two neighbors $x(v),y(v)$ such that
$N_k(v,G\setminus \{x,y\}) \cap X \subseteq N_1(y),$ where $X$ is the
set of vertices that have not been yet eliminated.  First note that
for any $ i\le k-2$, we can eliminate $u_{1,i}$ with $x(u_{1,i}) =
a_{1,i}$ and $y(u_{1,i}) = x$. By symmetry, we can also eliminate
$u_{2,i}$, $u'_{1,i}$ and $u'_{2,i}$ for any $ i\le k-2$ (these
vertices are colored white in Fig.~\ref{fig-contre-ex-k}). We show
that no other vertex can be eliminated after a set $Y \subseteq
\{u_{1,i}, u_{1,i}', u_{2,i}, u_{2,i}': i\le k-2\}$ of vertices has
been eliminated ($Y$ can be empty or contain all these vertices).

Let $X = V(G) \setminus\{u_{1,i},u_{2,i},u_{1,i}',u_{2,i}':  i\le k-2\}$. By symmetry,
it is sufficient to show that for any $v \in
\{a_1,b_1,c_1,x,y_1\}$, and any adjacent vertices $x(v), y(v) \in
V(G)$, there exists $z(v) \in (N_k(v,G\setminus \{x(v),y(v)\}) \cap X)
\setminus N_1(y(v))$. For any $v$ and $y(v)$, this condition is true as
soon as $N_1(v) \cap X \not\subseteq N_1(y(v))$.

If $v = x$, then there does not exists any $y(v)$ such that $N_1(v)
\cap X \subseteq N_1(y(v))$. If $v = a_1$ and $N_1(v) \cap X \subseteq N_1(y(v)),$
then $y(v) \in\{y_1,y_2\}$; in both cases, $x(v) \notin \{u_{1,i}:  i\le k-2\}$
and thus $z(v) = c_1$ satisfies the condition. If $v = b_1$ and $N_1(v) \cap X \subseteq N_1(y(v))$ then $y(v) \in
\{y_1,y_2\}$; by symmetry, we assume $y(b_1) = y_1$. Note that there
exist two vertex-disjoint paths $(b_1,a_1,u_{1,k-2},\ldots,u_{1,1},c_1)$ and
$(b_1,y_2,x,c_1)$ of length at most $k$ from $b_1$ to
$c_1$ avoiding $y_1.$ Consequently, for any choice of $x(b_1)$, the vertex $z(v) =
c_1$ satisfies the condition. If $v = y_1$ and $N_1(v) \cap X \subseteq N_1(y(v)),$
then $y(v)=y_2$. Again, there exist two vertex-disjoint paths $(y_1,a_1,u_{1,k-2},\ldots,u_{1,1},c_1)$ and
$(y_1,x,c_1)$ of length at most
$k$ from $y_1$ to $c_1$ avoiding $y_2.$ Hence, for any choice of $x(y_1)$, the vertex $z(v) = c_1$
satisfies the condition. Finally, suppose that $v = c_1$ and $N_1(v) \cap X \subseteq N_1(y(v)).$ Then $y(v) \in
\{u_{1,1},u_{1,1}',x\}$. If $y(v) = u_{1,1}$ (the case $y(v) =
u_{1,1}'$ is similar), then $x(v) \notin \{a'_1\}\cup\{u_{1,i}':  i\le k-2\}$ and thus $z(v) = b_1$ satisfies the condition. If $y(v) =
x$, then either $x(v) \notin \{a'_1\}\cup\{u_{1,i}':  i\le k-2\}$,
or $x(v) \notin \{a_1\}\cup\{u_{1,i}:  i\le k-2\}$. By symmetry, we
assume $x(v) \notin \{a'_1\}\cup\{u_{1,i}':  i\le k-2\}$; in this
case, $z(v) = b_1$ satisfies the condition.
\end{proof}

\medskip\noindent
{\bf Open question 3:} Characterize the $k$-winnable graphs for
$k=2,3$ and, more generally, for all $k$.

\subsection{Cop-win graphs for game with fast robber and witness.}

We now consider a variant of the game where the robber is visible
every $k$ moves and has speed $s$ while the cop has speed $1$. It
means that at each step, the robber can move to a vertex at distance
at most $s$ from his current position, and that the cop can see the
robber only every $k$ steps.  We denote by $\CWFRW(s,k)$ the class of
graphs where a single cop with speed $1$ can catch a robber with speed
$s$ that is visible every $k$ moves. By definition, we have
$\CWFRW(1,k) = \CWW(k)$ and $\CWFRW(s,1) = \CWFR(s)$.

\begin{theorem}\label{th:CWFRW}
If $s \geq 3, k \geq 1$ or $s \geq 2, k \geq 2$, then $\CWFRW(s,k)$ is
the class of big brother graphs.
\end{theorem}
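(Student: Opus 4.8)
The plan is to prove the two inclusions separately, reducing in each direction to cases already settled in the paper. For the inclusion of big brother graphs in $\CWFRW(s,k)$, I would adapt the blocking strategy of Theorems~\ref{th-speed-3} and~\ref{theor:CWW}: fix a bb-decomposition $B_1,\ldots,B_r$ with tree $T$, and maintain the invariant that at the start of each phase the cop sits at the big brother $a$ of some block $B_i$ while the robber is confined to the union of blocks in the subtree of $T$ rooted at $B_i$. The crucial point is that $a$ is a single articulation point, so a robber of arbitrary speed $s$ cannot leave this subtree: every path out of it meets $a$, which is occupied, and the robber's path may not contain a cop-occupied vertex. Hence during the whole phase the cop may simply remain at $a$ (invisibility is irrelevant, since he only has to block), and when the robber reappears the cop advances to the articulation point one block closer, strictly increasing the depth in $T$. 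After at most $\operatorname{diam}(T)$ phases the robber is confined to a single block $B_i$ dominated by $a$, and since $a$ is adjacent to all of $B_i$ the cop captures at the next visible move. This argument is uniform in $s$ and $k$, so big brother graphs lie in $\CWFRW(s,k)$ for all $s\ge 1,\ k\ge1$.

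For the converse I would split along the two regimes, using only the ``weaker robber'' monotonicity $\CWFRW(s,k)\subseteq\CWFRW(s',k)$ for $s'\le s$ (a cop winning against a speed-$s$ robber wins against any slower one, whose trajectories form a subset); I would deliberately avoid monotonicity in $k$, which is Open question~2. When $k=1$ and $s\ge 3$ there is nothing to prove: $\CWFRW(s,1)=\CWFR(s)=\CW(s)$, which equals the class of big brother graphs by Theorem~\ref{th-speed-3}. When $k\ge 2$, the $s$-monotonicity reduces the claim to showing that every graph in $\CWFRW(2,k)$ is a big brother graph, for each $k\ge 2$, since $\CWFRW(s,k)\subseteq\CWFRW(2,k)$ for all $s\ge2$. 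I would establish this by the same inductive scheme used for Theorem~\ref{th-speed-3}: first prove that $\CWFRW(2,k)$ is closed under retracts (the retraction $\varphi$ is nonexpansive, so a cop strategy on $G$ projects to one on the retract, exactly as in Proposition~\ref{prop-retract} and~\cite{Clarke}), then induct on $|V|$.

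The non-two-connected case mirrors Case~1 of Theorem~\ref{th-speed-3}: each block together with its articulation point is a retract and hence a big brother graph by induction; if some leaf block is dominated by its separating articulation point I peel it off and close the induction, and otherwise I must exhibit an explicit escape for the robber bouncing between two ``bad'' leaf blocks $B_1,B_2$. Here the robber exploits speed $2$ together with the hidden move(s) of each phase: when the cop approaches $c_i\in B_i$, the robber spends the phase traversing a length-$3$ detour $(c_i,g_i,a_i,h_i)$ to exit $B_i$ unseen, then races along a shortest path to $B_{3-i}$ while staying at distance $\ge 2$ from the slower cop. The delicate and main obstacle is Case~2, the two-connected case, where I must show that a two-connected $G\in\CWFRW(2,k)$ has a dominating vertex. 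Lacking a dismantling characterization for this game, I cannot reuse the $(3,1)$-dismantling order as in Theorem~\ref{th-speed-3}, so I would argue the contrapositive directly: in a two-connected graph without a dominating vertex the speed-$2$ robber can keep $d(\cC,\cR)\ge 2$ forever. The invisibility is genuinely essential here, since speed $2$ alone does not suffice (such graphs may still be dually chordal and hence in $\CW(2)$); using the one hidden move per phase and the two internally disjoint paths guaranteed by two-connectedness, the robber reroutes around the cop's committed trajectory to a vertex non-adjacent to the cop's final position, which exists precisely because no vertex dominates $G$. Making this escape rigorous and uniform over all $k\ge 2$ is where the real work lies.
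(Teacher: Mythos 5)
Your first inclusion (big brother graphs lie in $\CWFRW(s,k)$) is correct, though it can be had for free: since $\CWFR(sk)\subseteq\CWFRW(s,k)\subseteq\CWFR(s)$ and, for $s\ge 3$, both ends equal the class of big brother graphs by Theorem~\ref{th-speed-3}, the whole theorem for $s\ge3$ and any $k$ is a two-line sandwich, and the same left inclusion $\CWFR(2k)\subseteq\CWFRW(2,k)$ with $2k\ge4$ already gives the easy direction for $s=2$. Your $s$-monotonicity reduction of the converse to the single case $s=2$, $k\ge2$ is sound, and you are right to avoid $k$-monotonicity, which is open. So everything hinges on showing $\CWFRW(2,k)\subseteq$ big brother graphs.

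That is exactly where your proposal has a genuine gap: you defer the two-connected case to an unconstructed escape strategy (``in a two-connected graph without a dominating vertex the speed-$2$ robber can keep distance $\ge 2$ forever'') and acknowledge that making it rigorous ``is where the real work lies.'' That work is not routine: the cop's committed $k$-move trajectory can occupy up to $k+1$ vertices and can meet both of the internally disjoint paths you invoke at different times, and the vertex non-adjacent to the cop's final position may be far from the robber; nothing in your sketch controls this. The paper does not attempt such an escape. Instead it proves a quantitative key lemma (Proposition~\ref{prop:CWFRW-y}): any $G\in\CWFRW(2,k)$ of minimum degree at least $2$ contains vertices $v,y$ with $N_{2k}(v,G\setminus\{y\})\subseteq N_1(y)$, extracted from the last phase of a maximal robber trajectory against a parsimonious winning strategy via a case analysis on whether the cop ever leaves $y$ during that phase, together with a cycle-based claim. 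This yields a $(2k,1)$-dismantling order by the usual one-bit-memory peeling argument, and Theorem~\ref{th-speed-3} (equivalence of $(2k,1)$-dismantlability with being a big brother graph, valid since $2k\ge 3$) finishes the proof with no two-connectivity case split at all. I would replace your Case~2 by a lemma of this punctured-ball-domination type; note also that your Case~1 escape (bouncing between two bad leaf blocks) does not transfer verbatim, since the length-$3$ detour in Theorem~\ref{th-speed-3} is executed in a single move at speed $3$, which a speed-$2$ robber cannot do.
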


\begin{proof}
We know from Theorem~\ref{th-speed-3} that if $s \geq 3$ and $k \geq
1$, $\CWFR(sk) = \CWFR(s)$ is the class of big brother
graphs. Consequently, since $\CWFR(sk) \subseteq \CWFRW(s,k) \subseteq
\CWFR(s)$, it follows that $\CWFRW(s,k)$ is the class of big brother graphs for all $s
\geq 3$ and $k \geq 1$.

In the remaining of this proof, we show that when $s = 2$ and $k \geq
2$, $\CWFRW(2,k)$ also coincides with the class of big brother graphs.
This proof follows closely the proof of Theorem~\ref{theor:CWW}. In
particular, the following proposition is the counterpart of
Proposition~\ref{prop:particularEdge}.

\begin{proposition}\label{prop:CWFRW-y}
Let $G \in \CWFRW(2,k)$ for $k \geq 2$. If the minimum degree of $G$
is at least $2$, then $G$ contains two vertices $v$ and $y$ such that
$N_{2k}(v,G\setminus\{y\}) \subseteq N_1(y)$.
\end{proposition}

\begin{proof}
If $G$ contains a dominating vertex $y$, then the result holds for
any $v \neq y$. Assume thus that $G$ has no dominating vertices.
Consider a parsimonious winning strategy of the cop and suppose that
the robber uses a strategy to avoid being captured as long as
possible.  Since $G$ does not contain leaves, the robber is caught
immediately after having been visible. Since $G$ does not have any
dominating vertex, the robber is visible at least twice. Let $y$ be
the vertex occupied by the cop when the robber becomes visible for
the last time before his capture. Let $v$ be the next-to-last
visible vertex occupied by the robber.  Finally, let
$S_c=(c_0,c_1,\ldots,c_{k}=y)$ be the trajectory of the cop during
the last $k$ steps (repetitions are allowed). Note that $v \notin
N_1(c_0),$ otherwise the robber would have been caught immediately.
We distinguish two cases depending of whether or not $c_i = y$ for
all $1 \leq i \leq k$.

If for any $1 \leq i \leq k$, we have $c_i = y$, then $\cR$ could have
move to any vertex $w \in N_{2k}(v,G\setminus\{y\})$. Since the
trajectory of the robber is maximal, the robber is caught in any such
$w$ and thus $N_{2k}(v,G\setminus\{y\}) \subseteq N_1(y)$. Suppose now
that there exists $i$ such that $c_i \neq y$. Let $i$ be
the largest index such that $c_i \neq y$.

\vspace{-0.5ex}
\begin{claim}\label{claim-cycle-CWFRW}
If $G$ contains a cycle $C$ and a vertex $w\in C$ such that $d(v,w)< d(c_1,w),$
then $G\setminus\{y\}$ has a connected component that is dominated by $y$.
\end{claim}
\vspace{-0.5ex}

\begin{proof}
Let $w$ be the closest to $v$ vertex satisfying the condition of the
claim. If the assertion of the claim is not satisfied, we will exhibit
a strategy allowing the robber to escape the cop during more steps,
contradicting the choice of the trajectory $\cR$.  Suppose that at the
beginning of the last phase, the robber moves from $v$ to $w$ along a
shortest $(v,w)$-path.  Since $d(v,w)<d(c_1,w)$, the robber cannot be
intercepted by the cop during these moves.  Suppose that the robber
reaches the vertex $w$ before the $i$th step when the cop arrives at
$c_i.$ Then by Lemma \ref{cycle} (adapted to this game) the robber can
safely move on $C$ until the cop reaches the vertex $c_i$.

In both cases, let $z$ be the current position of the robber when the
cop reaches $c_i$.  Then $z \in N_1(y),$ otherwise the robber can remain
at $z$ without being caught because starting with this
step the cop remains in $y$. If $z \neq y$, then let $u = z$, and if
$z = y$, let $u$ be a neighbor of $y$ different from $c_i$ (it exists
because the minimum degree of a vertex of $G$ is at least $2$). In
both cases, let $H$ be the connected component of $G\setminus\{y\}$
that contains $u$.  We assert that $y$ dominates all the vertices of
$H$. Suppose this is not the case and consider a vertex $t$ in
$V(H)\setminus N(y)$ that is at a minimum distance from $u$ in
$H$. From our choice of $t$, we can find a common neighbor $r \in
V(H)$ of $y$ and $t$.  If $r \neq c_i$, then while the cop is in
$c_i$, the robber can go from $z$ ($z$ is either $u$ or $y$) to $r$
through $y$ and then, when $\cC$ goes to $y$, $\cR$ goes to $t$ and
stays there until he becomes visible. If $r = c_i$, then let $s$ be a
neighbor of $r$ on a shortest $(u,r)$-path in $G\setminus\{y\}$. By
our choice of $t$, necessarily $s \in N(y)$. Thus, when the cop is in $c_i$, the
robber can go from $z$ to $s$ through $y$. And then, when the cop goes
to $y$, $\cR$ goes to $t$ through $r$ and stays there until he becomes
visible. In both cases, by following such a strategy, $\cR$ could avoid
being caught. This contradicts the maximality of the trajectory of the
robber. This concludes the proof of the claim.
\end{proof}

We now complete the proof of Proposition~\ref{prop:CWFRW-y}. If the
vertex $v$ belongs to a cycle $C$, then setting $w:=v$ and applying
Claim~\ref{claim-cycle-CWFRW}, we conclude that $y$ dominates a
non-empty connected component $H$ of $G\setminus\{y\}$ establishing
thus the assertion.  So, suppose that $v$ is an articulation point of
$G$ not contained in a cycle. Since the minimum degree of $G$ is at least $2$, $G\setminus \{
v\}$ has a connected component $H$ that does not contain $c_0$ (nor
$c_1$). Necessarily $H$ contains a cycle $C$, otherwise we will find
in $H$ a vertex of degree $1$ in $G$. Since any path from $c_1$ to a
vertex $w$ of $C$ goes through $v$, we obtain $d(v,w)<d(c_1,w)$.
Then, the result again follows from
Claim~\ref{claim-cycle-CWFRW}. This ends the proof of
Proposition~\ref{prop:CWFRW-y}.
\end{proof}

Finally, we prove Theorem~\ref{th:CWFRW} when $s = 2$ and $k \geq2$.
Consider a graph $G \in \CWFRW(2,k)$. To establish that $G$ is a big brother graph, in view of
Theorem~\ref{th-speed-3}, it suffices to show that $G$ is
$(2k,1)$-dismantlable. For this, by Proposition~\ref{prop-sans-X}, we
just have to show that there exists an ordering $v_1, \ldots, v_n$ of
the vertices of $G$ such that for each $1 \leq i < n$ there exists $y
\in \{v_{i+1}, \ldots, v_n \}$ such that
$N_{2k}(v_i,G_i\setminus\{y\}) \subseteq N_1(y,G_i)$.  We proceed by
induction on the size of $G$. Suppose that $G$ has at least two
vertices, otherwise the result is trivial.  If $G$ has a vertex $v$ of
degree $1$, then let $y$ be the unique neighbor of $v$ in $G$. In this
case, then obviously $N_{2k}(v,G\setminus\{y\}) \subseteq
N_1(y,G)$. Otherwise, by Proposition~\ref{prop:CWFRW-y}, we can find
vertices $v$ and $y$ such that $N_{2k}(v,G\setminus\{y\}) \subseteq
N_1(y,G)$.

We now show that $G' = G\setminus \{v_1\}$ also belongs to
$\CWFRW(2,k)$. Consider a winning positional strategy $\sigma$ for the
cop in $G$. As in the proof of Theorem~\ref{copwin}, we define a
strategy $\sigma'$ for the cop in $G'$ using one bit of
memory. Starting from $\sigma$, we define $\sigma'(c,r,m)$ for any
positions $c, r \in V(G')$ of the cop and the robber and for any value
of the memory $m \in \{0,1\}$.  The idea is that the cop plays using
$\sigma$ except when he is in $y$ and his memory contains $1$; in this
case he uses $\sigma$ as if he was in $v$ (going via $y$ instead
of $v$ if $v$ appears in his sequence of moves).

If $m = 0$ or $c \neq y$, let $\sigma(c,r) = (c_1,\ldots, c_k)$ and
for each $i$, let $c'_i = c_i$ if $c_i\neq v$ and $c'_i = y$ otherwise
(this is possible since $N_1(v)\subseteq N_1(y)$). If $c_k = v$, then
$\sigma'(c,r,m) = ((c'_1, \ldots, c'_{k-1},y),1)$, otherwise let
$\sigma'(c,r,m) = ((c'_1, \ldots, c'_{k-1},c_k),0)$.

If $m = 1$ and $c = y$, let $\sigma(v,r) = (c_1,\ldots, c_k)$ and for
each $i$, let $c'_i = c_i$ if $c_i\neq v$ and $c'_i = y$ otherwise. If
$c_k = v$, then $\sigma'(y,r,1) = ((c'_1, \ldots, c'_{k-1},y),1)$,
otherwise let $\sigma'(y,r,1) = ((c'_1, \ldots, c'_{k-1},c_k),0)$.

Let $S_r = (r_1, r_2, \ldots, r_p, \ldots)$ be a valid sequence of
moves in $G'$. Since $V(G')\subseteq V(G)$, $S_r$ is also a valid
sequence of moves in $G$.  Let $S_c =(c_1, \ldots, c_p, \ldots)$ be
the corresponding valid sequence of moves of the cop playing
$\sigma$ against $S_r$ in $G$ and let $S'_c=(c'_1, \ldots, c'_p,
\ldots)$ be the valid sequence of moves of the cop playing $\sigma'$
against $S_r$ in $G'$. Note that the sequences of moves $S_c$ and
$S_c'$ differ only if $c_k = v$ and $c_k' = y$.  Finally, since the
cop follows a winning strategy for $G$, there exists a step $j$ such
that $c_j=r_j \in V(G')$ (note that $r_j\neq v$ because we supposed
that $S_r$ is a valid sequence of moves in $G'$).  Since $c_j \neq
v$, we also have $c'_j=r_j$, thus $\cC$ captures $\cR$ in the game
restricted to $G'$. In conclusion, starting from a positional
strategy for the game in $G$, we have constructed a winning strategy
using memory for the game in $G'$. As mentioned in the introduction,
it implies that there exists a positional winning strategy for the
game in $G'$. Consequently, $G' \in \CWFRW(2,k)$ and by induction
hypothesis, $G'$ has $(2k,1)$-dismantling order $(v_2,\ldots, v_n)$,
whence $(v,v_2,\ldots,v_n)$ is a $(2k,1)$-dismantling order of $G$.
This concludes the proof of Theorem~\ref{th:CWFRW}.
\end{proof}

\section{Bipartite cop-win graphs for game with ``radius of capture''}

In this section we characterize bipartite graphs of the class
${\mathcal C}{\mathcal W}{\mathcal R}{\mathcal C}(1)$, i.e., the
bipartite cop-win graphs in the cop and robber game with radius of
capture $1$. Recall that in this game introduced in~\cite{BonChi},
$\mathcal C$ and $\mathcal R$ move at unit speed and the cop wins if
after his move he is within distance at most $1$ from the
robber. Notice that any graph of diameter 2 belongs to ${\mathcal
  C}{\mathcal W}{\mathcal R}{\mathcal C}(1):$ given the positions $u$
and $v$ of the cop and the robber, to capture the robber the cop
simply moves from $u$ to a common neighbor of $u$ and $v$.

Following \cite{BaFaHe}, a bipartite graph $G$ is called {\it
  dismantlable} if its vertices can be ordered $v_1,\ldots,v_n$ so
that $v_{n-1}v_n$ is an edge of $G$ and for each $v_i, i < n-1$, there
exists a vertex $y:=v_j$ with $j>i$ (necessarily not adjacent to
$v_i$) such that $N(v_i,G_i):=N_1(v_i,G_i)\setminus \{ v_i\}\subseteq
N_1(y)$. Note that for any $i$, $G_i$ is a retract of $G$ and
therefore an isometric subgraph of $G$.

\begin{theorem} \label{bipartite} A bipartite graph $G$ belongs to
  ${\mathcal C}{\mathcal W}{\mathcal R}{\mathcal C}(1)$ if and only if
  $G$ is dismantlable.
\end{theorem}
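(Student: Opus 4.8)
The plan is to follow the template of Theorem~\ref{copwin}, proving the two implications separately, and to exploit bipartiteness to pin down the geometry of a fold. First I would record the bipartite reading of the dismantling condition: if $N(v_i,G_i)\subseteq N_1(y)$ with $y=v_j$, then since $v_i$ and $y$ lie on the same side of the bipartition while $N(v_i)$ lies on the other side, one has $y\notin N(v_i)$ and in fact $N(v_i,G_i)\subseteq N(y,G_i)$ with $d(v_i,y)=2$. Thus a fold replaces $v_i$ by a vertex $y$ at distance exactly $2$ whose neighbourhood contains that of $v_i$. Throughout I would use the fact recalled before the statement, that each $G_i$ is an isometric subgraph of $G$, so distances may be computed indifferently in $G$ or in $G_i$.

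For the direction that a dismantlable $G$ lies in $\CWRC(1)$, I would induct on $n$, the cases $n\le 2$ being immediate since then $G$ has diameter $\le 1$ (the last two vertices form an edge). For the step, let $v_1$ be foldable onto $y$, set $G'=G\setminus\{v_1\}$, which is dismantlable and hence in $\CWRC(1)$ by induction with a parsimonious positional winning strategy $\sigma'$, and let $\varphi$ be the retraction sending $v_1\mapsto y$ and fixing all else. The cop simulates $\sigma'$ against the \emph{phantom} robber located at $\varphi(r)$; since $\varphi$ is nonexpansive the phantom trajectory is legal in $G'$, so $\sigma'$ brings the cop within distance $1$ of the phantom. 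If the real robber is not at $v_1$, phantom and robber coincide and the cop has won. The delicate case is the phantom being caught while the real robber sits at $v_1$: then the cop is only within distance $1$ of $y$, hence possibly at distance $3$ from $v_1$. Here I would prove a short \emph{finishing} lemma using $N(v_1)\subseteq N(y)$: after at most one move the cop reaches $y$, and then the robber at $v_1$ is trapped, because every vertex it can move to lies in $N(v_1)\subseteq N(y)$ and is therefore within distance $1$ of the cop at $y$, while if it stays the cop steps onto a neighbour of $v_1$. So the cop wins within a bounded number of extra rounds.

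For the converse I would mimic the second half of the proof of Theorem~\ref{copwin}, working with the $X$-restricted game. Fix a positional parsimonious winning strategy and a robber trajectory of maximal length before capture, and let $u$ (cop) and $v$ (robber) be the positions at the last configuration before capture. Maximality forces that for every $w\in N_1(v)$ the cop captures in one move, whence $d(u,w)\le 2$; taking $w=v$ gives $d(u,v)\le 2$, and since the robber is not yet caught, $d(u,v)=2$. The bipartite \emph{upgrade} is the crux: for $w\in N(v)$ the distance $d(u,w)$ is odd (as $d(u,v)=2$), so $d(u,w)\le 2$ forces $d(u,w)=1$; hence $N(v)\subseteq N(u)$ and $v$ is foldable onto $y:=u$. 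It then remains to show $X\setminus\{v\}$ is again winnable, which I would obtain exactly as in Theorem~\ref{copwin} by converting $\sigma$ into a one-bit-memory strategy that plays as if the cop were at $v$ whenever it is at $u$ with the bit set; the moves stay legal because any $\sigma$-move other than ``stay'' lands in $N(v)\subseteq N(u)$, and ``stay'' is absorbed by keeping the cop at $u$. Induction on $|X|$ then yields a dismantling order, and the base case with two admissible vertices shows they must be adjacent (two vertices at distance $\ge 2$ are not winnable, since the speed-$1$ cop can never close to within radius $1$), which is exactly the requirement $v_{n-1}v_n\in E$.

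The main obstacle is the finishing lemma in the first direction: unlike the zero-radius fold of Theorem~\ref{copwin}, catching the phantom at $y$ does \emph{not} immediately capture the real robber at $v_1$, and it is precisely here that bipartiteness (through $d(v_1,y)=2$ together with $N(v_1)\subseteq N(y)$) is needed to confine and finish off the robber. A secondary, routine but lengthy, point is checking that the one-bit-memory simulation in the converse never produces an illegal cop move and that the genuine and simulated cop trajectories differ only at the steps where the genuine cop would occupy $v$.
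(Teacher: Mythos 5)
Your proposal is correct and follows essentially the same route as the paper's proof: the parity argument forcing $d(u,v)=2$ and $N(v)\subseteq N(u)$, the one-bit-memory simulation to show the reduced vertex set remains winnable, and the phantom-robber simulation with the endgame trap at $y$ (robber's neighbours all lie in $N(y)$, so it must freeze at $v$ and the cop finishes by stepping to a common neighbour) are exactly the paper's arguments, merely phrased via a retraction and the $X$-restricted game instead of induced subgraphs.
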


\begin{proof}
First suppose that $G\in {\mathcal C}{\mathcal W}{\mathcal R}{\mathcal
  C}(1)$. If $G$ has diameter $2$, then necessarily $G$ is a complete
bipartite graph, which is obviously dismantlable. Suppose now that $G$
has diameter at least $3$. As in previous proofs of similar results,
we assume that $\cC$ uses a parsimonious strategy.
Consider a maximal sequence of moves of the robber before he
get caught.  Let $v$ be the next-to-last position of the robber and
let $y$ be the position of the cop at this step ($y$ is not adjacent
to $v$, otherwise $\mathcal C$ captures $\mathcal R$ in $v$). This
means that for any $w \in N_1(v)$, the cop can move in some vertex
$u \in N_1(y)$ such that $w \in N_1(u)$. This shows that
$N_1(v)\subseteq N_2(y)$. Since $G$ is bipartite, this means that
$d(v,y)=2$ and all neighbors of $v$ are adjacent to $y$, i.e.,
$N(v) \subseteq N(y)$.

We now show that $G' = G\setminus \{v\}$ also belongs to
$\CWRC(1)$. Consider a winning positional strategy $\sigma$ for the
cop in $G$. As in the proof of Theorem~\ref{copwin}, we define a
strategy $\sigma'$ for the cop in $G'$ using one bit of
memory. Starting from $\sigma$, we define $\sigma'(c,r,m)$ for any
positions $c, r \in V(G')$ of the cop and the robber and for any value
of the memory $m \in \{0,1\}$.  The idea is that the cop plays using
$\sigma$ except when he is in $y$ and his memory contains $1$; in this
case he uses $\sigma$ as if he was in $v$ (going to $y$ instead of
$v$).  If $m = 0$ or $c \neq y$, if $\sigma(c,r) = v$ then
$\sigma'(c,r,m) := (y,1)$ (this is a valid move since $N(v) \subseteq
N(y)$ and $c \neq v$), otherwise let $\sigma'(c,r,m) :=
(\sigma(c,r),0)$.  If $m = 1$ and $c = y$, if $\sigma(v,r) = v$, then
$\sigma'(y,r,1) := (y,1)$, otherwise let $\sigma'(y,r,1) :=
(\sigma(v,r),0)$.

Let $S_r = (r_1, r_2, \ldots, r_p, \ldots)$ be a valid sequence of
moves in $G'$. Since $V(G')\subseteq V(G)$, $S_r$ is also a valid
sequence of moves in $G$.  Let $S_c =(c_1, \ldots, c_p, \ldots)$ be
the corresponding valid sequence of moves of the cop playing $\sigma$
against $S_r$ in $G$ and let $S'_c=(c'_1, \ldots, c'_p, \ldots)$ be
the valid sequence of moves of the cop playing $\sigma'$ against $S_r$
in $G'$. Note that the sequences of moves $S_c$ and $S_c'$ differ only
if $c_k = v$ and $c_k' = y$.  Finally, since the cop follows a winning
strategy for $G$, there exists a step $j$ such that $c_{j+1} \in N_1(r_j)
\subseteq V(G')$ (note that $r_j\neq v$ because we supposed that $S_r$ is a
valid sequence of moves for the game in $G'$).  Since $N(v) \subseteq N(y)$, we also have
$c'_{j+1} \in N_1(r_j)$, thus $\cC$ captures $\cR$ in the game restricted
to $G'$. In conclusion, starting from a positional strategy for the
game in $G$, we have constructed a winning strategy using memory for
the game in $G'$. As mentioned in the introduction, it implies that
there exists a positional winning strategy for the game in $G'$.
Consequently, $G' \in \CWRC(1)$ and by induction hypothesis, $G'$ has
a dismantling order $(v_2,\ldots, v_n)$, whence $(v,v_2,\ldots,v_n)$
is a dismantling order of $G$.

Conversely, suppose that a bipartite graph $G$ is dismantlable and let
$v=v_1,v_2,\ldots, v_n$ be a dismantling order of $G$. If $G$ has
diameter $2$, then $G$ is a complete bipartite graph and thus, $G \in
\CWRC(1)$. Suppose now that $G$ has a diameter at least $3$.  By
induction hypothesis, $G'=G(\{ v_2,\ldots v_n\})$ belongs to
${\mathcal C}{\mathcal W}{\mathcal R}{\mathcal C}(1)$.  Suppose that
$v$ is dominated by a vertex $y$ at distance $2$ from $v$.

Consider a parsimonious positional winning strategy $\sigma'$ for the
cop in $G'$. Using $\sigma'$, we build a parsimonious positional
winning strategy $\sigma$ for the cop in $G$. As in the previous
proofs, the idea is that if $\cC$ sees $\cR$ in $v$, he plays as in
the game on $G'$ when the cop is in $y$. For any positions $c, r \in
V(G)$, if $r \in N_2(c)$, then $\sigma(c,r) := u \in N_1(c)\cap
N_1(r)$. Otherwise, if $c\in N(y)\setminus N(v)$ and $r = v$, then
$\sigma(c,v) := y$.  Otherwise, if $c,r\neq v$, then
$\sigma(c,r) := \sigma'(c,r)$; if $r = v$ and $c \notin N_2(v)$,
then $\sigma(c,v) := \sigma'(c,y)$; finally, if $c = v$ and $r \notin N_2(v)$, then
$\sigma(v,r) := u \in N(v)$ (in fact, if the cop plays according to
$\sigma$, he will never move to $v$). By
construction, $\sigma$ is parsimonious and positional; in particular,
$\sigma(y,v) \in N(v)$.

We now prove that $\sigma$ is a winning strategy. Consider any valid
sequence $S_r=(r_1, \ldots, r_k, \ldots)$ of moves of the robber in
$G$, and let $S'_r=(r_1', \ldots, r_k', \ldots)$ be the sequence
obtained by setting $r'_k=r_k$ if $r_k \neq v$ and $r'_k=y$ if
$r_k=v$. Since $N(v)\subseteq N(y)$, $S'_r$ is a valid sequence of
moves for the robber in $G'$. By induction hypothesis, for any initial
position of $\cC$ in $G'$, the strategy $\sigma'$ enables $\cC$,
following a trajectory $S_c' = (c_1', \ldots, c_k',\ldots)$, to catch
$\cR$ which moves according to $S_r'$, i.e., there exists an index $m$
such that $r'_{m}\in N_1(c_{m+1}')$.  Suppose that $\mathcal C$
chooses his starting position $c_1$ in $G'$ and that the cop,
following $\sigma$, plays in $G$ the sequence $S_c = (c_1, \ldots,
c_k,\ldots)$ against the sequence $S_r$ of the robber. Note that $c_k
= c_{k}'$ for any $k < m$, i.e., except when the robber is caught in
$G'$. If $r_m=r'_m$, then $r_m\in N_1(c_{m+1})$ and $\mathcal C$
captures $\mathcal R$. Now suppose that $r_m\neq r'_m$.  From the
definition of $S'_r$ we conclude that $r'_m=y$ and $r_m=v$. If $v \in
N(c_{m+1})$, the robber is caught at step $m+1$. Otherwise, since
$c_{m+1} \in N(y)\setminus N(v)$, then $c_{m+2} = y$. If at step $r_{m+1}$ the robber
moves to a neighbor of $v$, since $N(v) \subseteq
N(y)$, we conclude that $r_{m+1}\in N(c_{m+2})$ and the robber is
caught. Finally, if the robber remains at $v$ (i.e., $r_{m+1}=v$),
then to avoid being caught while moving, at step $m+2$, $\cR$ must
also stay in $v$ and then at step $m+3$ the cop moves from $y$ to any
common neighbor of $y$ and $v$  and catches the robber.
\end{proof}

In the classical game of cop and robber where both players have
speed $1$ and there is no radius of capture, a bipartite graph $G$
is cop-win if and only if $G$ is a tree. The previous result shows
that in the variant of the game with a radius of capture, a single
cop can win in a considerably  larger class of graphs.

Bonato and Chiniforooshan~\cite{BonChi} asked for a characterization
of graphs of $\CWRC(1)$. Theorem~\ref{bipartite} answers the question
in the case of bipartite graphs. On the other hand, characterizing the graphs of
$\CWRC(1)$ using a specific dismantling scheme seems to be quite
challenging. Two natural candidates for us were the total orders $v_1, \ldots,
v_n$ satisfying the following conditions:
\begin{enumerate}[(i)]
\item for each vertex $v_i$, there exists a vertex $v_j,j > i,$ such that $N_1(v_i,G_i)
  \subseteq N_2(v_j,G_{i+1})$;
\item for each vertex $v_i$, there exists a vertex $v_j,j > i,$ such that $N_2(v_i,G_i)
  \subseteq N_2(v_j,G_{i})$.
\end{enumerate}

It seems that the first condition is necessary, while the second condition is
sufficient. However, we were not able to prove this. In fact, one can
easily show that any graph $G \in \CWRC(1)$ contains two vertices $v,
y$ such that $N_1(v,G_i) \subseteq N_2(y,G_{i+1})$, but we cannot show
that $G\setminus\{v\}$ also belongs to $\CWRC(1)$. A similar
difficulty occurs while establishing the sufficiency of the second
dismantling order.

\section*{Acknowledgements}

\noindent
Work of J. Chalopin, V. Chepoi, and Y. Vax\`es
was supported in part by the ANR grants SHAMAN (ANR VERSO) and OPTICOMB (ANR BLAN06-1-138894).
Work of N. Nisse was supported by the ANR AGAPE and DIMAGREEN, and the European project IST FET AEOLUS.


\end{document}